\DeclareMathOperator*{\Argmax}{argmax}
\author{S. Allassonnière\thanks{CMAP Ecole Polytechnique, Route de
    Saclay, F-91128 Palaiseau, France}  \and  E. Kuhn\thanks{LAGA,
    Université Paris 13, 99,
    Av. Jean-Baptiste Clément, F-93430 Villetaneuse, France} \and
  A. Trouvé\thanks{CMLA, ENS
  Cachan, CNRS, PRES UniverSud, 61 Av. Président Wilson, F-94230
  Cachan, France}}
\title{Construction of Bayesian Deformable Models \\ via Stochastic Approximation
  Algorithm:\\ A
    Convergence Study
}
\newcommand{\R}{\mathbb{R}}
\newcommand{\Z}{\mathbb{Z}}
\newcommand{\N}{\mathbb{N}}
\newcommand{\s}{\mathcal{S}}
\newcommand{\te}{\theta}
\newcommand{\si}{\sigma}
\newcommand{\Te}{\Theta}
\newcommand{\Sa}{\mathcal{S}_a}
\newcommand{\tS}{S}
\newtheorem{notation}{Notation}
\newtheorem{Def}{Definition}
\newtheorem{coro}{Corollary}
\newtheorem{rem}{Remark}
\newcommand{\Kp}{\mathbf{K_p}}
\newcommand{\Kpb}{K_{p}^\beta}
\newcommand{\Kg}{\mathbf{K_g}}
\newcommand{\Syme}{\text{Sym}_{2k_g}^+}
\newcommand{\Symep}{\text{Sym}_{k_p}^+}
\newcommand{\lstat}{\boldsymbol{\pi}}
\newcommand{\ntrans}{\boldsymbol\Pi}
\newcommand{\bfe}{\boldsymbol{\varepsilon}}
\newcommand{\bdbeta}{\boldsymbol\beta}
\newcommand{\dy}{y}
\newcommand{\bdy}{\bold{\dy}}
\newcommand{\z}{z}
\newcommand{\bdz}{\bold{\z}}
\def\pixel{u}
\def\locpixel{v}
\def\Gammap{\Sigma_p}
\def\Gammag{\Sigma_g}
\def\I0{I_\alpha}
\def\pk{\locpixel_{p,j}}
\def\pkp{\locpixel_{p,j'}}
\def\gk{\locpixel_{g,j}}
\def\gkp{\locpixel_{g,j'}}
\def\E{\mathcal{E}}
\def\Kapa{\mathcal{K}}
\newcommand{\qobs}{q_{obs}}
\newcommand{\qmiss}{q_{m}}
\newcommand{\qcond}{q_c}
\newcommand{\qcomp}{q}
\newcommand{\qbayes}{q_B}
\newcommand{\qpara}{q_{para}}
\newcommand{\qj}{q_{j}}
\newcommand{\qpost}{q_{post}}
\newcommand{\bdbetamj}{\boldsymbol{\beta}^{-j}}
\newcommand{\betajb}{\boldsymbol{\beta}_{b\to j}} 
\begin{document}

\maketitle

\begin{abstract}
The problem of the definition and the estimation of generative models
based on deformable templates from raw data is of particular
importance for modeling non-aligned data affected by various types of
geometrical variability. This is especially true in shape modeling
in the computer vision community or in probabilistic atlas building
in Computational Anatomy. A first coherent statistical framework
modeling the geometrical variability as hidden variables
was described by Allassonnière, Amit and Trouvé in \cite{AAT}. The present
paper gives a theoretical proof of convergence of effective stochastic
approximation expectation strategies to estimate such models and shows the
robustness of this approach against noise through numerical experiments in
the context of handwritten digit modeling.

\end{abstract}

\begin{keywords}
 stochastic approximation algorithms, non rigid-deformable templates,
 shapes statistics, Bayesian modeling, MAP estimation.
\end{keywords}

\begin{AMS}
60J22, 62F10, 62F15, 62M40.
\end{AMS}

\pagestyle{myheadings}
\thispagestyle{plain}
\markboth{S. ALLASSONNIERE, E. KUHN AND A. TROUVE}{BAYESIAN DEFORMABLE MODELS BUILDING}


\section{Introduction}

In the field of image analysis, the
statistical analysis and modeling of variable objects from a limited
set of examples is still a quite challenging and a largely unsolved
problem and depends strongly on the use of adequate representations of
data.
One such representation is the so-called dense deformable template
(DDT) framework \cite{ag[89]}.
Observations are defined as deformations, taken from a family of
deformations of moderate ``dimensionality'', of a given exemplar or
template.
Such a representation appears particularly adapted to the emerging
field of Computational Anatomy where one aims at building
statistical models of the anatomical variability
within a given population \cite{grenander-miller}.
However, research  on DDT has been mainly focused on the variational point
of view, in which DDT is used as an efficient vehicle for a wide range
of registration algorithms \cite{chef02:_variat}. The problem
of template estimation, viewed as a statistical estimation problem of
parameters of generative models of images of deformable
objects, has received much less attention.

In this paper, we consider the hierarchical Bayesian framework for
dense deformable templates developed by Allassonnière, Amit and Trouvé in
 \cite{AAT} . Each image in a given population is assumed to be
 generated as a noisy and randomly deformed version of a
 common template drawn from a prior distribution on the set of
 templates. Individual deformations in their framework are treated as
 \emph{ hidden variables}  (or equivalently
 random effects in the mixed effects setting), whereas the
 template and the law of the deformations are parameters (or
 equivalently fixed effects) of interest. Parameter estimation for
 this model could be  performed by Maximum A Posteriori (MAP) for
 which existence and consistency (as the number of parameters observed
 images tends to infinity) has been proved (see \cite{AAT}). This
 contrasts with earlier work in \cite{glasbey01} using a penalized
likelihood (PL) or the more recent maximum description length approach in
\cite{marsland07} for which consistency cannot be proved  because the
deformations are considered as \emph{nuisance parameters} to be
estimated.

Our contribution in this paper is in defining effective and
theoretically proven convergent stochastic algorithms for computing
(local) maxima of the posterior on the parameters for Bayesian
deformable template models.
First, we specify an adapted stochastic
approximation expectation
 minimization algorithm (SAEM algorithm) in this
highly demanding framework where the hidden variables are non rigid
deformation fields living in finite but high dimensional space
(typically
hundreds or more dimensions). In particular, special attention is
needed
to the sampling of the posterior distribution on the
deformations. Obviously,
MCMC samplers are unavoidable, but non adaptive proposal
distributions yielding
simple symmetric random steps are of limited practical interest. The
present  paper introduces a more sophisticated hybrid Gibbs sampling
scheme allowing an acceptable rejection rate during the Estimation step. The
overall algorithm is cast in the larger class of SAEM-MCMC algorithms
introduced in \cite{kuhnlavielle}.
Second,  we extend the
convergence theory of SAEM-MCMC algorithms developed in
\cite{kuhnlavielle} to cover the case of \emph{unbounded} random
effects  arising naturally for deformation fields. The core material
of this extension is based on the general stability and convergence
results for
stochastic algorithms with truncation on random boundaries given in
\cite{andrieumoulinespriouret}. The main technical point is that in
the presence
of unbounded random effects and sequential estimation of the
covariance matrix
of the random effects, the usual regularity conditions of the solutions
of the Poisson equations for the Markovian dynamic as a function of the
parameters cannot be verified and have to be relaxed. As a result we provide
a new general stochastic approximation convergence theorem with a weaker
set of assumptions.
Third, we prove that the conditions for stability and convergence are
fulfilled for our general SAEM-MCMC estimation algorithm of Bayesian dense
deformable templates. Indeed, a well known weakness of general
stochastic approximation algorithm convergence results is that they
rarely provide
proofs of convergence for the algorithms \emph{used in practice} since
in these implementations the assumptions
are not satisfied or hard to verify (see
\cite{andrieumoulinespriouret}). Since stochastic approximation
algorithms have started recently to attract interest for deformable
model estimation (see
\cite{allassoniere06:_gener_model_and_consis_estim} and
\cite{richard08:_saem_algor_for_estim_of}) our results provide the
missing
theoretical foundations and guidelines for their effective use. As an
illustration of the potential of such SAEM-MCMC approaches in the
context of deformable templates, in particular in the presence of
noisy data, we present a set of experiments with images of handwritten
digits.

This article is organized as follows. Section \ref{obs} briefly
reviews the hierarchical Bayesian deformable template model proposed
by Allassonnière, Amit and Trouvé
in \cite{AAT}. In Section \ref{SAEM}, we develop the SAEM-MCMC
strategy for the estimation of the parameters. Then in Section 
\ref{sec:convergence}, we state our general convergence result for
truncated stochastic approximation algorithm
extending the Andrieu et al. Theorem of convergence in
\cite{andrieumoulinespriouret} and state that the designed family of
SAEM-MCMC algorithms in the
previous section satisfy the assumptions. The proof of this last
statement is postponed to Section \ref{appendix} after Section
\ref{experiments} concentrates on  experiments. In a final Section,
we provide a short discussion and conclusion.

\section{Observation model}\label{obs}

Let us recall the model introduced in \cite{AAT}.
We are given gray level images $(\dy_i)_{1\leq i\leq n}$ observed on
a grid of pixels $\{\locpixel_\pixel \in D \subset \R^2, \pixel \in \Lambda\}$ which is
embedded in a
continuous
domain $D \subset \R^2$, (typically $D=[-1,1]× [-1,1].$).
Although the images are observed only at the pixels
$(\locpixel_\pixel)_\pixel$, we are
looking for
a template image $I_0 : \R^2 \to \R$ defined on the plane (the
extension to images on $\mathbb{R}^d$ is straightforward).
Each observation $\dy$ is assumed to be the discretization on a fixed
pixel grid of a deformation of the template plus independent noise.
Specifically for each observation there
exists an \textit{unobserved} deformation field $z:\R^2 \to
\R^2$ such that for $\pixel \in \Lambda$
\begin{eqnarray*}
\dy(\pixel) = I_0(\locpixel_\pixel - z(\locpixel_\pixel)) + \epsilon (\pixel)\ ,
\end{eqnarray*}
where $ \epsilon$ denotes an independent additive noise.

\subsection{Models for template and deformation}
\label{modeltempdef}
Our model takes into account two complementary sides: photometry
-indexed by $p$, and geometry -indexed by $g$. Estimating the template
and the distribution on deformations
directly as a continuous function would be an infinite dimensional problem. We
reduce this problem to a finite dimensional one by restricting the
search to
a parameterized space of functions.
The template $I_0~: \ \R^2 \to \R$ and the deformation $z ~: \R^2 \to
\R^2$ are assumed to belong to
 fixed reproducing kernel
Hilbert spaces $V_p$ and $V_g$ defined by their respective kernels
$K_p$  and $K_g$.
Moreover, we restrict them to the subset of
linear combinations of the
kernels centered
at some fixed control points in the domain $D$: $(\pk)_{1\leq j\leq
  k_p}$
respectively
$(\gk)_{1\leq j\leq k_g}$.
 They are therefore parameterized by the
coefficients
$\alpha \in \R^{k_p}$ and $\beta \in
\R^{k_g} × \R^{k_g}$ as follows. For all $ \locpixel$ in $D $, let
\begin{eqnarray*}\label{template}
\I0(\locpixel) \triangleq  (\Kp \alpha)(\locpixel) \triangleq  \sum\limits_{j=1}^{k_p}
K_p (\locpixel,\pk) \alpha^j \,,
\end{eqnarray*}
and
\begin{eqnarray*}\label{bz}
z_\beta(\locpixel) \triangleq  (\Kg \beta)(\locpixel) \triangleq  \sum\limits_{j=1}^{k_g} K_g
(\locpixel,\gk) \beta^j\,.
\end{eqnarray*}
Other forms of smooth parametric representations of the images and of
the deformation fields could be used without changing the overall results.

\subsection{Parametric model} \label{model}
For clarity, we denote by $\bdy^t=(\dy_1^t,\ldots, \dy_n^t)$ and by
$\bdbeta^t=(\beta_1^t,\ldots,\beta_n^t)$ the collection of data and their
corresponding deformation coefficients.
The statistical model of the observations we consider is a generative
hierarchical one.
We assume conditional normal distributions for  $\bdy$ and $\bdbeta$:
\begin{equation}\label{pre-priors}
\left\{
  \begin{array}[h]{l}
  \bdbeta\sim \otimes_{i=1}^n\mathcal{N}_{2k_g}(0,\Gamma_g)\  |\
\Gamma_g \, ,\\\\
\bdy\sim
  \otimes_{i=1}^n\mathcal{N}_{|\Lambda|} (z_{\beta_i}\I0,\sigma^2\text{Id})\ |\
  \bdbeta,\alpha, \sigma^2 \, ,
  \end{array}\right.
\end{equation}
where $\otimes$ denotes the product of distributions of independent variables and
$zI_\alpha(\pixel) = I_\alpha(\locpixel_\pixel - z(\locpixel_\pixel))$, for
$\pixel$ in $\Lambda$ denotes the action of the deformation on the template
image.
The parameters of interest are $\alpha$ which determines the template
image, $\sigma^2$ the variance
of the additive noise  and $\Gamma_g$ the covariance matrix
 of the variables $\beta$.
We assume that $\theta=(\alpha, \sigma^2, \Gamma_g)$ belongs to
an open parameter space $\Theta$:
$$  \Theta \triangleq \{\ \theta=(\alpha,\sigma^2,\Gamma_g)\ |\
\alpha\in\mathbb{R}^{k_p},\ \|\alpha\| <R \ |, \ \sigma>0,\ \Gamma_g \in\Syme\ \}\,,
$$
where $\|.\|$ is the Euclidean norm,  $\Syme$ is the cone of real positive $2k_g× 2k_g$ definite
symmetric matrices and $R$ an arbitrary positive constant.

The likelihood of the observed data $\qobs$ can be written as an integral over
the unobserved deformation variables. Let us denote by $\qcond$ the
conditional likelihood of the
observations given the hidden variables and by $\qmiss$ the likelihood
of these missing variables. Then,
\begin{eqnarray*} \label{dp/dm}
\qobs(\bdy|\theta) =\int \qcond(\bdy|\bdbeta, \alpha, \sigma^2)\qmiss(\bdbeta | \Gamma_g) d\bdbeta \ ,
\end{eqnarray*}
where all the  densities are determined by the model \eqref{pre-priors}.

\subsection{Bayesian model}\label{bayes}

Even though the parameters are finite dimensional,
the maximum-likelihood
estimator can yield degenerate estimates when the training sample is small.
By introducing  prior distributions
on the parameters, estimation with small samples is still
possible. The regularizing effect of such priors can be seen in the parameter
update steps (cf. \cite{AAT}).
We use a generative model based on standard conjugate prior
distributions for parameters $\te=(\alpha, \sigma^2,\Gamma_g)$ with
fixed hyper-parameters. Specifically, we assume
 a normal prior for $\alpha$, an inverse-Wishart prior on
$\sigma^2$ and an inverse-Wishart prior on
 $\Gamma_g$.
Furthermore, all priors are assumed to be independent. This yields
$\te=(\alpha, \sigma^2,\Gamma_g)\sim \qpara\triangleq \nu_p\otimes\nu_g$ where
\begin{equation}\label{priors} \left\{
\begin{array}{l}
\displaystyle{\nu_p(d\alpha, d\sigma^2) \varpropto
\exp \left(-\frac{1}{2} (\alpha-\mu_p)^t (\Gammap)^{-1} (\alpha - \mu_p) \right)
\left(
\exp\left(-\frac{\sigma_0^2}{2 \sigma^2}\right) \frac{1}{\sqrt{\sigma^2}}
\right)^{a_p}
  d\sigma^2
 d\alpha}, \ a_p \geq 3 \, , \\
\displaystyle{\nu_g(d\Gamma_g) \varpropto \left( \exp(-\langle
 \Gamma_g^{-1} ,
 \Gammag \rangle_F /2) \frac{1}{\sqrt{|\Gamma_g|}}  \right)^{a_g}
 d\Gamma_g},\ a_g\geq 4k_g+1
\, .
\end{array} \right.
\end{equation}
For two matrices $A$ and $B$, we define $\langle A,B \rangle_F \triangleq tr(A^tB)$ the
Frobenius dot product on the set of matrices where $tr$ denotes the
trace of the matrix.

\section{Parameter estimation based on stochastic approximation
  EM}\label{SAEM}
In our Bayesian framework, we obtain from \cite{AAT}
the existence of the MAP estimator
 \begin{eqnarray*} \label{MAP}
 \tilde{\theta}_n = \Argmax_{\theta\in \Theta} \qbayes(\theta|\bdy)\,,
 \end{eqnarray*}
where $\qbayes$ denotes the posterior likelihood of the parameters
given the observations. The dependence on $n$ refers to the sample
size.

We turn now to the maximization problem of the penalized posterior
distribution
$\qbayes(\theta|\bdy)$ 
which has no closed form in our case. Indeed, the probability density
function is known up to a renormalization constant.
That prevents
a direct computation of $\tilde \theta_n.$

In order to solve this problem, we apply an ``EM like''
algorithm to approximate the MAP estimator $\tilde\theta_n$. The
solution we propose is to base our algorithm on the use of the
Stochastic Approximation EM (SAEM). First,
we outline certain characteristics of our model, which highlight the
reasons for the choice of the particular procedure and enable us to
simplify its implementation.

\subsection{Model characteristics}


An important characteristic
of our model is that it belongs
to the curved exponential family. In other words the complete
likelihood $\qcomp$ can be written as:
\begin{equation*}
\label{expform}
  \qcomp(\bdy,\bdbeta, \te ) = \exp \left[ -\psi(\te) + \langle \tS(\bdbeta),
  \phi(\te)\rangle \right] \, ,
\end{equation*}
where the sufficient statistic $\tS$ is a Borel function on $\R^N$,
with $N \triangleq 2nk_g$, taking its
values in an open subset $\mathcal{S}$ of $\R^m$ and $\psi$, $\phi$
two Borel functions on $\Te$. (Note that $\tS$, $\phi$ and $\psi$ may
depend also on $\bdy$, but since $\bdy$ will stay fixed in what follows, we
omit this dependence).

In our setting, we obtain the following formula:
\begin{eqnarray*}
  \log \qcomp(\bdy , \bdbeta , \te) & = &
 \log \qcond(\bdy |\bdbeta , \te)+ \log \qmiss( \bdbeta | \te)+ \log
 \qpara(\te) \, ,
\end{eqnarray*}
where $\qpara$ denotes the prior density of the parameters defined in
the previous paragraph.\\

For any $ 1\leq j \leq k_p$ and any $ \pixel \in \Lambda$, we
denote by
\begin{eqnarray*}
\Kpb (\pixel,j) = K_p (\locpixel_\pixel - z_\beta(\locpixel_\pixel), \pk) \
\end{eqnarray*}
 the matrix which corresponds to the deformation of the kernel $K_p$
 through $z_\beta$ at pixel $\pixel$ and evaluated at pixel location
 $\\locpixel_\pixel $.
 Then, for some constant $C$ independent of $\theta$,
\begin{eqnarray*}
 \log \qcomp(\bdy ,\bdbeta , \te)&=& \sum\limits_{i=1}^n \left\{
-\frac{|\Lambda|}{2} \log (\sigma^2)
-\frac{1}{2\si^2} \|\dy_i -
K_p^{\beta_i}\alpha\|^2 \right\} \\
&+&\sum\limits_{i=1}^n \left\{
-\frac{1}{2}  \log (|\Gamma_g|) -\frac{1}{2} \beta_i^t \Gamma_g^{-1} \beta_i
 \right\} \\
&+& a_g \left\{
 - \frac{1}{2} \log(|\Gamma_g|)- \frac{1}{2}\langle \Gamma_g^{-1} , \Gammag \rangle_F \right\}
- \frac{1}{2} (\alpha - \mu_p)^t \Gammap^{-1} (\alpha - \mu_p)\\
&  +&a_p\left\{ -
\frac{1}{2}\log(\sigma^2) -\frac{\sigma_0^2}{2\sigma^2} \right\}
 + C\, .
\end{eqnarray*}
Note that $\|\dy_i-K_p^{\beta_i}\alpha\|^2 = (\dy_i-K_p^{\beta_i}\alpha)^t
(\dy_i-K_p^{\beta_i}\alpha)$, where $K_p^{\beta_i}\alpha$ is another way to write the
action of the deformation $ z_{\beta_i}$ on the template $I_\alpha$
denoted previously by
$ z_{\beta_i}I_\alpha$. This form emphasizes the dot product between  the
sufficient statistics and a function of the parameters.
It can be easily verified that the following
matrix-valued functions are the sufficient statistics (up to a
multiplicative constant)~:
\begin{eqnarray*}
  \label{statexhmat}
S_1(\bdbeta) & =  &  \sum\limits_{1\leq i\leq n} \left( K_p^{\beta_i}
\right)^t \dy_i \,,\\
S_2(\bdbeta) & = &\sum\limits_{1\leq i\leq n} \left( K_p^{\beta_i}
\right)^t \left( K_p^{\beta_i}
\right) \,,\\
S_3(\bdbeta) & = & \sum\limits_{1\leq i\leq n} \beta_i ^t \beta_i   \, .
\end{eqnarray*}
For simplicity, we denote
$S(\bdbeta)=(S_1(\bdbeta),S_2(\bdbeta),S_3(\bdbeta))$ for any
$\bdbeta\in\mathbb{R}^N$ and
 define the sufficient statistic space as
$$\mathcal{S}=\left\{
(S_1,S_2,S_3) \ | \ S_1 \in \R^{k_p} , \ S_2 +\sigma_0^2 \Sigma_p^{-1}\in  \Symep, \
\ S_3 + a_g \Sigma_g  \in \Syme
\right\}\,.$$
Identifying  $S_2$ and $S_3$ with their lower triangular
parts, the set $\mathcal{S}$ can be viewed as an open set of
$\R^{n_s}$ with $n_s=k_p +
\frac{k_p(k_p+1)}{2} + k_g(2k_g+1)$.

 In \cite{AAT}, the existence of the parameter estimate $\hat{\te}(S)$ that
 maximizes the complete log-likelihood  has
 been proved. It can easily be shown that  $\alpha$, $\si^2$ and $ \Gamma_g$ are
 explicitly expressed with the above sufficient statistics
as follows:

\begin{equation}  \label{PhotoUpdateClust}
\left\{
\begin{array}{lll}
\Gamma_{g} (S) & = & \frac{1}{n+a_g}
(S_3 + a_g \Gammag) \, ,\\
\\
\alpha (S) &=& \left(
S_{2}
+ \sigma^2(S)  (\Gammap)^{-1}
\right)^{-1}  \left(
S_1
+  \sigma^2(S)
(\Gammap)^{-1} \mu_p \right) \, ,\\
\\
\sigma^2     (S) &= &  \frac{1}{n |\Lambda| +a_p} \left(
n \|\bdy \| ^2 +\alpha(S)^t S_2 \alpha(S)-2 \alpha(S)^t S_1 + a_p\sigma_0^2 \right) \, .
\end{array} \right.
\end{equation}

All these formulas also prove the smoothness of $\hat{\te}$ on the
subset $ \mathcal{S}$.


\subsection{SAEM-MCMC algorithm with truncation on random
  boundaries}

In order to compute the MAP estimator for our Bayesian model, we use
a variant of the EM (Expectation-Maximization, \cite{DLR}) algorithm. This
algorithm is quite natural when we have to maximize a likelihood under
a hierarchical model with missing
variables. Unfortunately, direct computation is not tractable and
we have to find a solution to overcome the problematic E step
 where we have to compute an expectation with respect to the
posterior distribution on $\bdbeta$ given $\bdy$. A first
attempt was proposed in \cite{AAT} where this conditional distribution
is approximated by a Dirac distribution at its mode (Fast Approximation
with Mode -FAM-EM). The results are very interesting, however, the
authors point out the lack of convergence of the FAM-EM algorithm when the
quality of the input images is not good, typically when they are noisy.
This is the issue we consider here. We propose an
algorithm that ensures the convergence of the resulting
sequence of estimators toward the MAP whatever the quality of the input.

This solution is a procedure combining the Stochastic Approximation
EM (SAEM) with Markov Chain Monte Carlo
(MCMC) in a more general framework than
that proposed by \cite{kuhnlavielle}, which in turn
generalized the algorithm introduced by \cite{DLM}.
Indeed, the $k^{th}$ iteration of the SAEM-MCMC algorithm consists of
three steps:
\begin{description}
\item[Step 1~: Simulation step.]
  The missing data, i.e. the deformation parameters $\bdbeta$,
  are drawn using the transition probability of a convergent Markov
  chain $\ntrans_\te$ having the posterior distribution
  $q_{post}(.|\bdy,\te)$ as its stationary distribution:
$$\bdbeta_k
  \sim\ntrans_{\theta_{k-1}}(\bdbeta_{k-1},\cdot)\,.$$
\item[Step 2~: Stochastic approximation step.]
A stochastic approximation is done on the complete log-likelihood
using the simulated value of the missing data:
\begin{equation*}
\label{appsto}
Q_{k}(\theta)=Q_{k-1}(\theta)+\Delta_{k-1}[\log
\qcomp(\bdy,\bdbeta_k,\theta)-Q_{k-1}(\theta)] \,,
\end{equation*}
where $\boldsymbol\Delta=(\Delta_k)_k$ is a decreasing sequence of positive step-sizes.
\item[Step 3~: Maximization step.]
The parameters are updated in the M-step:
$$\theta_{k}=\Argmax\limits_{\te\in\Theta}
  Q_{k}(\theta)\,.$$
\end{description}
The initial values $Q_0$ and $\te_0$ are arbitrarily chosen.
\begin{rem}
  We cannot use the direct SAEM algorithm. Indeed, this would require
   sampling the hidden variable from the posterior
  distribution which is known only up to a normalization constant. This
  sampling is not possible here due to the complexity of the posterior
  probability density function.
\end{rem}

Since our model belongs to the curved exponential
family, the stochastic
approximation step can easily be done on the sufficient statistics $S$
instead of on the complete log-likelihood. Then the maximization
step 3 is straightforward, replacing
in \eqref{PhotoUpdateClust}  the sufficient statistics with their
corresponding stochastic
approximations. \\

The convergence of this algorithm has been proved in
\cite{kuhnlavielle} in the particular case of  missing
variables living in a compact subset of $\R^N$.
However, as we set a Gaussian prior on the
missing variables $\bdbeta$, we cannot assume that their support is
compact.  In order to provide an algorithm whose convergence can be
proved in the current framework we
have to use a more general
setting introduced in \cite{andrieumoulinespriouret}
which involves truncation on random boundaries. The proof is given in
Section \ref{sec:convergence}.
This can be formalized as follows.

Let
$(\Kapa_q)_{q\geq 0} $ be a sequence of increasing compact subsets of
$\mathcal{S}$ such as $\cup_{q\geq 0} \Kapa _q = \mathcal{S} $
and $ \Kapa _q \subset  \text{int}(\Kapa _{q+1}) ,$ for all $q \geq 0$. Let
$\bfe=(\varepsilon_k)_{k\geq 0}  $ be a monotone non-increasing sequence of positive
numbers and $\mathrm{K}$ a compact subset of $\R^{N}$. We construct a
sequence $((\bdbeta_k,s_k ))_{k\geq 0}$ as described in Algorithm
 \ref{AlgoMoulines} as follows.
As long as the stochastic approximation does not
 wander outside the current
compact set and is not too far from its previous value, we run the
SAEM-MCMC algorithm. As soon as one of these conditions is
not satisfied, we reinitialize the sequences of $\bdbeta$ and $s$ using
a projection (for more details see \cite{andrieumoulinespriouret} ), we
increase the size of the compact set
and continue the iterations until convergence.
This is detailed in the following steps~:

\begin{description}
\item[Initialization step~:]  Initialize  $\bdbeta_0$  and $s_0$ in two
  fixed compact sets $\mathrm{K}$ and  $\Kapa_0$  respectively.\\
\item[]Then, for the $k^{th}$ iteration, repeat the  following four steps~:\\
\item[Step 1~: MCMC simulation  step]. Draw  one new element
  $\bar\bdbeta$ of the
  non-homogeneous Markov Chain with respect to
 the kernel with the current parameters $\ntrans_{\theta_{k-1}}$ and
 starting at  $\bdbeta_{k-1}$.\\
 $$\bar\bdbeta
  \sim\ntrans_{\theta_{k-1}}(\bdbeta_{k-1},\cdot)\,.$$
 \item[Step 2~: Stochastic approximation step].
Compute
\begin{equation}\label{eq:sbar}
\bar{s} =
s_{k-1} + \Delta_{\zeta_{k-1}} (
\tS(\bar{\bdbeta}) -  s_{k-1})\,.
\end{equation}

\item[Step 3~: Truncation on random boundaries].
 If $\bar{s}$ is outside the current compact set $\Kapa_{\kappa_{k-1}} $ or too far
 from the previous value $s_k$, then
 restart the stochastic approximation in the initial compact
 set, extend the truncation boundary to $\Kapa_{\kappa_{k}} $ and start again
 with a bounded value of the missing
 variable. Otherwise,
 set $ ( \bdbeta_k,s_k ) = (\bar{\bdbeta},\bar{s})$ and keep the
 truncation boundary to $\Kapa_{\kappa_{k-1}}$.\\
\item[Step 4~: Maximization step].
Update the parameters using \eqref{PhotoUpdateClust}. \\
\end{description}

In this algorithm, the MCMC simulation step has
to be explained since it involves the choice of the transition kernel
of the Markov chain.
Usually, one uses a Metropolis-Hastings algorithm in which
 a candidate value is sampled from a proposal
distribution followed by an
accept-reject step. However, there are
different possible proposal distributions.
The only requirement is that all these kernels
lead to an ergodic
Markov chain whose stationary distribution is our posterior
distribution. The choice among these possibilities should be based on
the specific
 framework we are
working in.

While minimizing the Kullback-Leibler distance between the stationary
  distribution $\bdbeta\to \lstat_\theta(\bdbeta)$ and a tensorial product $
  \bdbeta\to \otimes_{i=1}^n p(\beta_i)$ corresponding to
  independent identically distributed missing variables,
 we get that $p$ is proportional to
 $\frac{1}{n} \sum\limits_{i=1}^{n}\qpost(.| y_i,\theta) $.
As $n$ tends to $\infty$ and for a given $\theta$, $p$ converges a.s. towards the
prior pdf on the missing variable $\qmiss(.|\theta)$.
This suggest to use as
proposal the prior distribution which involves the current parameters.

On the other hand, the setting we have in this paper deals with high
dimensional missing variables. This raises several issues. If we
simulate candidates for the hidden variable as a complete vector,
it appears that most of the
candidates are rejected.
This is a typical high dimensional concentration phenomenon~:
locally around a current point, the proportion of the space
occupied by acceptable moves becomes negligible when the space dimension
grows.
From a more practical point of view, even
if the proposed candidate is drawn with respect to the current prior
distribution, it creates a deformation that is very different from the
current one and too large for the corresponding deformed template to fit the
observations.
This yields very few possible moves from the current
missing variable value and the algorithm is stuck in a non-optimal
location or converges very slowly.

One solution is to update the chain one coordinate at a time
conditionally on the others. This
corresponds to a Gibbs sampler and leads to more relevant candidates which have a higher
chance to be accepted (cf. \cite{AmitGibbs}).
From an image analysis point of view, this put stronger conditions on
the kind of
deformations which are produced when proposing a candidate for each
coordinate.
Knowing the tendency of the
movement given by the other coordinates,
the candidate will either confirm it or not depending if this is
a suitable movement. It will thus be accepted with a corresponding
probability.
 Even if some coordinates remain unchanged, some others are
updated which enables the algorithm to visit a larger part of the missing variable
support.


\begin{algorithm}
 \caption{Stochastic approximation with truncation on random boundaries}
\label{AlgoMoulines}
\begin{algorithmic}
\STATE Set $\bdbeta_0 \in \mathrm{K}$, $s_0 \in \Kapa_0$, $\kappa_0=0$,  $\zeta_0=0$,
 and
 $\nu_0=0$.
\FORALL{$k\geq 1$}
\STATE\texttt{compute} $\bar{s} =
s_{k-1} + \Delta_{\zeta_{k-1}} (
\tS(\bar{\bdbeta}) -  s_{k-1}) $
\STATE \texttt{where} $\bar{\bdbeta}$ \texttt{is sampled
from a transition kernel} $\ntrans_{\theta_{k-1}}(\bdbeta_{k-1},.)$.
\IF{ $\bar{s} \in \Kapa_{\kappa_{k-1}} $
\texttt{and} $\|\bar{s} - s_{k-1}\|\leq \varepsilon_{\zeta_{k-1}}  $}
\STATE \texttt{set} $( \bdbeta_k,s_k) = (\bar{\bdbeta},\bar{s})$
\texttt{and}
$\kappa_k=\kappa_{k-1} $, $\nu_k=\nu_{k-1}+1$, $\zeta_k=\zeta_{k-1}+1$
\ELSE
\STATE \texttt{set}  $( \bdbeta_k,s_k) = (\tilde{\bdbeta},\tilde{s}) \in
 \mathrm{K}× \Kapa_0 $ \texttt{and}  $\kappa_k=\kappa_{k-1}+1 $, $\nu_k=0 $, $\zeta_k =
\zeta_{k-1} + \phi(\nu_{k-1})$
\STATE \texttt{where} $\phi : \ \N \to \Z$ \texttt{is a
  function such that}
$\phi(k)> -k$ \texttt{for any} $k$
\STATE \texttt{and}
$(\tilde{\bdbeta},\tilde{s})$  \texttt{can be chosen through
  different ways (cf. \cite{andrieumoulinespriouret})}.

\ENDIF

\STATE $\te_k$
$=\hat{\theta}(s_k)$.

\ENDFOR\\

\vspace{0.5cm}
\end{algorithmic}
\end{algorithm}

\begin{rem}
  The index $\kappa$ denotes the current active truncation set, the index
  $\zeta$ is the current index in the sequences $\boldsymbol\Delta$ and
  $\boldsymbol\varepsilon$ and  the index
  $\nu$ denotes the number of iterations since the last projection.
\end{rem}

\subsection{Transition probability of the Markov chain}
\label{HGS}
We now explain  how to simulate the missing variables thanks to a
Markov Chain Monte Carlo algorithm having the posterior distribution as
its stationary distribution. Due to the inherent
 high dimensionality $N$ of $\bdbeta$, we consider a Gibbs
 sampler to sequentially scan all
coordinates $\bdbeta^j$ for $1\leq j \leq N$.

Denote by
$\bdbetamj=(\bdbeta^l)_{l\neq j}$.
We consider here a hybrid Gibbs sampler
i.e. each step of the Gibbs sampler includes a
Metropolis-Hastings step. The proposal law is
chosen as $\qj(\cdot|\bdbetamj,\theta)$ i.e. the  conditional law
based on the current parameter value $\theta$ derived from the normal
distribution $\qmiss$.

 If $b$ is a proposed value at coordinate $j$, the acceptance rate of
 the
Metropolis-Hastings algorithm is given by
$$r_{j}(\bdbeta^j,b;\bdbetamj , \theta )=\left[
  \frac{\qj(b |
    \bdbetamj,\bdy,\theta)\qj(\bdbeta^j|\bdbetamj,\theta)}{\qj(\bdbeta^j
    |\bdbetamj,
    \bdy,\theta)\qj(b|\bdbetamj,\theta)}\land 1\right]\,.$$
Since
$$\qj(\bdbeta^j|\bdbetamj,\bdy,\theta)\propto
\qobs(\bdy|\bdbeta,\theta)\qj(\bdbeta^j|\bdbetamj,\theta)\, ,$$
the acceptance rate can be simplified to
$$r_{j}(\bdbeta^j,b;\bdbetamj,\theta)=\left[
  \frac{\qobs(\bdy|\betajb,\theta)}{\qobs(\bdy|\bdbeta,\theta)}\land
  1\right]\,,$$
where for any $b\in\mathbb{R}$ and $1\leq j\leq N$, we denote by
$\betajb$ the unique vector which is equal to $\bdbeta$ everywhere
except at coordinate $j$ where it equals $b$.
An illustration of the hybrid Gibbs sampler can be found in
\cite{Robert_MCMC}.
The following steps are performed
for each coordinate $j$~:

\begin{description}
\item[]
\item[Step 1~: Proposition]. Sample $b$ with respect to the density
  $\qj(.|\bdbetamj,\theta) $. \\
\item[Step 2~: Accept-reject]. Compute $r_j(\bdbeta^j, b;
  \bdbetamj, \theta)$
  and  with probability $r_{j}(\bdbeta^j,b;\bdbetamj,\theta)$, update
  $\bdbeta^j$ to $b$. \\
\end{description}

In
Algorithm \ref{Algo1}, we summarize the transition step of the Markov
chain.
\begin{algorithm}
 \caption{Transition step $k\to k+1$ using a hybrid Gibbs sampler}
\label{Algo1}
\begin{algorithmic}
\REQUIRE $\bdbeta = \bdbeta_k$; $\theta=\theta_k$
\STATE \texttt{Gibbs sampler:}
\FORALL{ $j = 1: N$}
\STATE \texttt{Metropolis-Hastings procedure:}
\STATE $b \sim \qj(\cdot|\bdbetamj,\theta);$
\STATE Compute $
r_{j} (\bdbeta^j,b;\bdbetamj,\theta) =\left[
  \frac{\qobs(\bdy|\betajb,\theta)}{\qobs(\bdy|\bdbeta,\theta)}\land
  1\right]
$
\STATE With probability $r_{j}(\bdbeta^j,b;\bdbetamj,\theta)$, update
$\bdbeta^j$: $\bdbeta^j \gets b$
\ENDFOR
\end{algorithmic}
\end{algorithm}

This yields the transition probability kernel of our Markov chain on
$\bdbeta$~: for coordinate $j$, the kernel is
\begin{multline}\label{eq:kernelj}
\ntrans_{\theta,j}(\bdbeta,d\bdz)=\left( \otimes _{m\neq j} \delta_{\bdbeta^m} (d\bdz^m)
\right) × \left[
 \qj(d\bdz^j|\bdbetamj,\theta)r_{j}(\bdbeta^j,d\bdz^j;\bdbetamj,\theta )
+\right.\\
\left.
\delta_{\bdbeta^j} (d\bdz^j) \int (1- r_{j}(\bdbeta^j,b;\bdbetamj,\theta ) )
 \qj(b|\bdbetamj,\theta) db
\right]
\end{multline}
and
$\Pi_\theta=\Pi_{\theta,N}\circ \cdots\circ \Pi_{\theta,1}$ is therefore the kernel associated with a complete scan.

\section{Convergence analysis}\label{sec:convergence}

 We prove a general theorem on the
convergence of stochastic approximations for which our algorithm
convergence is a special case.

The hybrid Gibbs sampler used to generate the ergodic Markov chain
does not satisfy some of
the assumptions of the convergence result presented in
\cite{andrieumoulinespriouret}.
We therefore weaken some of their conditions,
introducing an absorbing set for the stochastic approximation and
weakening their Hölder conditions on some functions of the Markov chain.

\subsection{Stochastic approximation  convergence Theorem}
\label{mainth}


Let $\mathcal{S}$ be a subset of $\R^{n_s}$ for some integer
$n_s$. Let $X$ be a measurable space.
For all $s\in\mathcal{S}$ let $H_s: \  X \to \mathcal{S} $ be
a measurable
function. Let $\boldsymbol\Delta=(\Delta_k)_k$ be a sequence of positive step-sizes.

Define the stochastic approximation sequence $(s_k)_k$
as follows~:
\begin{equation}
  \label{eq:ASsk}
\left\{
\begin{array}{lllll}
s_{k} &=& s_{k-1} + \Delta_{k-1} H_{s_{k-1}}( \bdbeta_{k} )& \text{ with }
\bdbeta_{k} \sim \ntrans_{s_{k-1}}(\bdbeta_{k-1},\cdot)\,, & \text{ if } s_{k-1} \in \mathcal{S} \\
  s_{k} &=& s_{c}  & \text{ with }
\bdbeta_{k} =\bdbeta_c\,, & \text{ if } s_{k-1} \notin  \mathcal{S}\,,
\end{array}\right.
\end{equation}
where $s_c\notin \mathcal{S}$, $\bdbeta_c\notin X$  and
$(\ntrans_s)_{s\in\mathcal{S}}$ is a family
of Markov transition probabilities on $X$.
Denote by $Q_{\Delta} $ the transition which generates $((\bdbeta_k,s_k))_k$.
We consider the natural filtration of the \textit{non-homogeneous}
chain $((\bdbeta_k,s_k))_k$ and denote respectively by
$\mathbb{P}^{\boldsymbol\Delta}_{\bdbeta,s}$ and $ \mathbb{E}^{\boldsymbol\Delta}_{\bdbeta,s}$ the
probability measure and the corresponding expectation generated by this
Markov chain starting at $(\bdbeta,s)$ and using the sequence $\boldsymbol\Delta$.

If the transition kernel $\ntrans_s$ of the Markov chain admits a
stationary distribution $\pi_s$ and if for any $s\in
\mathcal{S}$,  $H_s$ is
integrable with respect to $\pi_s$, then we denote by $h$ the mean field
associated with our stochastic approximation so that~:
\begin{eqnarray*}
   h(s) = \int
    H_s(\bdbeta) \pi_s(\bdbeta)d\bdbeta \, .
\end{eqnarray*}

The algorithm defined in \ref{eq:ASsk} is usually designed to solve
the equation $h(s)=0$ where $h$ is called the mean field function.




Let
$(\Kapa_q)_{q\geq 0} $ be a sequence of increasing compact subsets of
$\mathcal{S}$ such as $\cup_{q\geq 0} \Kapa _q = \mathcal{S} $
and $ \Kapa _q \subset  \text{int}(\Kapa _{q+1}) , \forall q \geq 0$. Let
$\bfe =(\varepsilon_k)_{k\geq 0}  $ be a monotone non-increasing sequence of positive
numbers and $\mathrm{K}$ a subset of $X$.

Let $\Phi : X × \mathcal{S} \to \mathrm{K}× \Kapa_0 $ be a measurable
function and $\phi : \ \N \to \Z$ be a function such that $\phi(k)> -k$ for
any $k$. Define the \textit{homogeneous} Markov chain
\begin{equation}
\label{eq:Zk}
(Z_k=(\bdbeta_k,s_k,\kappa_k,\zeta_k,\nu_k))_k
\end{equation}
 on $\mathcal{Z}\triangleq X×  \mathcal{S}×
\N^3$
with the following transition at iteration $k$~:
\begin{itemize}
\item If $\nu_{k-1}=0$ then draw $(\bdbeta_{k} , s_{k}) \sim
  Q_{\Delta_{\zeta_{k-1}}} (\Phi(\bdbeta_{k-1},s_{k-1}),\cdot )$; otherwise draw
  $(\bdbeta_{k} , s_{k}) \sim
  Q_{\Delta_{\zeta_{k-1}}} ((\bdbeta_{k-1},s_{k-1}),\cdot )$;
\item If $\|s_{k}-s_{k-1}\| \leq \varepsilon_{\zeta_{k-1}}$ and $s_{k} \in
  \Kapa_{\kappa_{k-1}}$ then set $\kappa_{k}=\kappa_{k-1}$, $\zeta_{k}=\zeta_{k-1} +1$
  and $\nu_{k} =\nu_{k-1}+1$ ; otherwise set $\kappa_{k}=\kappa_{k-1} +1$,
  $\zeta_{k}= \zeta_{k-1} +\phi(\nu_{k-1})$
  and $\nu_{k} =0$.
\end{itemize}
\vspace{4mm}



Consider the following assumptions, generalized from
\cite{andrieumoulinespriouret}.
Define for any $V:X\to [1,\infty]$ and
any $g:X \to
\R^{n_s}$ the norm
\begin{equation*}
\|g\|_V = \sup\limits_{ \bdbeta \in X } \frac{\|g(\bdbeta)\|}{V(\bdbeta)}\, .
\end{equation*}

\begin{description}
\item[A1'.] $\mathcal{S}$ is an open subset of $\R^{n_s}$, $h: \mathcal{S}
  \to \R^{n_s}$
is continuous and there exists a continuously differentiable function
$w : \mathcal{S} \to [0,\infty[$ with the following properties.
\begin{description}
\item[(i)] There exists an $M_0>0$ such that
\begin{equation*}
\mathcal{L} \triangleq \left\{
s \in \mathcal{S} , \left\langle
\nabla w(s), h(s) \right\rangle =0
\right\}
\subset \{
 s \in \mathcal{S}, \   w(s ) < M_0
\}\,.
\end{equation*}
\item[(ii)]There exists a closed convex set $ \mathcal{S}_a\subset\mathcal{S}$
for which $s\to s+\rho H_s(\bdbeta)\in \mathcal{S}_a$ for any $\rho\in [0,1]$
and $(\bdbeta,s)\in X× \mathcal{S}_a$ ($\mathcal{S}_a$ is
absorbing) and such that for any $M_1 \in
  ]M_0, \infty]$, the set
$
\mathcal{W}_{M_1} \cap \mathcal{S}_a $
 is a compact set of $\mathcal{S}$ where $
\mathcal{W}_{M_1} \triangleq \{
s \in \mathcal{S} , \ w(s) \leq M_1\}
$.
\item[(iii)] For any $s \in \mathcal{S} \backslash \mathcal{L} $
 $ \left\langle
\nabla w(s), h(s) \right\rangle <0 $.
\item[(iv)] The closure of $w(\mathcal{L}) $ has an empty interior.
\end{description}
\vspace{0.2cm}
\item[A2.] For any $ s \in  \mathcal{S}$, the Markov kernel $\ntrans_{s}$
 has a single stationary distribution $\pi_{s}$, $\pi_{s}
  \ntrans_{s}
 = \pi_{s}$.
 In addition for all $s\in \mathcal{S}$, $H_s: X \to \s$ is measurable and
 $ \int_{X} \| H_s(\bdbeta)\| \pi_{s} (d\bdbeta) <\infty$.
\end{description}
\vspace{0.2cm}

\begin{description}
\item[A3'.] For any $s\in \mathcal{S}$, the Poisson equation
   $g-\ntrans_s g = H_s - \pi_s(H_s)$ has a solution $g_{s}$. There
   exist a function $V:X \to [1,\infty] $ such that $ \{\bdbeta
   \in X , V(\bdbeta)<\infty \} \neq \emptyset$,
  constants $a\in ]0,1]$,  $q\geq 1$ and  $p\geq2$ such
  that for any compact subset
  $\Kapa \subset \mathcal{S}$,
 \begin{description}
 \item[(i)]
 \begin{eqnarray}
 \sup\limits_{s\in\Kapa} \| H_s\|_V <\infty \,, \label{A2i1}\\
 \sup\limits_{s\in\Kapa} (\|g_{s}\|_V + \| \ntrans_{s}g_{s}\|_V )< \infty
 \,, \label{A2i2}
\end{eqnarray}
\item[(ii)]
\begin{eqnarray}
 \sup\limits_{s,s'\in\Kapa} \|s-s'\|^{-a} \{ \|g_{s}-g_{s'}\|_{V^{q}}  + \|
 \ntrans_{s} g_{s} -\ntrans_{s'}g_{s'} \|_{V^{q}} \}<
 \infty \,.
 \end{eqnarray}
\item[(iii)] Let $k_0$ be an integer. There exist an $\bar\varepsilon>0$ and a
  constant $C$ such that
   for any sequence
$\bfe=(\varepsilon_k)_{k \geq 0}$ satisfying $0<\varepsilon_k\leq \bar\varepsilon$ for all $k\geq k_0$,
for any sequence
 $\boldsymbol\Delta= (\Delta_k)_{k\geq 0}$ and for any
   $\bdbeta\in X$,
 \begin{equation}
\label{eq:majPQ}
 \sup\limits_{s\in \Kapa} \sup\limits_{k \geq 0} \mathbb{E}_{\bdbeta,s}^{\boldsymbol\Delta} \left[
 V^{pq}(\bdbeta_k)\mathds{1}_{ \si(\Kapa) \land \nu(\boldsymbol\varepsilon) \geq k}
 \right]\leq C V^{pq} (\bdbeta) \,,
 \end{equation}
 where $ \nu(\boldsymbol\varepsilon)  = \inf\{k\geq 1 , \|s_k-s_{k-1}\|\geq \varepsilon_k\}$ and $\si(\Kapa) =
 \inf\{k\geq 1 ,s_k\notin \Kapa\} $ and the expectation is related to the
 non-homogeneous Markov chain $((\bdbeta_k, s_k))_{k\geq 0}$ using the
 step-size sequence
 $(\Delta_k)_{k\geq 0}$.
\end{description}

\item[A4.] The sequences $\boldsymbol\Delta=(\Delta_k)_{k\geq 0} $ and
  $\boldsymbol\varepsilon=(\varepsilon_k)_{k\geq 0} $   are non-increasing,
  positive and satisfy: $\sum\limits_{k=0}^\infty \Delta_k =\infty$,
$\lim\limits_{k\to\infty} \varepsilon_k =0$ and
$ \sum\limits_{k=1}^\infty \{
\Delta_k^2 +
\Delta_k \varepsilon_k^a +
(\Delta_k \varepsilon_k^{-1})^p  \}
 <\infty$,
where $a $ and $p$ are defined in $(\textbf{A3'})$.
\end{description}
\vspace{0.3cm}

\begin{theorem}[General Convergence Result for Truncated Stochastic
  Approximation]\label{maintheo}
Assume
  (\textbf{A1}'),(\textbf{A2}), (\textbf{A3'}) and
  (\textbf{A4}). Let $\mathrm{K}\subset X$ be such that
  $\sup\limits_{\bdbeta\in\mathrm{K}} V(\bdbeta) < \infty $ and
$\Kapa_0 \subset \mathcal{W}_{M_0} \cap \mathcal{S}_a$ (where $M_0$ is defined
in (\textbf{A1}')),
 and let
    $(Z_k)_{k\geq 0}$ be the sequence defined in equation \eqref{eq:Zk}.
Then, for all $\bdbeta_0 \in \mathrm{K}$ and
    $s_0\in \Kapa_0$, we have $\lim\limits_{k\to\infty}
    d(s_k,\mathcal{L})=0$ $\bar{\mathbb{P}}_{\bdbeta_0, s_0, 0,0,0} $-a.s,
    where $ \bar{\mathbb{P}}_{\bdbeta_0, s_0, 0,0,0} $ is the
    probability measure
    associated with the chain $(Z_k=(\bdbeta_k,
    s_k,\kappa_k,\zeta_k,\nu_k))_{k\geq 0} $ starting at $(\bdbeta_0,s_0,0,0,0) $.

\end{theorem}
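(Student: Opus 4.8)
The plan is to follow the architecture of the convergence proof of Andrieu, Moulines and Priouret in \cite{andrieumoulinespriouret}, splitting the argument into a \emph{stability} part (the number of truncations stays finite almost surely) and a \emph{convergence} part (a Lyapunov/ODE argument once the iterates are trapped in a fixed compact set), while adapting every estimate to the weakened regularity of (\textbf{A3'}) and to the presence of the absorbing set $\Sa$ supplied by (\textbf{A1'})(ii). First I would set up the Robbins--Monro decomposition, writing $s_k = s_{k-1} + \Delta_{k-1} h(s_{k-1}) + \Delta_{k-1}\,\eta_k$ with noise $\eta_k = H_{s_{k-1}}(\bdbeta_k) - h(s_{k-1})$, and use the Poisson equation of (\textbf{A3'}) to express $\eta_k = g_{s_{k-1}}(\bdbeta_k) - \ntrans_{s_{k-1}} g_{s_{k-1}}(\bdbeta_k)$. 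This I would split into the martingale increment $g_{s_{k-1}}(\bdbeta_k) - \ntrans_{s_{k-1}} g_{s_{k-1}}(\bdbeta_{k-1})$ plus a remainder coming from $\ntrans_{s_{k-1}} g_{s_{k-1}}(\bdbeta_{k-1}) - \ntrans_{s_{k}} g_{s_{k}}(\bdbeta_k)$, which after Abel summation is controlled by the step increments $\|s_k - s_{k-1}\|$, by $|\Delta_k - \Delta_{k-1}|$, and by boundary terms.

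The moment control is where the novelty enters. Where \cite{andrieumoulinespriouret} bound the perturbation in the $V$-norm, here only the weaker $V^q$-H\"older bound of (\textbf{A3'})(ii) is available, so the summability of the noise series must instead be extracted from the uniform $V^{pq}$-moment bound \eqref{eq:majPQ} of (\textbf{A3'})(iii), combined with \eqref{A2i1}--\eqref{A2i2} through H\"older's inequality with conjugate exponents built from $p$ and $q$. The role of (\textbf{A4}) is precisely to render the resulting series $\sum \Delta_k^2$, $\sum \Delta_k \varepsilon_k^a$ and $\sum (\Delta_k \varepsilon_k^{-1})^p$ convergent, so that both the martingale term (via an $L^p$ maximal inequality) and the Abel remainder are almost surely finite.

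For the stability part I would show that the truncation index $\kappa_k$ is almost surely bounded, i.e.\ only finitely many re-initializations occur. On any stretch free of truncation, (\textbf{A1'})(iii) makes the drift $\Delta_{k-1}\langle \nabla w(s_{k-1}), h(s_{k-1})\rangle$ nonpositive outside $\mathcal{L}$, while the noise contribution is summable by the previous step, so $w(s_k)$ cannot grow unboundedly. The absorbing property of $\Sa$ is essential: it forces every iterate $s_k$ to remain in $\Sa$, whence by (\textbf{A1'})(ii) the sublevel sets $\mathcal{W}_{M_1}\cap\Sa$ are compact. This compactness replaces the global compactness used in the bounded case and ensures that the increasing family $(\Kapa_q)$ eventually swallows the iterates. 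A Borel--Cantelli argument, applied to the excursions between successive projections, then yields finiteness of the truncations $\bar{\mathbb{P}}$-a.s.

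Conditionally on the last truncation time being finite, the iterates live in a fixed compact $\Kapa \subset \s$ and I would run the standard Lyapunov argument: $w(s_k)$ is an almost-supermartingale up to summable perturbations, hence converges a.s.; coupling this with (\textbf{A1'})(iii) forces $\liminf_k d(s_k,\mathcal{L}) = 0$, and the empty-interior condition (\textbf{A1'})(iv) on the closure of $w(\mathcal{L})$ upgrades this to $d(s_k,\mathcal{L}) \to 0$. The main obstacle will be the stability estimate under the weakened regularity: because only the $V^q$-H\"older bound is available, the Abel remainder is a priori larger, and one must verify that the killed-chain moment bound \eqref{eq:majPQ}, which holds uniformly only up to the stopping time $\si(\Kapa)\wedge\nu(\bfe)$, genuinely tames it. Showing that these higher-order $V^{pq}$ estimates propagate through the Abel summation and survive the projection map $\Phi$ at each re-initialization is the technically delicate heart of the argument.
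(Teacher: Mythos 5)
Your proposal follows essentially the same route as the paper: both adapt the Andrieu--Moulines--Priouret architecture by (i) using the absorbing set of (\textbf{A1'})(ii) to recover compact sublevel sets and carry over the deterministic stability/convergence lemmas, and (ii) running the Poisson-equation/Abel-summation decomposition of the noise, where the weakened $V^q$-H\"older condition forces the exponent $pq$ in the moment bounds and is absorbed by the killed-chain estimate \eqref{eq:majPQ} together with the summability in (\textbf{A4}). The only cosmetic difference is that the paper controls the resulting terms via Minkowski/Burkholder and the trivial bound $V^p\leq V^{pq}$ (since $V\geq 1$) rather than a H\"older-conjugate-exponent argument, but this does not change the substance.
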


\vspace{0.3cm}

\begin{proof}
$\bullet$
The deterministic results obtained by \cite{andrieumoulinespriouret}
under their assumption (\textbf{A1}) remain true if we suppose the
existence of an absorbing set as defined in
assumption (\textbf{A1'}). Indeed, the proofs in
\cite{andrieumoulinespriouret} can be carried through in
the same way restricting the sequences to the absorbing set. Therefore
we obtain the same properties. The first one (stated in Lemma 2.1 of
\cite{andrieumoulinespriouret})
gives the contraction property of the
Lyapunov function $w$. Then, we have (as in Theorem 2.2 of
\cite{andrieumoulinespriouret}) the fact that
a sequence of stochastic approximations stays almost surely in a
compact set under some conditions on the perturbation. Lastly, we
establish the  convergence of such a stochastic approximation.\\

 $\bullet$ We then state a relation between the homogeneous and
 non-homogeneous chains as done in Lemma 4.1 of
 \cite{andrieumoulinespriouret}.
 \\

$\bullet$ We now prove an equivalent version of Proposition $5.2$ of
\cite{andrieumoulinespriouret} under our conditions.
Indeed the upper bound on the
fluctuations of the noise sequence stated in
this proposition is relaxed in our case, involving a different power on the
function $V$.

\vspace{2mm}

\begin{proposition}\label{prop1}
  Assume (\textbf{A3}'). Let $\Kapa $ be a compact subset of
  $\mathcal{S}$ and let $\boldsymbol\Delta=(\Delta_k)_k $ and $\bfe=(\varepsilon_k)_k$
  be two non-increasing sequences of positive numbers such that
  $\lim\limits_{k\to\infty}\varepsilon_k=0 $. Then, for $p$
  defined in (\textbf{A3}'),
\begin{description}
\item[1.] there exists a constant $C$ such that, for any $(\bdbeta,s)\in
  X × \Kapa$, any integer $l$, any $\delta>0$
\begin{eqnarray*}
  \label{eq:5.2.1}
  \mathbb{P}_{\bdbeta,s}^{\boldsymbol\Delta}  \left(
\sup\limits_{n\geq l} \|S_{l,n}(\boldsymbol\varepsilon,\boldsymbol\Delta,\Kapa)\| \geq \delta
\right)  \leq C \delta^{-p} \left\{
\left(
\sum\limits_{k=l}^\infty \Delta_k^2
\right)^{p/2} +
\left(
\sum\limits_{k=l}^\infty \Delta_k \varepsilon_k^a
\right)^{p}
\right\} V^{pq}(\bdbeta) \,,
\end{eqnarray*}
where $ S_{l,n}(\boldsymbol\varepsilon,\boldsymbol\Delta,\Kapa) \triangleq
\mathds{1}_{\si(\Kapa) \land  \nu(\boldsymbol\varepsilon) \geq n }
\sum\limits_{k=l}^n \Delta_k (H_{s_{k-1}}(\bdbeta_k ) - h(s_{k-1}))
$ and
$\mathbb{P}_{\bdbeta,s}^{\boldsymbol\Delta}  $ is the probability measure generated by the non
homogeneous Markov chain $((\bdbeta_k,s_k))_k$ started from the
initial condition $(\bdbeta,s) $;

\item[2.]
there exists a constant $C$ such that for any $(\bdbeta,s)\in  X ×
\Kapa$
\begin{eqnarray*}
  \label{eq:5.2.2}
  \mathbb{P}_{\bdbeta,s}^{\boldsymbol\Delta} ( \nu(\boldsymbol\varepsilon) < \si(\Kapa) ) \leq C \left\{
    \sum\limits_{k=l}^\infty (\Delta_k \varepsilon_k^{-1} )^p
\right\}   V^{pq}(\bdbeta) \, .
\end{eqnarray*}

\end{description}
\end{proposition}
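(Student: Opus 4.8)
The plan is to control the noise sum $S_{l,n}$ through the Poisson equation supplied by (\textbf{A3}'). Since $h(s)=\pi_s(H_s)$, the equation $g_s-\ntrans_s g_s = H_s-\pi_s(H_s)$ lets me rewrite each increment as $H_{s_{k-1}}(\bdbeta_k)-h(s_{k-1}) = g_{s_{k-1}}(\bdbeta_k)-\ntrans_{s_{k-1}}g_{s_{k-1}}(\bdbeta_k)$. First I would observe that on the event $\{\si(\Kapa)\land\nu(\bfe)\geq n\}$ one has $\mathds{1}_{\si(\Kapa)\land\nu(\bfe)\geq k}=1$ for every $k\leq n$, so that, writing $\tau=\si(\Kapa)\land\nu(\bfe)$, one gets $\|S_{l,n}\|\leq \|\sum_{k=l}^n\Delta_k\mathds{1}_{\tau\geq k}(g_{s_{k-1}}(\bdbeta_k)-\ntrans_{s_{k-1}}g_{s_{k-1}}(\bdbeta_k))\|$; this converts the $\sup_n$ into a supremum of partial sums of a stopped process to which the moment bound (\textbf{A3}')(iii), valid only on $\{\tau\geq k\}$, can be applied term by term. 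I then split the summand as $D_k + R_k$, where $D_k=g_{s_{k-1}}(\bdbeta_k)-\ntrans_{s_{k-1}}g_{s_{k-1}}(\bdbeta_{k-1})$ is a martingale difference for the natural filtration (because $\bdbeta_k\sim\ntrans_{s_{k-1}}(\bdbeta_{k-1},\cdot)$ and $\{\tau\geq k\}$ is measurable at step $k-1$), and $R_k=\ntrans_{s_{k-1}}g_{s_{k-1}}(\bdbeta_{k-1})-\ntrans_{s_{k-1}}g_{s_{k-1}}(\bdbeta_k)$.

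The term $R_k$ is treated by summation by parts: setting $v_k=\ntrans_{s_k}g_{s_k}(\bdbeta_k)$, one has $R_k=(v_{k-1}-v_k)+W_k$ with $W_k=\ntrans_{s_k}g_{s_k}(\bdbeta_k)-\ntrans_{s_{k-1}}g_{s_{k-1}}(\bdbeta_k)$, and an Abel transform produces the boundary terms $\Delta_l v_{l-1}$, $\Delta_n v_n$ together with a variation term $\sum(\Delta_{k+1}-\Delta_k)v_k$. Each piece is then estimated in $L^p$. For the martingale $\sum_k\Delta_k\mathds{1}_{\tau\geq k}D_k$ I would apply the Burkholder--Davis--Gundy inequality followed by Minkowski in $L^{p/2}$, bounding $\mathbb{E}[\mathds{1}_{\tau\geq k}|D_k|^p]$ by $C\,V^{pq}(\bdbeta)$ via (\ref{A2i2}) and (\textbf{A3}')(iii) (here $V\geq1$, so $V^p\leq V^{pq}$); this yields a $(\sum_k\Delta_k^2)^{p/2}$ factor. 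The boundary and variation terms, bounded by $C_\Kapa V(\bdbeta_k)$ on $\{\tau\geq k\}$ through (\ref{A2i2}), contribute $\sum_{k\geq l}\Delta_k^p$ and $\Delta_l^p$, both dominated by $(\sum_{k\geq l}\Delta_k^2)^{p/2}$ (using $p\geq2$ and the monotonicity of $\boldsymbol\Delta$). The drift term $\sum_k\Delta_k\mathds{1}_{\tau\geq k}W_k$ is where the weakened regularity enters: (\textbf{A3}')(ii) gives $|W_k|\leq C\|s_k-s_{k-1}\|^a V^q(\bdbeta_k)$, and on $\{\tau\geq k\}$ one has $\|s_k-s_{k-1}\|\leq\varepsilon_k$, so Minkowski in $L^p$ with (\textbf{A3}')(iii) produces the $(\sum_k\Delta_k\varepsilon_k^a)^p$ factor. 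Collecting the pieces, splitting $\delta$ among the finitely many types of terms and applying Markov's inequality with exponent $p$ gives the estimate of part 1.

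For part 2, I would use that a large step can occur only while the chain is still inside $\Kapa$: on $\{\si(\Kapa)\geq k\}$ one has $s_{k-1}\in\Kapa$, so (\ref{A2i1}) yields $\|s_k-s_{k-1}\|=\Delta_{k-1}\|H_{s_{k-1}}(\bdbeta_k)\|\leq C_\Kapa\Delta_{k-1}V(\bdbeta_k)$. Hence $\{\nu(\bfe)<\si(\Kapa)\}\subset\bigcup_k\{\,\|s_k-s_{k-1}\|\geq\varepsilon_k,\ \tau\geq k\,\}$, and on each of these events $V(\bdbeta_k)\geq \varepsilon_k/(C_\Kapa\Delta_{k-1})$. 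A union bound, Markov's inequality with exponent $p$, and a final appeal to (\textbf{A3}')(iii) bounding $\mathbb{E}[V^{p}(\bdbeta_k)\mathds{1}_{\tau\geq k}]\leq C\,V^{pq}(\bdbeta)$ then give the claimed $\sum_k(\Delta_k\varepsilon_k^{-1})^p\,V^{pq}(\bdbeta)$ bound.

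The main obstacle is the bookkeeping of the two exponents $q$ and $p$, which is exactly the point of departure from Proposition 5.2 of \cite{andrieumoulinespriouret}. Because the Poisson solutions are only H\"older in $s$ with respect to the heavier norm $\|\cdot\|_{V^q}$ rather than $\|\cdot\|_V$, the drift term unavoidably carries a factor $V^q$, and after raising to the $p$-th power one needs the full $V^{pq}$-moment. The delicate part is therefore to organise every estimate so that no step requires a higher moment than the uniform bound (\textbf{A3}')(iii) on $\mathbb{E}[V^{pq}(\bdbeta_k)\mathds{1}_{\tau\geq k}]$ actually supplies, and to justify moving the stopping indicator inside each sum so that this bound, which holds only on $\{\tau\geq k\}$, is applicable term by term.
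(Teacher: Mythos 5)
Your proposal is correct and follows essentially the same route as the paper: the martingale-plus-remainder splitting with Abel summation reproduces exactly the five-term decomposition $T_n^{(1)},\dots,T_n^{(5)}$ used there (Burkholder/Minkowski for the martingale part, the H\"older condition (\textbf{A3'})(ii) combined with $\|s_k-s_{k-1}\|\leq\varepsilon_k$ for the drift part, and the vanishing of the last term on $\{\si(\Kapa)\land\nu(\bfe)\geq n\}$), and you correctly identify that the weakened H\"older regularity is what forces the $V^{pq}$-moment bound of (\textbf{A3'})(iii). Your part 2 argument is also the intended one, which the paper only sketches as "the same trick."
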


\begin{proof}
The proof of this proposition can proceed as in
\cite{andrieumoulinespriouret} except for the upper bound on the term
involving the Hölder property (second term in the following).
Under  (\textbf{A3'(ii)}), this
upper bound brings into play an exponent $pq$ on the function $V$.

Indeed, rewrite $ S_{1,n}(\boldsymbol\varepsilon,\boldsymbol\Delta,\Kapa)$ using the Poisson
equation and decompose it into a sum of the following five terms~:
\begin{eqnarray}
  \label{eq:Tn1-5}
  T_n^{(1)} & = & \sum\limits_{k=1}^{n} \Delta_k (g_{s_{k-1}} (\bdbeta_k)
  - \ntrans_{s_{k-1}}g_{s_{k-1}} (\bdbeta_{k-1})) \mathds{1}_{ \{\sigma(\Kapa)
 \land  \nu(\bfe)\geq k \}} \\
 T_n^{(2)} & = & \sum\limits_{k=1}^{n-1}  \Delta_{k+1}  (\ntrans_{s_k} g_{s_{k}}(\bdbeta_k)
   -\ntrans_{s_{k-1}}  g_{s_{k-1}}(\bdbeta_k) )  \mathds{1}_{ \{\sigma(\Kapa)
 \land  \nu(\bfe)\geq k+1\}}  \\
 T_n^{(3)} & = & \sum\limits_{k=1}^{n-1}  (\Delta_{k+1}  - \Delta_{k}
 )  \ntrans_{s_{k-1}}  g_{s_{k-1}}(\bdbeta_k)  \mathds{1}_{ \{\sigma(\Kapa)
 \land  \nu(\bfe)\geq k+1\}}  \\
 T_n^{(4)} & = &  \Delta_1 \ntrans_{s_0} g_{s_0}(\bdbeta_0)
 \mathds{1}_{ \{\sigma(\Kapa)
 \land  \nu(\bfe)\geq 1 \}}- \Delta_n \ntrans_{s_{n-1}}
g_{s_{n-1}}(\bdbeta_n) \mathds{1}_{ \{\sigma(\Kapa)
 \land  \nu(\bfe)\geq n \}} \\
 T_n^{(5)} & = & -\sum\limits_{k=1}^{n-1}  \Delta_{k}
 \ntrans_{s_{k-1}} g_{s_{k-1}}(\bdbeta_k) \mathds{1}_{ \{\sigma(\Kapa)
 \land  \nu(\bfe)= k\}} \,.
\end{eqnarray}

We evaluate bounds for the first four quantities. Using the Minkowski
inequality for $p/2\geq 1$ and the Burkholder inequality (for
$T^{(1)}_n$) we have~:
\begin{eqnarray}
  \label{eq:MajTn1-5}
\sup\limits_{s\in \mathcal{S}} \mathbb{E} ^{\boldsymbol\Delta} _{\bdbeta_0,s}
\left[ \sup\limits_{n\geq 0} \left\| T^{(1)} _n \right\|^p \right] &\leq& C
\left(
\sum\limits_{k=1}^\infty \Delta_k ^2
\right)^{p/2} \sup\limits_{s\in \mathcal{S}} \sum\limits_{k} \mathbb{E}
^{\boldsymbol\Delta} _{\bdbeta_0,s} \left[ V^{p} (\bdbeta_k) \mathds{1}_{
    \{\sigma(\Kapa)  \land  \nu(\bfe)\geq k \}} \right]\,, \\
\sup\limits_{s\in \mathcal{S}} \mathbb{E} ^{\boldsymbol\Delta} _{\bdbeta_0,s}
\left[ \sup\limits_{n\geq 0} \left\| T^{(2)} _n \right\|^p \right] &\leq& C
\left(
\sum\limits_{k=1}^\infty \Delta_k \varepsilon_k^\alpha
\right)^{p} \sup\limits_{s\in \mathcal{S}} \sum\limits_{k} \mathbb{E}
^{\boldsymbol\Delta} _{\bdbeta_0,s} \left[ V^{pq} (\bdbeta_k) \mathds{1}_{
    \{\sigma(\Kapa)  \land  \nu(\bfe)\geq k \}} \right]\,, \\
\sup\limits_{s\in \mathcal{S}} \mathbb{E} ^{\boldsymbol\Delta} _{\bdbeta_0,s}
\left[ \sup\limits_{n\geq 0} \left\| T^{(3)} _n \right\|^p \right] &\leq& C
\Delta_1 ^p
\sup\limits_{s\in \mathcal{S}} \sum\limits_{k} \mathbb{E}
^{\boldsymbol\Delta} _{\bdbeta_0,s} \left[ V^{p} (\bdbeta_k) \mathds{1}_{
    \{\sigma(\Kapa)  \land  \nu(\bfe)\geq k \}} \right] \,,\\
\sup\limits_{s\in \mathcal{S}} \mathbb{E} ^{\boldsymbol\Delta} _{\bdbeta_0,s}
\left[ \sup\limits_{n\geq 0} \left\| T^{(4)} _n \right\|^p \right] &\leq& C
\left(
\sum\limits_{k=1}^\infty \Delta_k ^2
\right)^{p/2} \sup\limits_{s\in \mathcal{S}} \sum\limits_{k} \mathbb{E}
^{\boldsymbol\Delta} _{\bdbeta_0,s} \left[ V^{p} (\bdbeta_k) \mathds{1}_{
    \{\sigma(\Kapa)  \land  \nu(\bfe)\geq k \}} \right]\,.
\end{eqnarray}
where $C$ is a constant which depends only upon the compact
set $\Kapa$.
The higher power $p q$ appears because of the Hölder condition we
assume on the solution of the Poisson equation.

Since now $T^{(5)}_n \mathds{1}_{
    \{\sigma(\Kapa)  \land  \nu(\bfe)\geq n \}} =0 $ and noting that
  $V(\bdbeta)\geq 1, \ \forall \bdbeta \in X$, we have
$V(\bdbeta)^p\leq V^{pq}(\bdbeta)$. Applying successively (as in
\cite{andrieumoulinespriouret}) the Markov inequality,
condition \eqref{eq:majPQ}  and the Markov property to these
upper bounds  concludes the  proof of the first part of
Proposition \ref{prop1}.\\

Concerning the second part, it follows from the same trick as above for
upper-bounding the expectation of $V^p$ by $V^{pq}$.

This ends the proof of the proposition.
\end{proof}
\vspace{2mm}

It is now straightforward to prove the following
proposition which corresponds to Proposition $5.3$ in
\cite{andrieumoulinespriouret}.
\vspace{2mm}

\begin{proposition}
  Assume (\textbf{A3}') and (\textbf{A4}). Then, for any subset
  $\mathrm{K} \subset  X  $ such that $\sup\limits_{\bdbeta\in \mathrm{K}
  } V(\bdbeta)<\infty $, any $M \in (M_0,M_1]$ and any $ \delta>0$, we have
$\lim\limits_{k\to\infty} A(\delta,\bfe^{\gets k} , M,\boldsymbol\Delta^{\gets k} ) = 0 $
where $\bfe^{\gets k}$ stands for the sequence $\bfe$ delayed by $k$
switches ($\bfe^{\gets k} _l = \bfe_{k+l}$ for all $l\in \N$) and
\begin{multline*}
A(\delta,\bfe , M,\boldsymbol\Delta )= \sup\limits_{s\in\Kapa_0}
\sup\limits_{\bdbeta\in\mathrm{K} }
\left\{
  \mathbb{P}_{\bdbeta,s}^{\boldsymbol\Delta}  \left(
\sup\limits_{k\geq 1} \|S_{1,k}(\bfe,\boldsymbol\Delta,\mathcal{W}_M)\| \geq \delta
\right) +  \mathbb{P}_{\bdbeta,s}^{\boldsymbol\Delta} \left( \nu(\bfe) < \si(
  \mathcal{W}_M) \right)
\right\} \, .
\end{multline*}
\end{proposition}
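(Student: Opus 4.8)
The plan is to obtain the claim as an essentially direct consequence of Proposition \ref{prop1}, applied with the two \emph{delayed} sequences $\bfe^{\gets k}$ and $\boldsymbol\Delta^{\gets k}$ in place of $\bfe$ and $\boldsymbol\Delta$, and then to let $k\to\infty$. First I would record that the delayed data still satisfy the hypotheses of Proposition \ref{prop1}: for each fixed $k$ the sequence $\boldsymbol\Delta^{\gets k}=(\Delta_{k+l})_l$ is positive and non-increasing, and $\bfe^{\gets k}=(\varepsilon_{k+l})_l$ is positive, non-increasing and still tends to $0$ as $l\to\infty$. Since $s_0\in\Kapa_0\subset\mathcal{W}_{M_0}\cap\mathcal{S}_a$ and $\mathcal{S}_a$ is absorbing by (\textbf{A1'}(ii)), the whole trajectory $(s_k)_k$ remains in $\mathcal{S}_a$; hence along the paths considered $\si(\mathcal{W}_M)=\si(\mathcal{W}_M\cap\mathcal{S}_a)$ and I may invoke Proposition \ref{prop1} with the genuinely compact set $\Kapa=\mathcal{W}_M\cap\mathcal{S}_a$ (compact for $M\leq M_1$ by (\textbf{A1'}(ii))). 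Finally, because $\varepsilon_k\to0$ by (\textbf{A4}), for all sufficiently large $k$ one has $\varepsilon_{k+l}\leq\bar\varepsilon$ for every $l\geq0$, so the tail requirement of (\textbf{A3'}(iii)) feeding \eqref{eq:majPQ} into Proposition \ref{prop1} holds along the shifted sequence.

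Next I would simply re-index. Writing $\Delta^{\gets k}_l=\Delta_{k+l}$ and $\varepsilon^{\gets k}_l=\varepsilon_{k+l}$, the three series appearing in the two bounds of Proposition \ref{prop1} become tails of the original series, namely $\sum_{m>k}\Delta_m^2$, $\sum_{m>k}\Delta_m\varepsilon_m^a$ and $\sum_{m>k}(\Delta_m\varepsilon_m^{-1})^p$. Applying Parts 1 and 2 of Proposition \ref{prop1} and summing the two estimates gives, for every $(\bdbeta,s)\in\mathrm{K}\times\Kapa_0$, a bound on the integrand defining $A(\delta,\bfe^{\gets k},M,\boldsymbol\Delta^{\gets k})$ of the form $C\,V^{pq}(\bdbeta)$ times a bracket containing exactly these three tails (with the powers $p/2$, $p$ and $1$ and the factor $\delta^{-p}$). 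Since $V\geq1$ and $\sup_{\bdbeta\in\mathrm{K}}V(\bdbeta)<\infty$ by assumption, the factor $V^{pq}(\bdbeta)$ is bounded uniformly over $\mathrm{K}$ by $\sup_{\mathrm{K}}V^{pq}<\infty$; and as these right-hand sides do not depend on $s$, the supremum over $s\in\Kapa_0$ is harmless. By (\textbf{A4}) each of the three series converges, so each tail tends to $0$ as $k\to\infty$, whence $A(\delta,\bfe^{\gets k},M,\boldsymbol\Delta^{\gets k})\to0$.

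The one point that is not purely mechanical — and the reason I regard the result as ``straightforward'' rather than immediate — is the uniformity in $k$ of the constant produced by Proposition \ref{prop1}. I would verify that this constant depends only on the fixed compact set $\mathcal{W}_M\cap\mathcal{S}_a$ (through the moment control \eqref{eq:majPQ} together with the $V$-norm and H\"older bounds of (\textbf{A3'})) and is insensitive both to the shift and to the particular step-size and threshold sequences used; this is precisely what the $s$-free, sequence-free form of the constant in Proposition \ref{prop1} delivers. Granting this, the same $C$ serves for every $k$, all the $k$-dependence is carried by the tails, and the limit follows. I expect no further obstacle, the delay operation being tailored so that the convergent-series conditions of (\textbf{A4}) are exactly what force the shifted bounds to vanish.
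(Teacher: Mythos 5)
Your argument is correct and is exactly the route the paper intends: the paper gives no written proof, stating only that the result is ``straightforward'' from Proposition \ref{prop1} (its analogue of Proposition 5.2 of Andrieu et al.), and your filling-in --- applying Proposition \ref{prop1} to the delayed sequences over the compact set $\mathcal{W}_M\cap\mathcal{S}_a$, using the absorbing property from (\textbf{A1'}(ii)) to justify that substitution, bounding $V^{pq}$ uniformly on $\mathrm{K}$, and letting the convergent-series tails from (\textbf{A4}) vanish --- is precisely that argument. Your attention to the uniformity of the constant in $k$ and to the eventual bound $\varepsilon_{k+l}\leq\bar\varepsilon$ needed for \eqref{eq:majPQ} supplies the only non-mechanical points correctly.
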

The convergence of the sequence $(s_k)_k$ follows from the proof of
Theorem $5.5$ of \cite{andrieumoulinespriouret} which states the almost
sure convergence due to the previous propositions.
\end{proof}

\begin{rem}
  We can weaken the condition on $p$ given in (\textbf{A3'}).
Indeed, we can assume that
  (\textbf{A3'}) holds for any $p> 0$ as soon as at least condition
  \eqref{eq:majPQ} is true also for power $2$ of $V$.
This is needed in the
  proof
  while giving an upper bound for all the $T_n$'s using the Jensen
  inequality instead of the Minkowski inequality as in
  \cite{andrieumoulinespriouret}. In this case, the assumption
  (\textbf{A4}) would have to be satisfied for a power $\max(p,2)$
  instead of power $2$.
\end{rem}

\subsection{Convergence Theorem for Dense Deformable Template
  Model}\label{ConvTheo}
We now give the convergence result of our estimation process which is
an application of the previous theorem. In this section, we assume
that $\si^2$ is fixed which reduces $\theta$ to $(\alpha, \Gamma_g)$. In fact,
due to the implicit definition of $\hat\theta$ given in equation
\eqref{PhotoUpdateClust}, we were not able to prove the smoothness of
the inverse of the function $s\mapsto \hat\theta (s)$ which is straightforward
for fixed $\si^2$.

We can easily exhibit some of the functions involved in our
procedure. Comparing equation \eqref{eq:sbar} to equation
\eqref{eq:ASsk}, we
have
\begin{equation}
  \label{eq:H}
  H_s(\bdbeta) = S(\bdbeta) -s\,.
\end{equation}
Equation \eqref{PhotoUpdateClust} gives the existence of the function
$s\to \hat\theta(s)$.
We  denote by $l$ the observed log-likelihood~:
$l(\theta)\triangleq \log \int \qcomp(\bdy,\bdbeta,\te)d\bdbeta $, and let $w(s)\triangleq -l\circ
\hat{\theta}(s)$ and $h(s)\triangleq \int H_s(
\bdbeta)\qpost(\bdbeta|\bdy, \hat\theta(s)) d\bdbeta$ for
$s\in \mathcal{S}$.

\begin{theorem}
  \label{th:condition}
The sequence of  stochastic approximations $(s_k)_k$ related to the
model defined
in Section
\ref{obs}  and generated by Algorithms \ref{AlgoMoulines} and
\ref{Algo1}
 satisfies
the assumptions (\textbf{A1}') (\textbf{ii}), (\textbf{iii}),
(\textbf{iv}),(\textbf{A2})  and
(\textbf{A3'}).
\end{theorem}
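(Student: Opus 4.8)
The plan is to verify each listed assumption separately, exploiting the explicit exponential-family structure of the model. I would organize the proof as a sequence of lemmas, one per assumption, since the conditions are largely independent in character: \textbf{(A1')} concerns the deterministic mean-field geometry, \textbf{(A2)} concerns ergodicity of the hybrid Gibbs kernel, and \textbf{(A3')} concerns quantitative control of the Poisson equation solution. Throughout I would work with the explicit form of $H_s(\bdbeta)=S(\bdbeta)-s$ from \eqref{eq:H} and the closed-form update $\hat\theta(s)$ from \eqref{PhotoUpdateClust}, which is where the reduction to fixed $\sigma^2$ (so that $\theta=(\alpha,\Gamma_g)$) becomes essential for smoothness of $s\mapsto\hat\theta(s)$.

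First I would treat the deterministic assumption \textbf{(A1')}. With $w=-l\circ\hat\theta$ the natural Lyapunov function, the identity $h(s)=\int(S(\bdbeta)-s)\qpost(\bdbeta|\bdy,\hat\theta(s))\,d\bdbeta$ should relate $\nabla w$ to $h$ via the standard EM/Fisher-identity calculation for curved exponential families, so that the critical set $\mathcal{L}$ coincides with the stationary points of the observed likelihood and \textbf{(iii)} follows from a strict-descent computation. For \textbf{(ii)} the main task is to construct an absorbing closed convex set $\mathcal{S}_a$: I would exhibit it directly from the definition of $\mathcal{S}$, using that the priors force $S_2+\sigma_0^2\Sigma_p^{-1}$ and $S_3+a_g\Sigma_g$ to stay in the positive-definite cones with uniformly controlled smallest eigenvalue, and check that the convex combination $s+\rho(S(\bdbeta)-s)$ preserves these constraints for $\rho\in[0,1]$. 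Compactness of $\mathcal{W}_{M_1}\cap\mathcal{S}_a$ and the empty-interior condition \textbf{(iv)} should follow from properness and analyticity/Sard-type arguments on $w$.

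For \textbf{(A2)} I would invoke the construction of the hybrid Gibbs sampler in Section \ref{HGS}: each coordinate kernel $\ntrans_{\theta,j}$ in \eqref{eq:kernelj} is a Metropolis-within-Gibbs step leaving $\qpost(\cdot|\bdy,\theta)$ invariant, hence the full scan $\Pi_\theta=\Pi_{\theta,N}\circ\cdots\circ\Pi_{\theta,1}$ has $\pi_s=\qpost(\cdot|\bdy,\hat\theta(s))$ as stationary distribution; uniqueness follows from irreducibility and aperiodicity, which hold because the Gaussian proposals $\qj$ charge all of $\R$. Integrability of $H_s$ against $\pi_s$ reduces to finiteness of the second moment of $\bdbeta$ under the posterior, which is immediate from the Gaussian-times-bounded structure.

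The hard part will be \textbf{(A3')}, and this is exactly where the paper's relaxation of the Andrieu--Moulines--Priouret conditions is needed. I expect to establish a Foster--Lyapunov drift condition $\ntrans_s V\le\lambda V+b$ with $V(\bdbeta)=1+\|\bdbeta\|^2$ (or a suitable exponential moment) uniformly over compacts in $s$, yielding both the Poisson solution $g_s$ and the norm bounds \eqref{A2i1}--\eqref{A2i2} via standard geometric-ergodicity machinery. The genuinely delicate points are the Hölder regularity \textbf{(iii)} in $s$ of $g_s$ and $\ntrans_s g_s$ in the weighted norm $\|\cdot\|_{V^q}$, and the uniform moment bound \eqref{eq:majPQ}: because the acceptance rate $r_j$ depends on $\theta=\hat\theta(s)$ through the full nonlinear likelihood $\qobs$, the usual Lipschitz-in-$s$ control of the kernel fails, and only a weaker Hölder estimate with the inflated exponent $V^{pq}$ is available—precisely what \textbf{(A3'(ii))} was weakened to accommodate. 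I would therefore spend the bulk of the effort bounding $\|\ntrans_s-\ntrans_{s'}\|$ in the appropriate weighted operator norm by $\|s-s'\|^a$ times a power of $V$, controlling the dependence of $r_j$ and of the Gaussian proposal covariance $\Gamma_g(s)$ on $s$, and then propagating this through the resolvent representation of $g_s$.
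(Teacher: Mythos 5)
Your outline coincides with the paper's own proof in structure and in all the key ideas: the absorbing convex set built from the boundedness and positivity of $S(\bdbeta)$ together with coercivity of $w$ on it for (\textbf{A1'(ii)}), the EM/Fisher-identity treatment of (iii)--(iv), a minorization-plus-drift argument for the hybrid Gibbs kernel for (\textbf{A2}), and for (\textbf{A3'}) a geometric-ergodicity construction of $g_s$ combined with a Lipschitz estimate on $s\mapsto\ntrans_{\hat\theta(s)}$ that loses a power of $V$ (the paper's Lemma \ref{lem:Holder} controls the difference in $\|\cdot\|_{V^{p+1/2}}$ against $\|f\|_{V^p}$), which is exactly why the weakened H\"older condition with the inflated exponent is needed. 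The only detail worth flagging is that the one-step drift $\ntrans_s V\le\lambda V+b$ does not hold directly for $V(\bdbeta)=1+\|\bdbeta\|^2$: the paper first proves it for the $\theta$-adapted norm $V_\theta(\bdbeta)=1+\|\bdbeta\|_\theta^2$, where the Gibbs proposal acts as an orthogonal projection, and then transfers it to $V$ only for the $m$-step kernel with $m$ chosen large enough to absorb the norm-equivalence constants.
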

\vspace{2mm}

\begin{proof}
  The details of the proof  are given in appendix  (Section
  \ref{appendix}).
\end{proof}
\vspace{2mm}

\begin{coro}[Convergence of Dense Deformable Template building
  via Stochastic Approximation]\label{convcoup}
\quad \\
Assume
\begin{enumerate}
\item  there exist  $p\geq 1$ and  $a\in]0,1[$ such that the sequences
  $\boldsymbol\Delta=(\Delta_k)_{k\geq 0} $ and
  $\boldsymbol\varepsilon=(\varepsilon_k)_{k\geq 0} $   are non-increasing,
  positive and satisfy:\\ $\sum\limits_{k=0}^\infty \Delta_k =\infty$,
$\lim\limits_{k\to\infty} \varepsilon_k =0$ and
$ \sum\limits_{k=1}^\infty \{
\Delta_k^2 +
\Delta_k \varepsilon_k^a +
(\Delta_k \varepsilon_k^{-1})^p  \}
 <\infty$;
\item  $\mathcal{L} \triangleq \left\{
s \in \mathcal{S} , \left\langle
\nabla w(s), h(s) \right\rangle =0
\right\}$ is included in a level set of $w$.
\end{enumerate}
\vspace{2mm}
Let $\mathrm{K}$ be a compact subset of $\R^N$ and $\Kapa_0$ a compact
subset of $ S(\mathbb{R}^N)$.

 Let   $(s_k)_{k\geq 0}$ and $(\theta_k)_{k\geq 0}$ be the two sequences defined
 in Algorithms     \ref{AlgoMoulines}  and
\ref{Algo1}.
We denote by
$\mathcal{L}'\triangleq
  \{ \theta\in\hat{\theta}(\mathcal{S}),  \frac{\partial l}{\partial
    \theta}(\te)=0\}$, then $\hat{\theta}(\mathcal{L})=\mathcal{L}'$
  and $$\lim\limits_{k\to
  \infty} d(\te_k,\mathcal{L}')=0 \hspace{5mm}\bar{\mathbb{P}}_{\bdbeta_0, s_0, 0,0,0}
-a.s.$$ for all $\bdbeta_0 \in \mathrm{K}$ and
    $s_0\in \Kapa_0$,
    where $ \bar{\mathbb{P}}_{\bdbeta_0, s_0, 0,0,0} $ is the probability measure
    associated with the chain $(Z_k=(\bdbeta_k,
    s_k,\kappa_k,\zeta_k, \nu_k))_{k\geq 0} $ starting at $(\bdbeta_0,s_0,0,0,0) $.
\end{coro}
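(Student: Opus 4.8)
The plan is to obtain the result as a direct application of Theorem~\ref{maintheo} to the specific stochastic approximation introduced just above, namely $H_s(\bdbeta)=\tS(\bdbeta)-s$, $w(s)=-l\circ\hat\theta(s)$ and $h(s)=\int H_s(\bdbeta)\qpost(\bdbeta|\bdy,\hat\theta(s))d\bdbeta$, followed by a transfer of the convergence from the statistics $s_k$ to the parameters $\te_k=\hat\theta(s_k)$ and an identification of the two limit sets. First I would collect the hypotheses of Theorem~\ref{maintheo}. Theorem~\ref{th:condition} already supplies (\textbf{A1'})(\textbf{ii}), (\textbf{iii}), (\textbf{iv}), (\textbf{A2}) and (\textbf{A3'}), while the regularity in the header of (\textbf{A1'}) ($\mathcal{S}$ open, $h$ continuous, $w$ continuously differentiable) follows from the openness of $\mathcal{S}$ established in Section~\ref{SAEM} together with the smoothness of $\hat\theta$ noted after \eqref{PhotoUpdateClust}. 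Assumption~(1) of the corollary is exactly (\textbf{A4}).

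The remaining condition (\textbf{A1'})(\textbf{i}) is obtained from assumption~(2): if $\mathcal{L}\subset\{w=c\}$ for some constant $c$, then choosing $M_0>\max(c,\sup_{\Kapa_0}w)$ — finite since $\Kapa_0$ is compact and $w$ is continuous — gives simultaneously $\mathcal{L}\subset\{w<M_0\}$ and $\Kapa_0\subset\mathcal{W}_{M_0}\cap\Sa$, the inclusion in $\Sa$ being guaranteed by the construction of the absorbing set. Since $V$ is continuous and $\mathrm{K}$ compact, $\sup_{\bdbeta\in\mathrm{K}}V(\bdbeta)<\infty$. All hypotheses of Theorem~\ref{maintheo} being met, I conclude $\lim_{k\to\infty}d(s_k,\mathcal{L})=0$ almost surely. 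To pass to $\te_k=\hat\theta(s_k)$, I would use that the stability part underlying Theorem~\ref{maintheo} (the analogue of Theorem~2.2 of \cite{andrieumoulinespriouret}) forces $(s_k)_k$ to remain, almost surely and after finitely many truncations, in a fixed compact set of the form $\mathcal{W}_{M_1}\cap\Sa$ that also contains $\mathcal{L}$. On this compact set $\hat\theta$ is uniformly continuous, so $d(s_k,\mathcal{L})\to0$ transfers to $d(\hat\theta(s_k),\hat\theta(\mathcal{L}))\to0$ by comparing $s_k$ with its nearest point in $\mathcal{L}$; hence $d(\te_k,\hat\theta(\mathcal{L}))\to0$ almost surely.

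The core of the argument — and the step I expect to be the main obstacle — is the identification $\hat\theta(\mathcal{L})=\mathcal{L}'$. I would exploit the curved exponential structure $\log\qcomp=-\psi(\te)+\langle\tS(\bdbeta),\phi(\te)\rangle$. Fisher's identity gives $\frac{\partial l}{\partial\theta}(\te)=-\nabla\psi(\te)+(\partial_\te\phi(\te))^t\int\tS(\bdbeta)\qpost(\bdbeta|\bdy,\te)d\bdbeta$, while the first-order condition of the M-step defining $\hat\theta(s)$ reads $\nabla\psi(\hat\theta(s))=(\partial_\te\phi(\hat\theta(s)))^t s$. Combining the two yields the key relation $\frac{\partial l}{\partial\theta}(\hat\theta(s))=(\partial_\te\phi(\hat\theta(s)))^t h(s)$. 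Differentiating the M-step condition in $s$ expresses the Jacobian as $\partial_s\hat\theta(s)=B(s)^{-1}(\partial_\te\phi(\hat\theta(s)))^t$, where $B(s)$ is minus the Hessian in $\te$ of $-\psi+\langle s,\phi\rangle$ at $\hat\theta(s)$, hence symmetric positive definite since $\hat\theta(s)$ maximizes that map. The chain rule then gives $\langle\nabla w(s),h(s)\rangle=-\big\langle B(s)^{-1}(\partial_\te\phi(\hat\theta(s)))^t h(s),\,(\partial_\te\phi(\hat\theta(s)))^t h(s)\big\rangle\le0$, which vanishes exactly when $(\partial_\te\phi(\hat\theta(s)))^t h(s)=0$, i.e. when $\frac{\partial l}{\partial\theta}(\hat\theta(s))=0$.

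This shows at once that $s\in\mathcal{L}\iff\hat\theta(s)\in\mathcal{L}'$, so $\hat\theta(\mathcal{L})=\mathcal{L}'$, and together with the transfer above it yields $d(\te_k,\mathcal{L}')\to0$ almost surely. The delicate points here are the justification of differentiation under the integral sign in Fisher's identity over the relevant compact parameter range and the invertibility of $B(s)$ uniformly over the recurrent compact set; both rest on the smoothness of $\hat\theta$ and the regularity already secured in the proof of Theorem~\ref{th:condition}.
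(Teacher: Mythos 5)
Your proposal is correct and follows essentially the same route as the paper: the paper's own proof simply invokes Theorem \ref{maintheo} together with Theorem \ref{th:condition} and cites Lemma 2 (equation 36) of \cite{DLM} for the relation between the convergence of $(s_k)_k$ and that of $(\theta_k)_k$. Your verification of the hypotheses, the transfer by uniform continuity of $\hat\theta$ on the recurrent compact set, and the identification $\hat\theta(\mathcal{L})=\mathcal{L}'$ via Fisher's identity and the first-order condition of the M-step are exactly the content of that cited lemma, worked out in detail.
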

\hspace{0.3cm}

\begin{proof}
We first  notice that, as mentioned in
  \cite{DLM} (Lemma 2 equation 36), since $\hat\theta$,
  $\phi$ and $  \psi$ are smooth functions, it is easy to relate the convergence of
  the stochastic approximation sequence $(s_k)_k$ to the convergence
  of the estimated parameter sequence $(\theta_k)_k$.

 Then the proof follows from the general stability result Theorem
  \ref{maintheo} stated in
  the subsection \ref{mainth} and from the previous theorem
  \ref{th:condition}.
\end{proof}
\begin{rem}
  Note that condition (1) is easily checked for
  $\Delta_k=k^{-c}$ and $\varepsilon_k=k^{-c'}$ with
  $1/2<c'<c<1$. However, condition (2) has not been successfully
  proved yet and should be relaxed in future work.
\end{rem}

\section{Experiments}\label{experiments}
To illustrate our stochastic algorithm for the deformable template
models, we consider handwritten digit images. For each digit class,
 we learn the template, the
corresponding noise variance and the geometric covariance matrices.
(Note that in this experiment the noise variance is no longer fixed and is estimated as the other parameters). We
use the US-Postal
database which contains a training set of around 7000 images.

Each picture is a $(16× 16)$ gray level image with intensity in
$[0,2]$ where $0$ corresponds to the black background. We will also use these
sets in the special case of a noisy setting by adding independent
centered Gaussian noise to  each image.

To be able to compare the results with the previous deterministic
algorithm proposed in \cite{AAT}, we use
the same samples.
In Figure \ref{fig-training20} below, we show some
of the training images.

\begin{figure}[htbp]\begin{center}
\includegraphics[width=6cm,height=4cm]{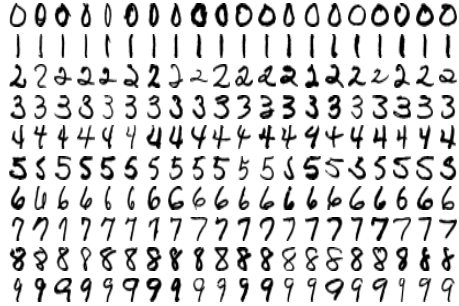}
\end{center}
\caption{Some images from the training set used for the estimation of the model
  parameters (inverse video).}
\label{fig-training20}
\end{figure}

\begin{figure}[h]
  \centerline{\epsfxsize=1.5cm
  \epsfbox{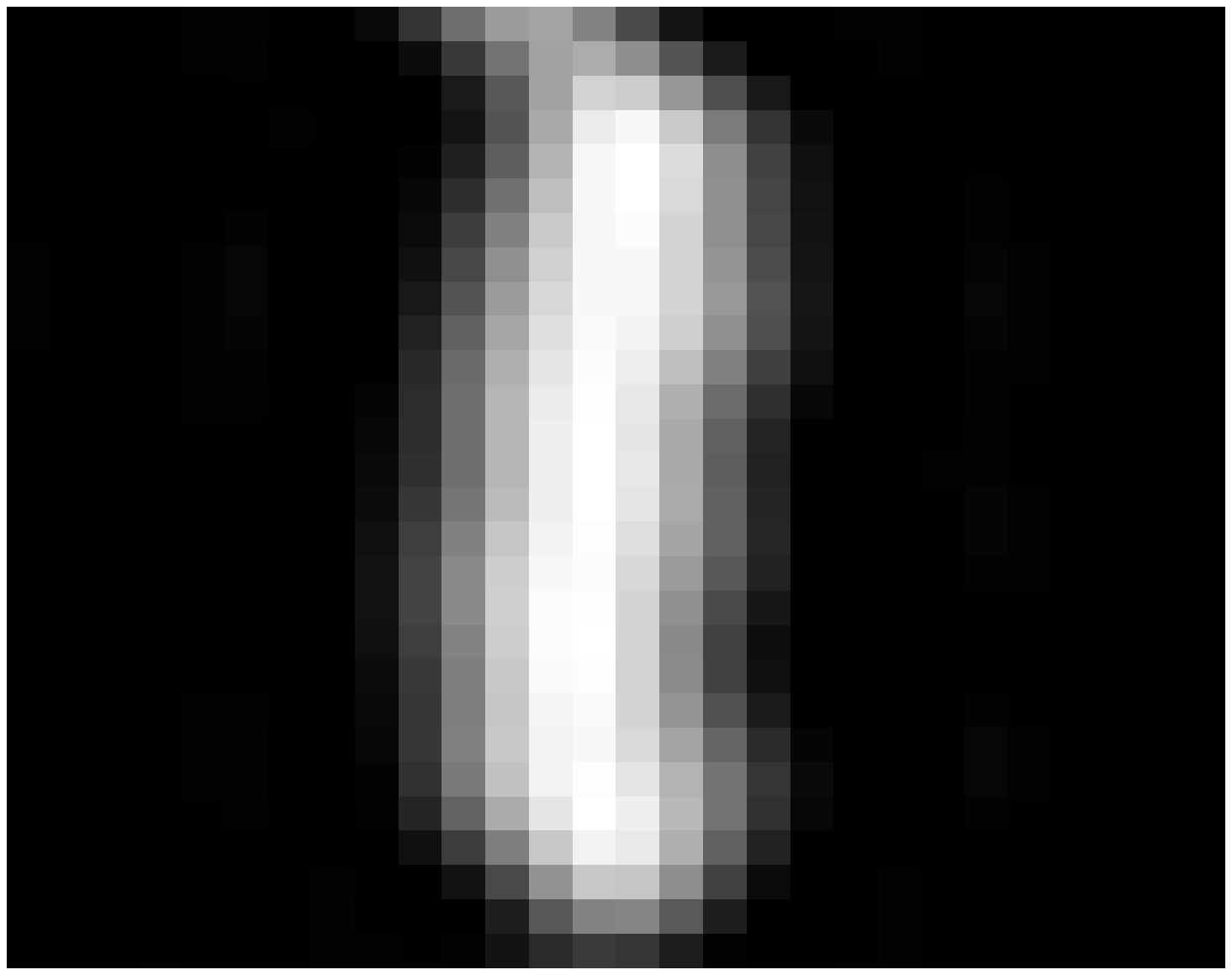} \hspace{0.5cm} \epsfxsize=1.5cm
  \epsfbox{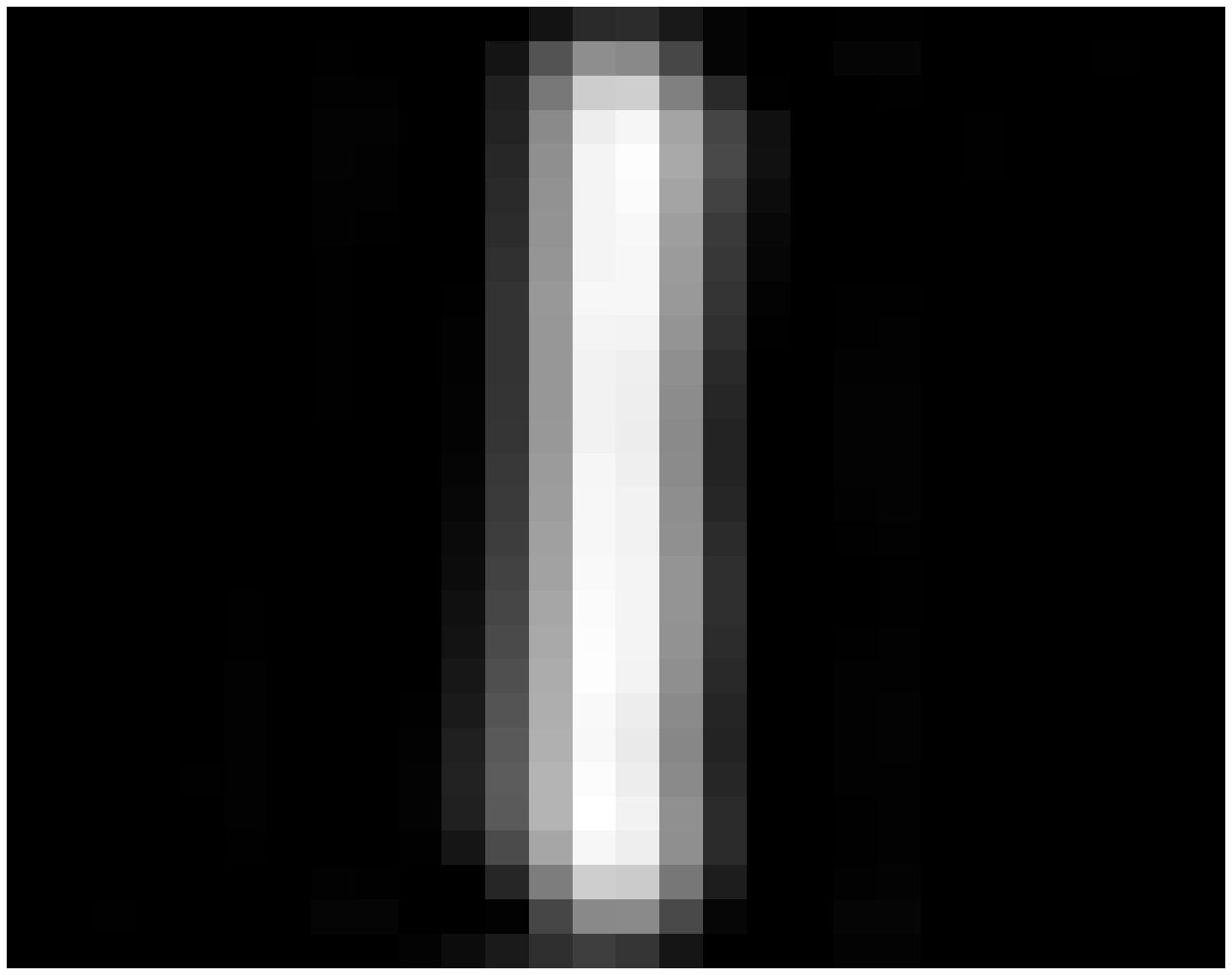} }
\caption{Estimated prototypes of digit 1 (20 images per class) for
  different hyper-parameters. Left: smoother geometry but larger
  photometric covariance in the spline kernel. Right: more rigid
  geometry and smaller photometric covariance.}
 \label{fig-templatedig1}
\end{figure}

\begin{figure}[htbp]\begin{center}
\includegraphics[width=7cm,height=0.8cm]{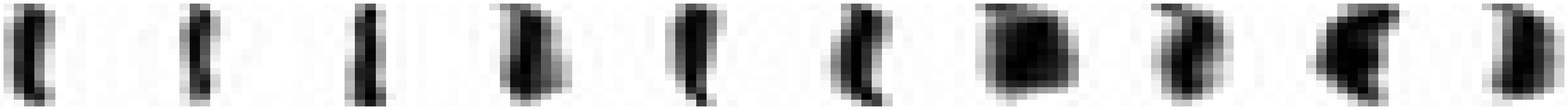}
  \includegraphics[width=7cm,height=0.8cm]{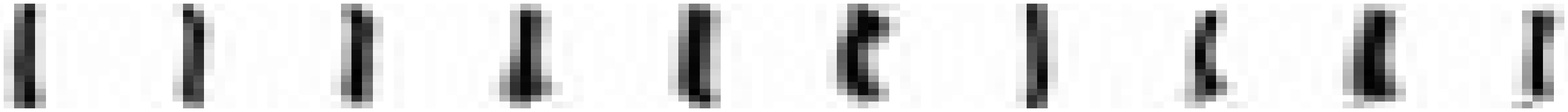}
\end{center}
\caption{Synthetic examples corresponding to the two previous
 estimated templates of digit 1 (inverse video). Left~: with a fatty
 shape. Right~: with a correct shape thickness. }
\label{fig-sampledig1}
\end{figure}

\begin{figure}[h]
  \centerline{\epsfxsize=4.5cm
  \epsfbox{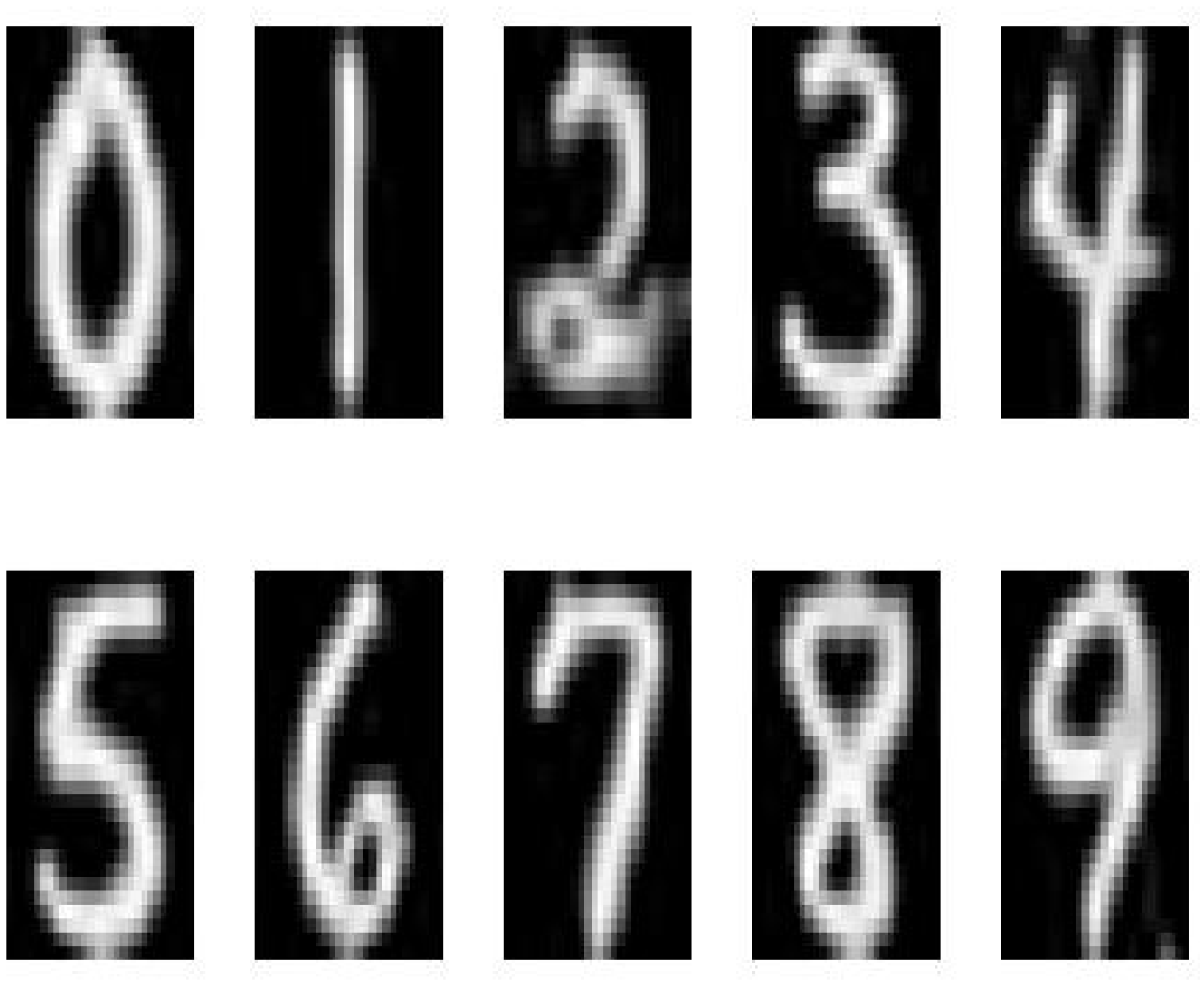} \hspace{0.5cm} \epsfxsize=4.5cm  \epsfbox{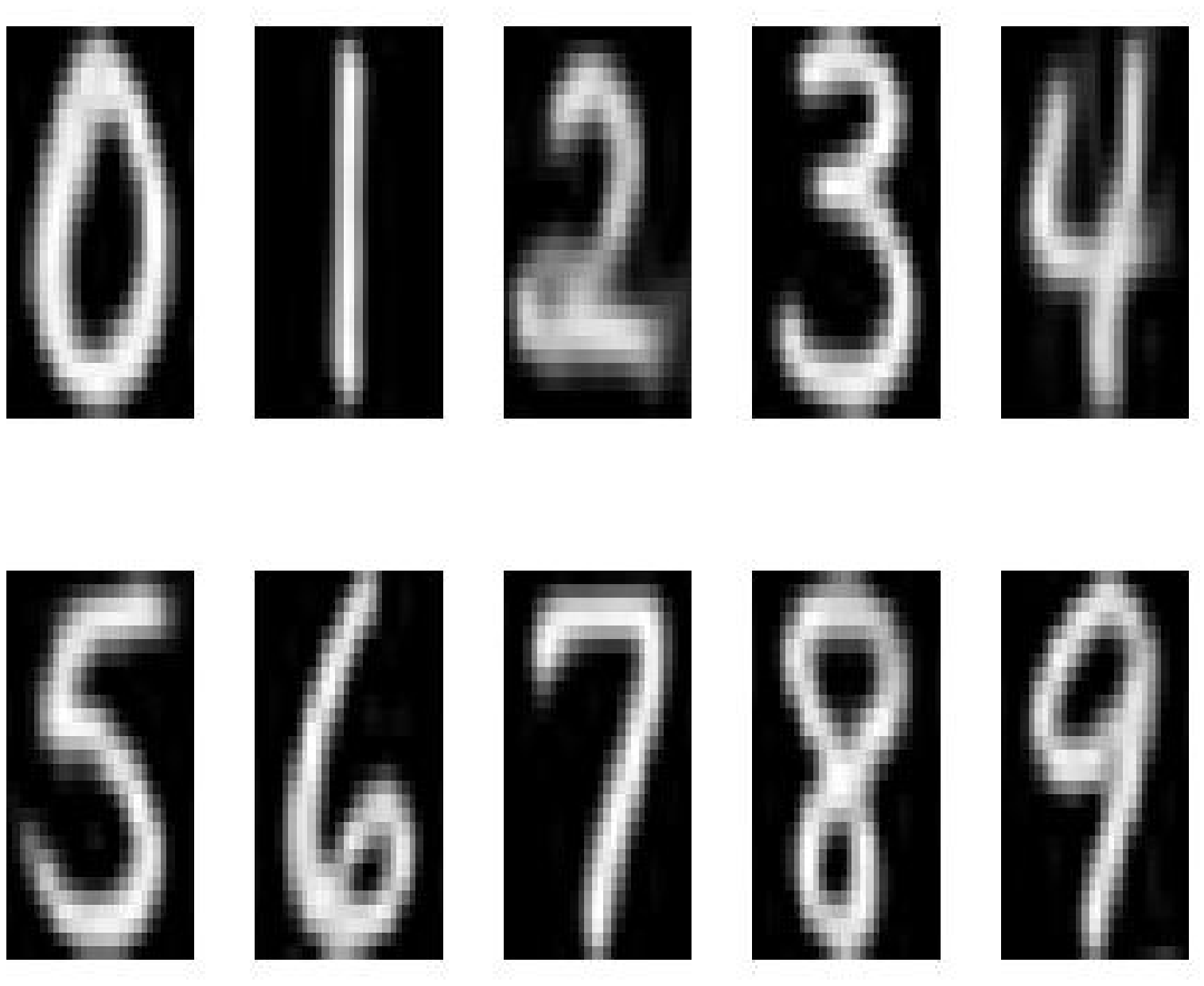}}
\caption{Estimated prototypes issued from left 10 images per class and
  right 20 images per class in the
  training set.}
 \label{fig-template}
\end{figure}

A natural choice for the hyper-parameters on $\alpha$ and $\Gamma_g$ is
$\mu_p=0$ and we
induce the two covariance matrices $\Sigma_p$ and $\Sigma_g$ by the metric of
the Hilbert spaces
$V_p$ and $V_g$ (defined in Section \ref{modeltempdef}) involving the
correlation between the landmarks determined by the kernel.
Define the square matrices
\begin{equation}\label{prior_cov}
\begin{array}{l}
M_p(k,k')=K_p(\pk,\pkp) \ \forall 1 \leq k,k'\leq k_p\\
M_g(k,k')=K_g(\gk,\gkp) \ \forall 1 \leq k,k'\leq k_g \, ,
\end{array}
\end{equation}
then $\Gammap = M_p^{-1}$ and  $\Gammag = M_g^{-1}$.
In our experiments, we have chosen Gaussian kernels for both $K_p$ and
$K_g$, where the standard deviations are fixed at $\sigma_p = 0.12$ and
 $\sigma_g=0.3$. The deformation is computed in the $[-1,1]^2 $ square with
 $k_g=6$ equi-distributed landmarks on this domain. The template has
 been estimated with $k_p=15$ equi-distributed control points on
 $[-1.5,1.5]^2$.

These two covariance matrices are important hyper-parameters; indeed,
 it has been shown in \cite{AAT} that changing the geometric
 covariance has an effect on the sharpness of the template
 images. As for  the photometric hyper-parameter,
 it affects both the template and the geometry in the sense that with
 a large variance, the kernel centered on one landmark spreads out
 to many of its neighbors. This leads to thicker shapes as
 shown in the left panel of Figure \ref{fig-templatedig1}. As
 a consequence, the template is biased: it is not ``centered'' in the
 sense  that
 the mean of the deformations required to fit the data is not close to
 zero. For example for digit ``1'', the main deformations should
 be contractions or dilations of the template. With a large
 variance $\si^2_p$, the template is thicker yielding larger
 contractions and smaller dilations. Since we have set a Gaussian
 law on the deformation variable $\beta$ and $z_{-\beta} =-z_\beta $,
 the deformations $(Id+z_\beta) $ and $(Id-z_\beta) $ have the same probability
 to be drawn under the estimated model.
As shown on synthetic
 examples given in Figure \ref{fig-sampledig1} left
 panel, there
 are many large dilated shapes. However, these examples were not in
 the training set and are not
 generated with other hyper-parameters (Figure
 \ref{fig-sampledig1} right panel). 
 We have tried different
 relevant values and kept the best with regard to the visual
 results. We present in the following only the results with the
 adapted variances.\\

For the stochastic approximation step-size, we allow a heating period
which corresponds to the absence of memory for the first
iterations. This allows the Markov chain to reach a region of interest
in the posterior probability density function  before
exploring this particular region.

In the experiments presented here, the heating time lasts  $k_h $ (up to
$150$) iterations and the whole algorithm stops after, at most, $200$
iterations depending on the data set (noisy or not). This number of
iterations corresponds to a point where the
convergence seems to have been reached. This yields:
\begin{equation*}
 \Delta_k = \left\{
\begin{array}{ll}
  1 \,, \ &\forall 1 \leq k \leq k_h\\
\frac{1}{(k-k_h)^d}\,, \ &\forall  k > k_h \ \text{ for } d=0.6 \text{ or } 1 \ .
\end{array}\right.
\end{equation*}

To optimism the choice of the transition kernel $\ntrans_\theta $, we have
run the
algorithm with different kernels and compared the evolution of the
simulated hidden variables as well as the results on the estimated
parameters. Some kernels, as the ones mentioned above do not yield good coverage
of the infinite
support of the unobserved variable. From this point of view the hybrid
Gibbs sampler we used has better properties and gives nice estimation
results which are presented below.

\subsection{Estimated Template}
We show here the results of the statistical learning algorithm
 for this model. Figure \ref{fig-template} shows two runs of the
 algorithm for a non-noisy database with $10$
 and $20$ images per class.
Ten images per class are enough to obtain
satisfactory template images with high contrast.

Although it was proved in \cite{AAT}
that the Kullback-Leibler divergence between $q(\cdot;\tilde\te)$ and the common
density function for observations from a given class
converges to its minimal value on the family $q(\cdot;\te)$, we note
that increasing the number of training images does not significantly
improve the estimated photometric template. This apparently surprising fact
can be explained as follows~: since strong variations
in appearance among the images may happened within a given class (think about
topological changes for instance), the image distribution can not be
perfectly represented as a distribution around a single template. This
distribution is better represented  as clustered around a major template
and minor ones in a multimodal way. When the sample size is moderate, with a
high probability, the sample contains basically images around the major mode
and parametric model fits these data quite accurately. When the sample size
increases, the minor modes start to play a significant role as
``outliers'' with
respect to the major mode in the data, resulting in a slightly more blurry template trying to accommodate the different modes.
One way to overcome this fact is to use
 some clustering methods as proposed in \cite{AAT}.
To visualize robustness with respect to the training set, we ran
 this algorithm with $20$ images per class randomly chosen in the
 whole database. The different runs are presented in Figure
 \ref{fig:templatesrandom}. The two left images show some templates
 which look like the ones obtained in the left image of Figure
 \ref{fig-template} with the $20$ first examples of the
 database. When outliers appear among the $20$ randomly chosen training images
 the template may become somewhat more blurry.
 This is observed for
 digits '2' and '4' (apparently the most variable digits) in
 the right image of Figure  \ref{fig:templatesrandom}. For digits
 where the outliers are less far from the other images, the templates
 are stable.

\begin{figure}[htbp]\begin{center}
  \includegraphics[width=4.5cm,height=3.5cm]{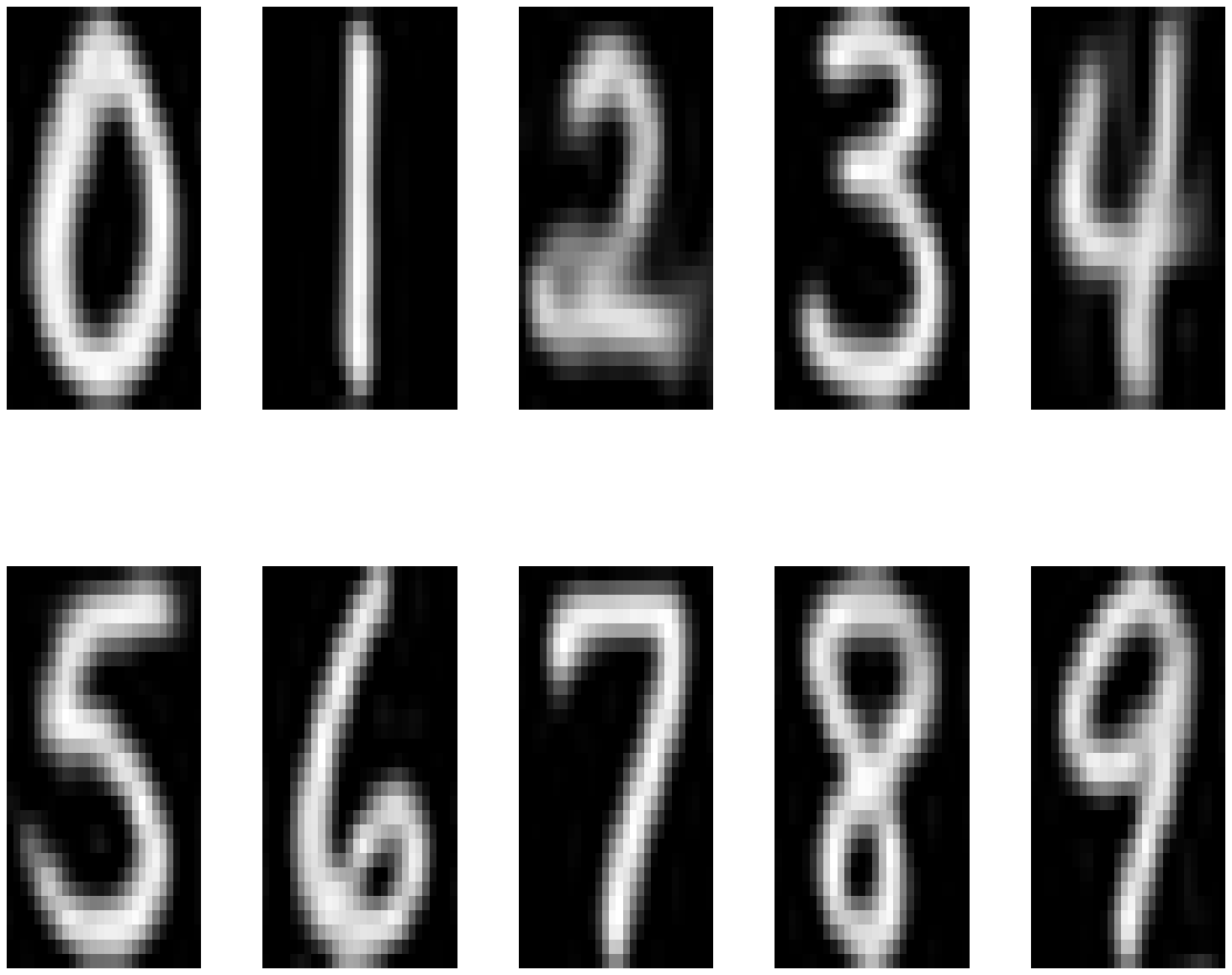}
 \includegraphics[width=4.5cm,height=3.5cm]{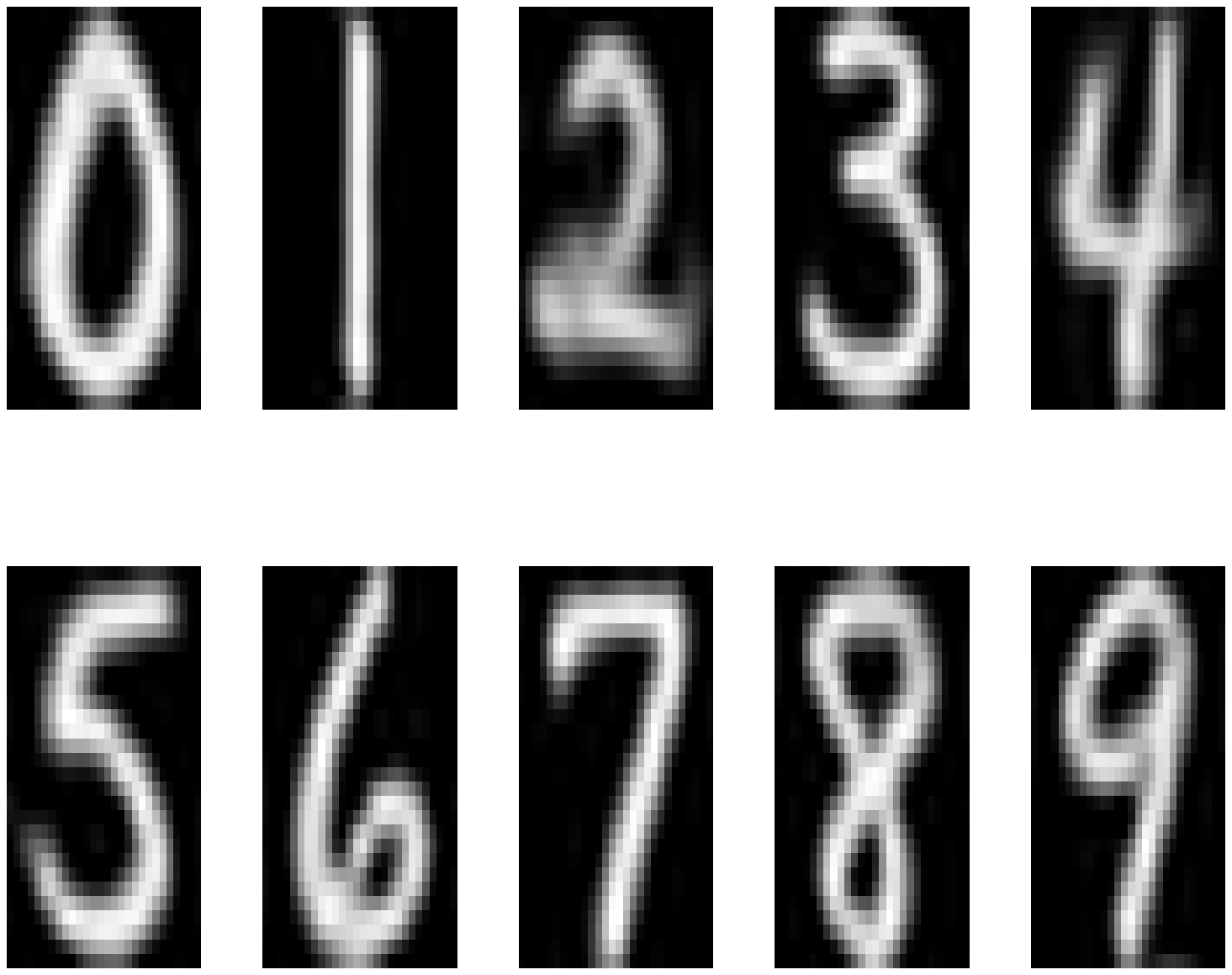}
 \includegraphics[width=4.5cm,height=3.5cm]{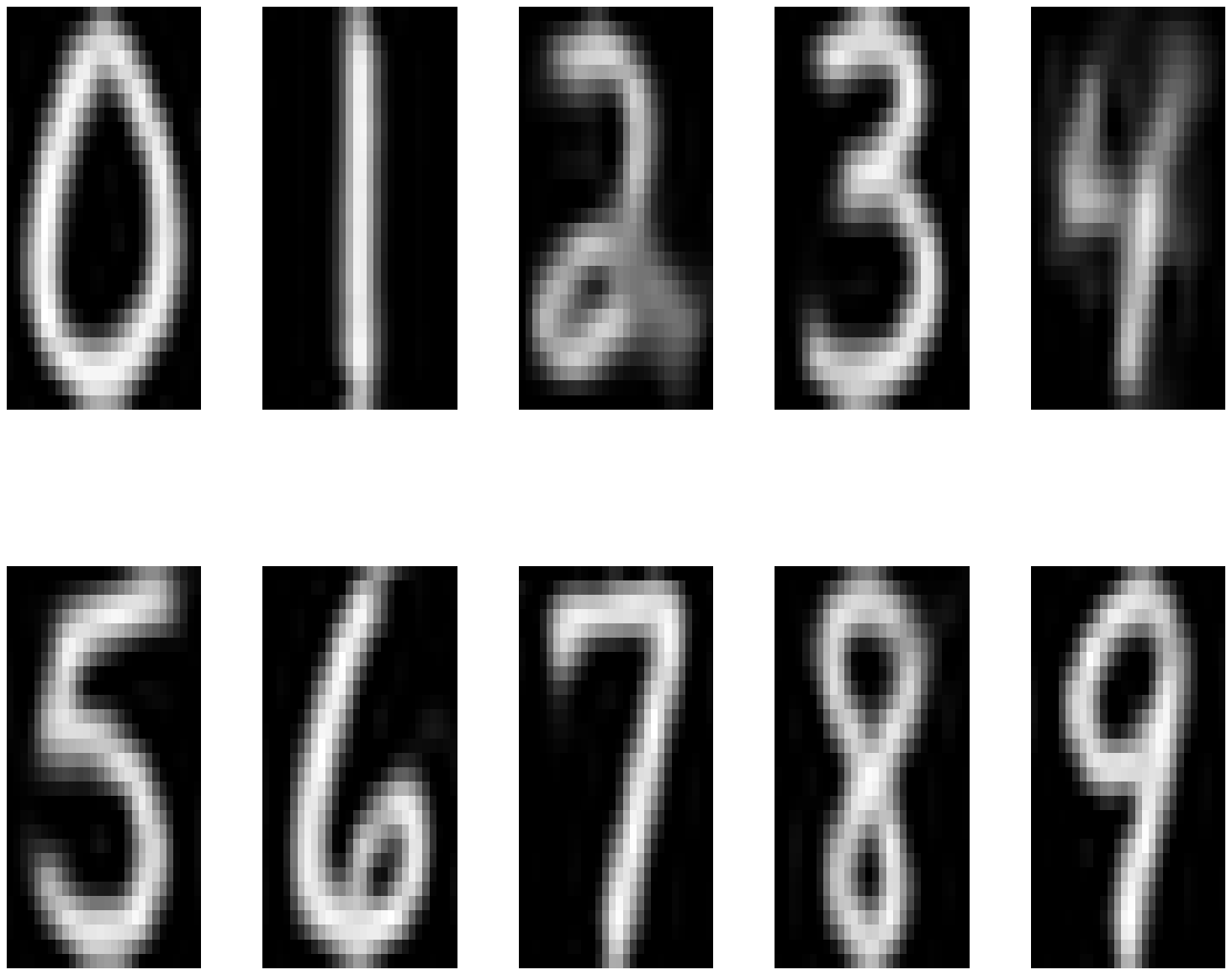}
\end{center}
\caption{Templates estimated with randomly chosen samples from the whole
  USPostal database. Each image is one run of the algorithm with same
  initial conditions but different training sets of $20$ images per
  digit each. The variability of the results is related to the huge
  variability inside the USPS database.}
\label{fig:templatesrandom}
\end{figure}

\begin{figure}[htbp]\begin{center}
\includegraphics[width=4.5cm,height=3.5cm]{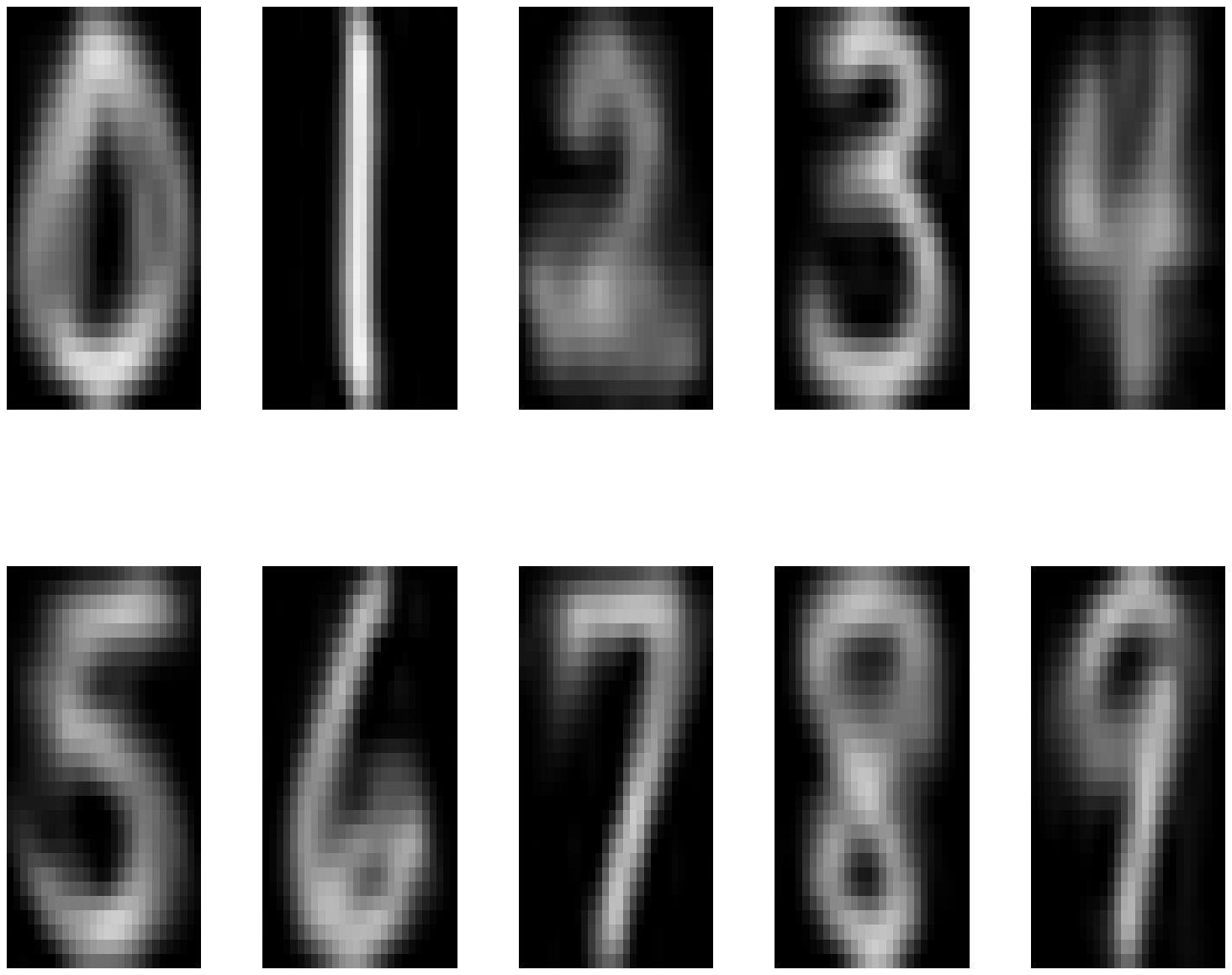}
  \includegraphics[width=4.5cm,height=3.5cm]{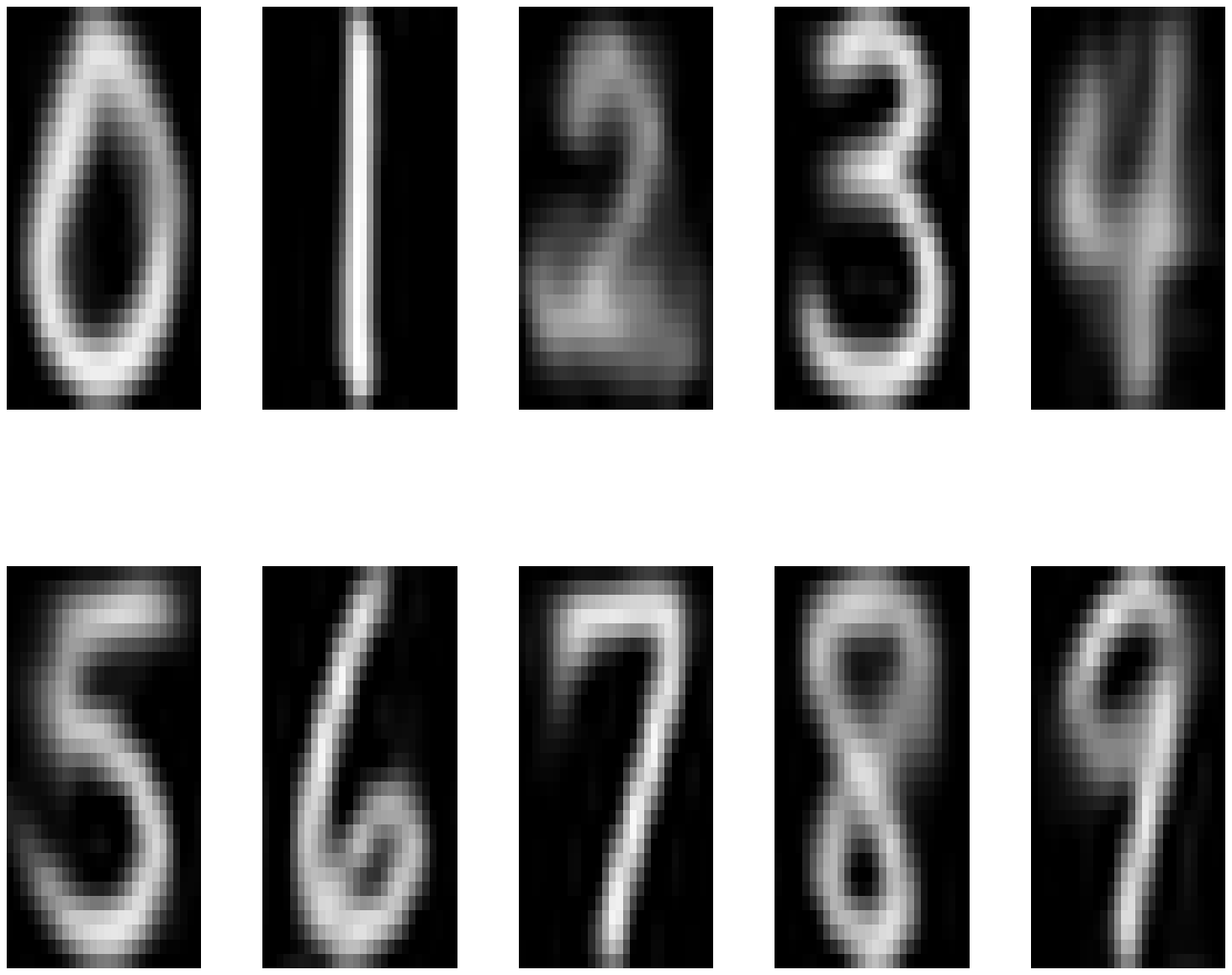}
 \includegraphics[width=4.5cm,height=3.5cm]{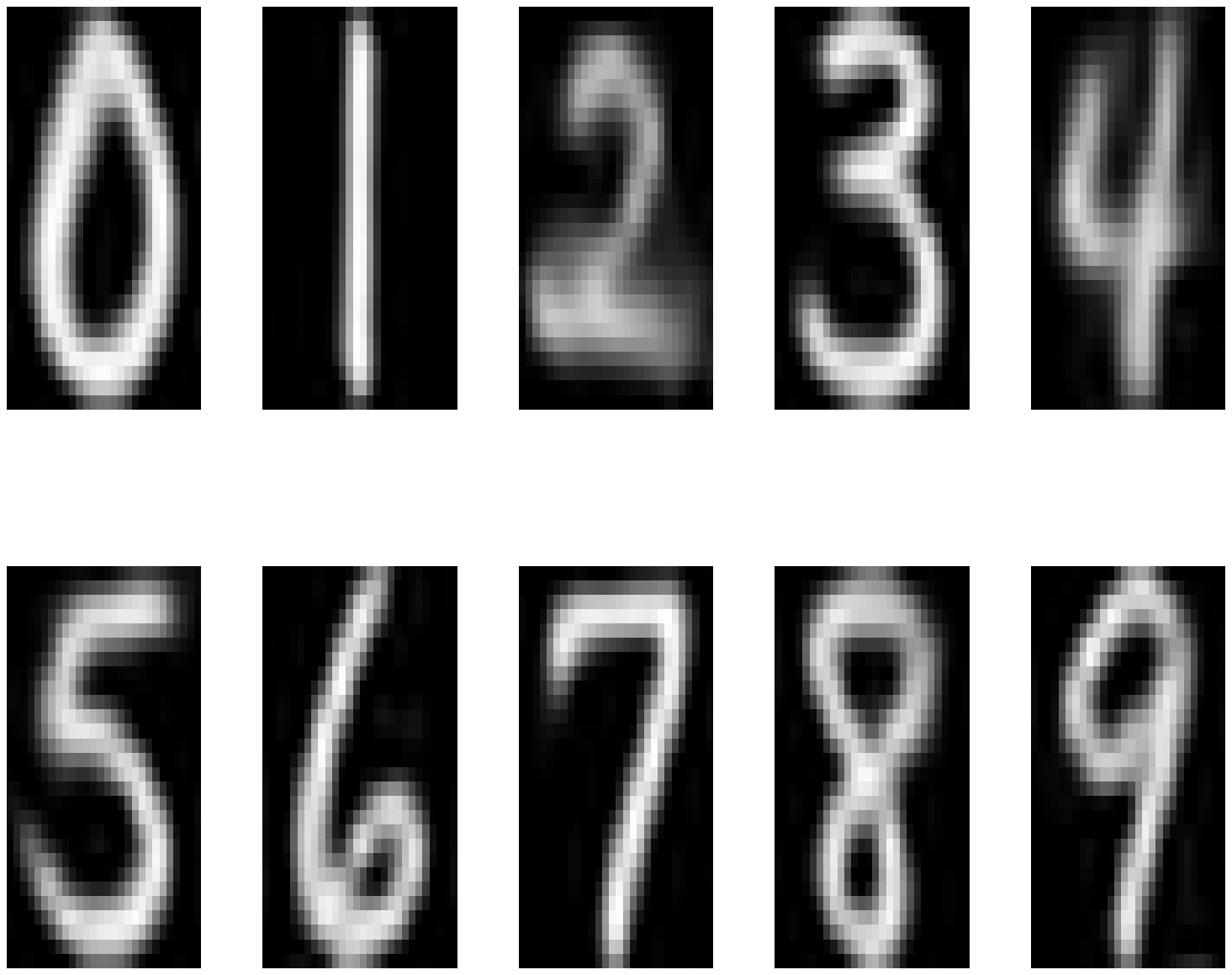}
 \includegraphics[width=4.5cm,height=3.5cm]{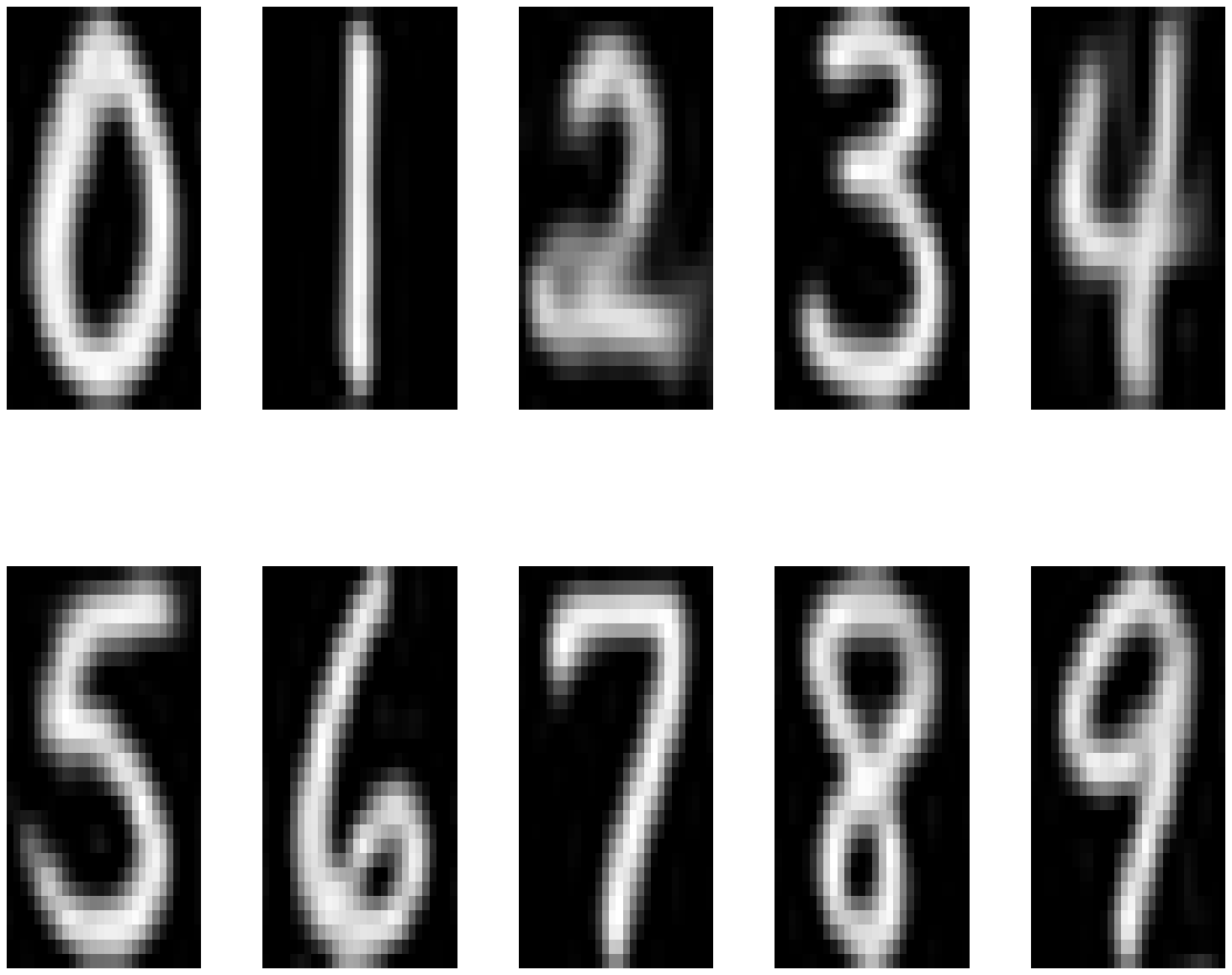}
\end{center}
\caption{Evolution of the templates with the algorithm iterations.
Top line. Left: Mean gray level images of the $20$ training samples.
Middle: template at the 50th
iteration. Right: template at the 100th iteration. Bottom line: template at
the 150th iteration. The improvement is visible~, very fast for some
very simple shapes as digit $1$ and longer for very variable ones as
digit $2$. The higher geometric variability increases the fitting time of
the algorithm.}
\label{fig:evolution}
\end{figure}
 The evolution of the template with  the
 iterations can be viewed in Figure \ref{fig:evolution}. The
 initialization of the template is the mean of the gray level
 images. As the iterations proceed, the templates become sharper.
 In particular, the estimated templates for digits with small
 geometrical  variability converge very fast. For digits like '2'
 or '4', where the geometrical variability is higher, the convergence
 of the coupled parameters (photometry and  geometry) is slowed down.

\subsection{Photometric noise variance}

 \begin{figure}[htbp]\begin{center}
 \includegraphics[width=14cm,height=8cm]{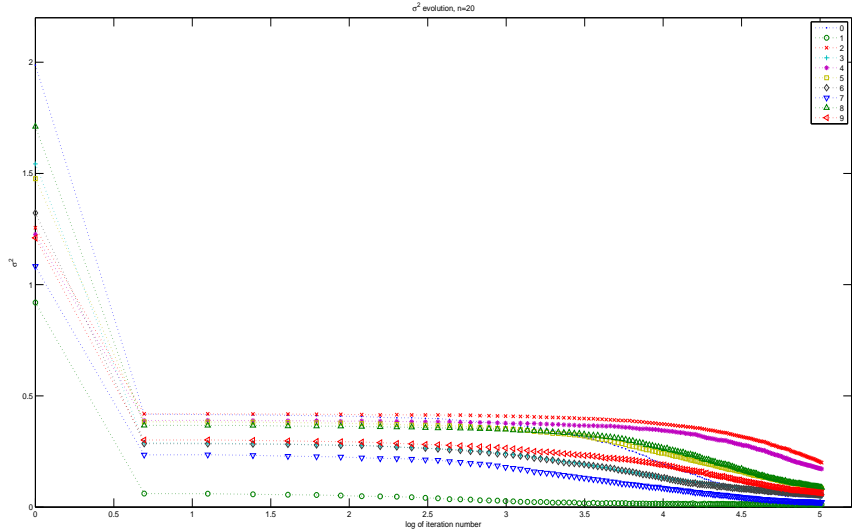}
\end{center}
 \caption{Evolution of the estimated noise variance using $20$ images
   per class along the SAEM-MCMC algorithm. This confirms the visual
   effects seen on the templates~: rapid convergence for some really
   constrained digits and slower convergence for very variable ones.  }
 \label{fig-evolsigp2}
 \end{figure}

The evolution of the noise variance all along the SAEM-MCMC iterations is
the same as the one observed with the ``mode approximation EM''
described in \cite{AAT}.
 During the first
iterations, the noise variance balances the inaccuracy of the estimated
template
which is simply the gray-level mean of the training set.
As the iterations proceed, the template estimates become
sharper as does the estimate of the covariance matrix for the
geometry. This yields very small residual noise. 
 Note that here the final noise variance, which is less than $0.1$,
 for the SAEM-MCMC algorithm for all digits is less than the noise variance
, which is between $0.2$ and $0.3$, for the mode approximation EM
experimented in \cite{AAT}
 in the one component
run.  This can be explained by the stochastic nature of the
algorithm which enables it to escape from local minima provoking early
terminations in the deterministic version.


\subsection{Estimated geometric distribution}

As mentioned previously we have to fix the value of the hyper-parameter
$a_g$ of the prior on $\Gamma_g$. This quantity plays a significant role in
the results. Indeed, to satisfy the theoretical conditions we have to
choose $a_g$ larger than $4k_g+1$  say $4×36+1 $ in our
examples.
From the geometry update equation, a barycenter between the `sample'
covariance
 and the prior, with the number
$n$  of
images and $a_g$ as coefficients, we find that the prior  dominates
when the training set is small.
The covariance matrix stays close to the prior.
Thus we need to
decrease $a_g$ and find the best trade-off between the degenerate
inverse Wishart and the weight of the prior in the covariance
estimation. We fix this value with a visual criterion: both the
templates and the generated sample with the learnt geometry have to
be satisfactory. This yields $a_g=0.5$ or $0.1$.

As we have observed from Figure \ref{fig-sample}, parameter
estimation is robust regardless of whether the prior is degenerate or
not. In
addition, considering the update formulas, even if this law does not
have a total weight equal to $1$
it does not affect parameter estimation.
\\


\begin{figure}[htbp]
 \begin{center}
   \includegraphics[width=8cm,height=5cm]{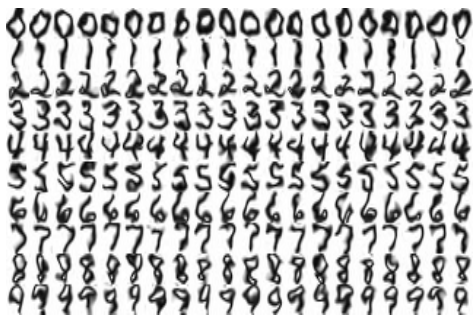}
\end{center}
\caption{Effect of the prior distribution on the deformation~: 20
  synthetic examples per class generated with the estimated
  template but the prior covariance matrix (inverse video).}
\label{fig-samplePrior}
\end{figure}

\begin{figure}[htbp]
 \begin{center}
   \includegraphics[width=8cm,height=9cm]{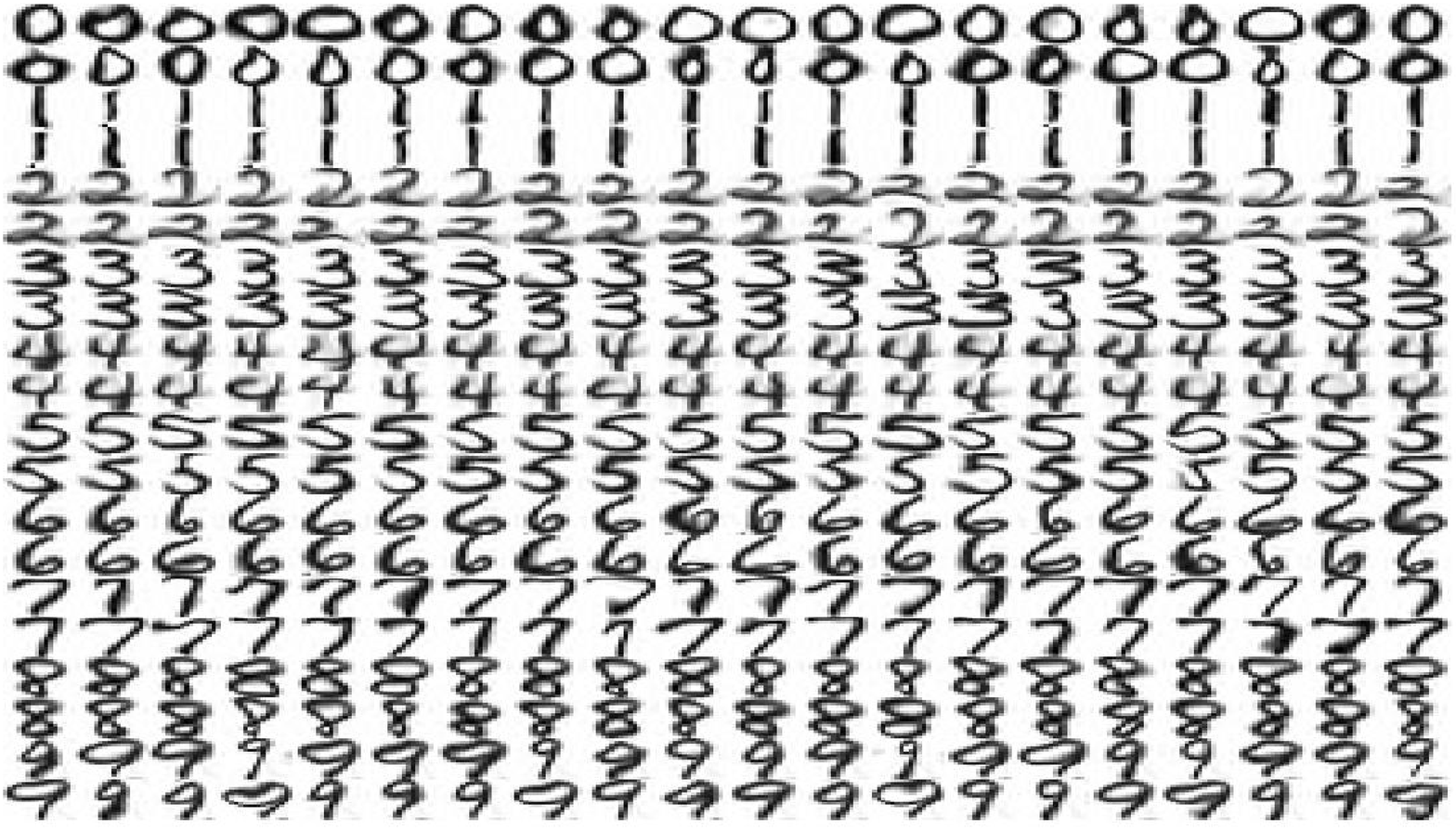}
\end{center}
\caption{Effect of the estimated geometric distribution~:  40
  synthetic examples per class generated with the estimated
  parameters: 20 with the direct deformations and 20 with the
  symmetric deformations (inverse video).}
\label{fig-sample}
\end{figure}

In Figure \ref{fig-sample}, we show a
sample of some synthetic digits modeled by deformation templates drawn
with the estimated parameters.
Note that the resulting digits in Figure \ref{fig-sample} look like
some elements of
the training set and seem to explain these data correctly, whereas the
prior produces some non-relevant local deformations (cf. Figure
\eqref{fig-samplePrior}). In particular,
for some especially geometrically constrained digits such as $0$ or $1$, the
geometry variability reflects their constraints. For digits like the
$2$s, the training set is heterogeneous and shows a large geometrical
variability.
%
%
%
When comparing to the deformations obtained by the
mode approximation to EM in \cite{AAT}, it seems that here we obtain a more variable
geometry. This might be because with a stochastic algorithm, we
explore the posterior density and do not only concentrate at its mode. This
allows some more exotic deformations corresponding to realizations of
the missing variable $\bdbeta$ which may belong to the tail of
the law. Another reason may be that for such digits, the mode
approximation gets stuck in a local minimum of
the matching energy. Jumping out of this configuration would require a
large deformation (not allowed by the gradient descent since it would
increase the energy again). However, such a deformation
 can be proposed leading to acceptance by the stochastic
algorithm. Subsequently the deformed template may better fit the observations,
leading to acceptance of these large deformations.
This also leads to a lower value of the
residual noise and may also explain the low noise variance estimated
by the stochastic EM algorithm.

\subsection{Noise effect}

\begin{figure}[htbp]
  \centerline{\epsfxsize=10cm \epsfbox{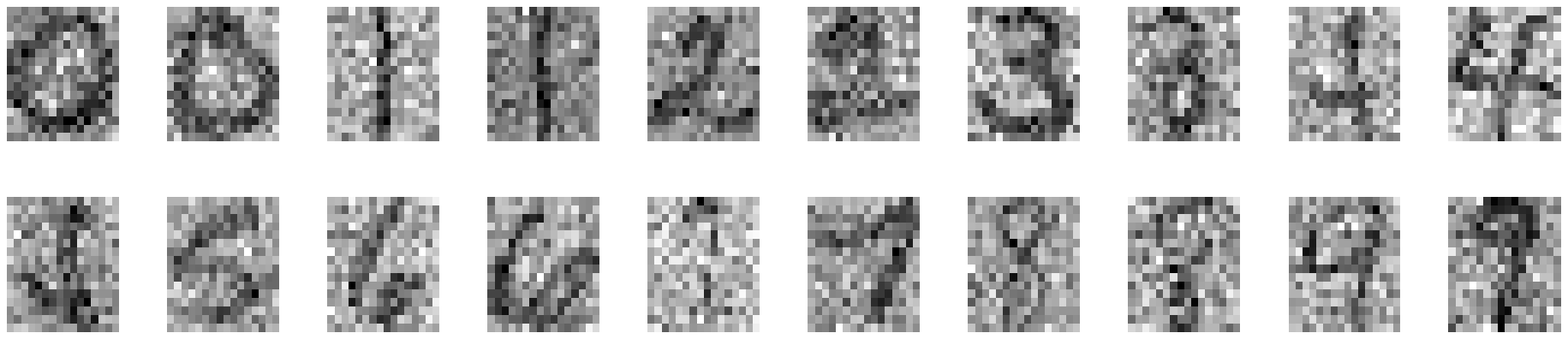}}
  \vspace{0.5cm}

\centerline{\epsfxsize=10cm \epsfbox{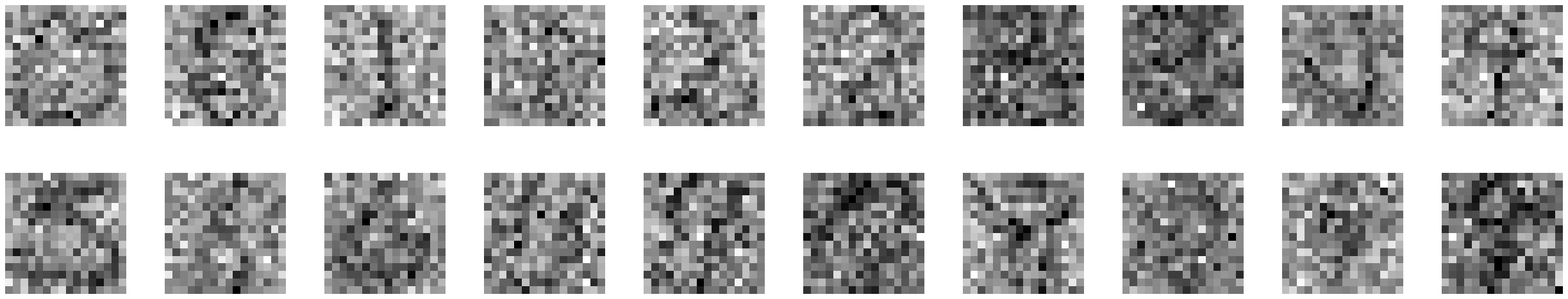}}
\caption{ Two images examples per class of the noisy training set
  (variance: top: $\si^2=1$, bottom:  $\si^2=2$). }
\label{fig-trainingSnr}
\end{figure}
\begin{figure}[htbp]
  \centerline{\epsfxsize=4.5cm
    \epsfbox{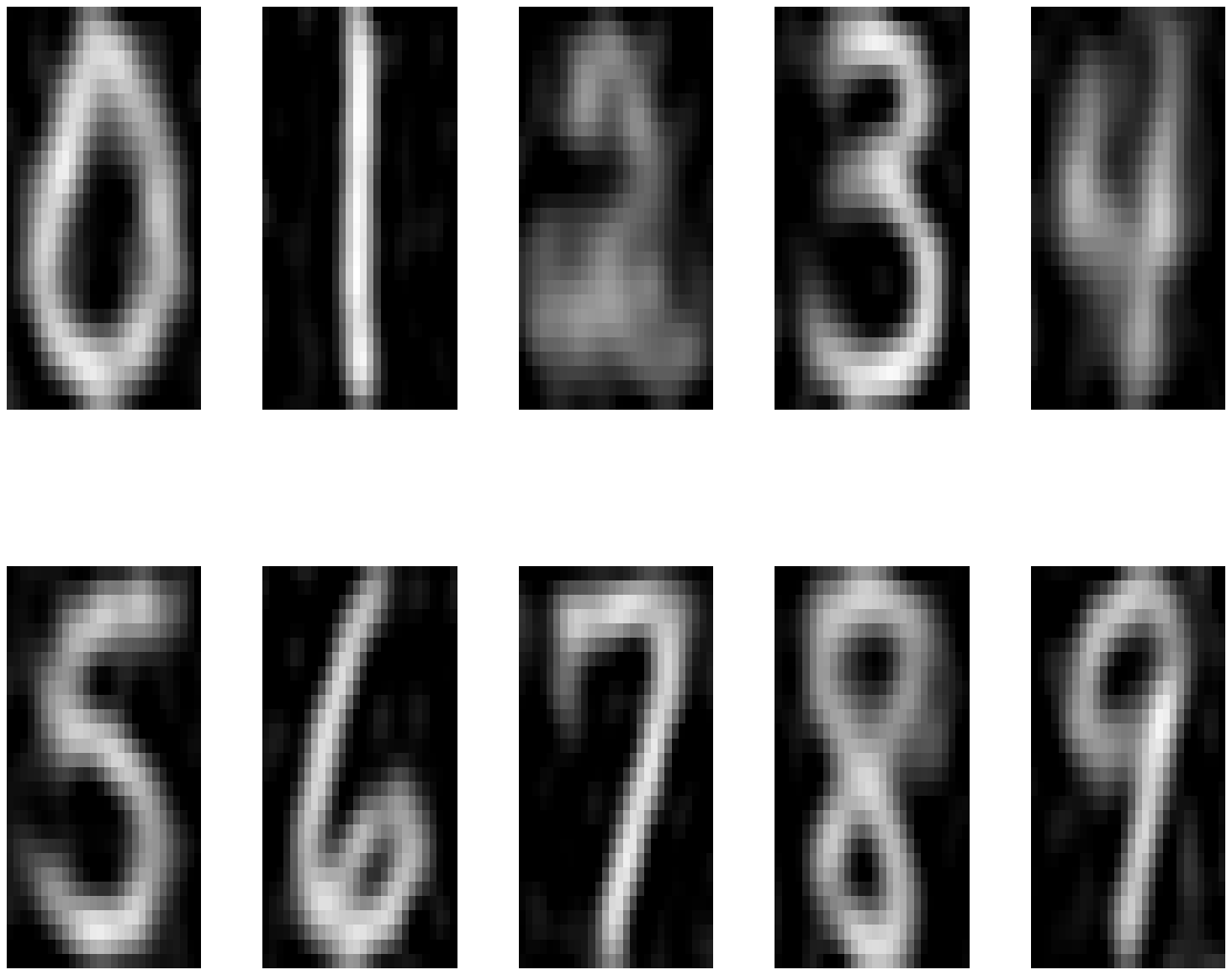} \hspace{0.5cm} \epsfxsize=4.5cm
    \epsfbox{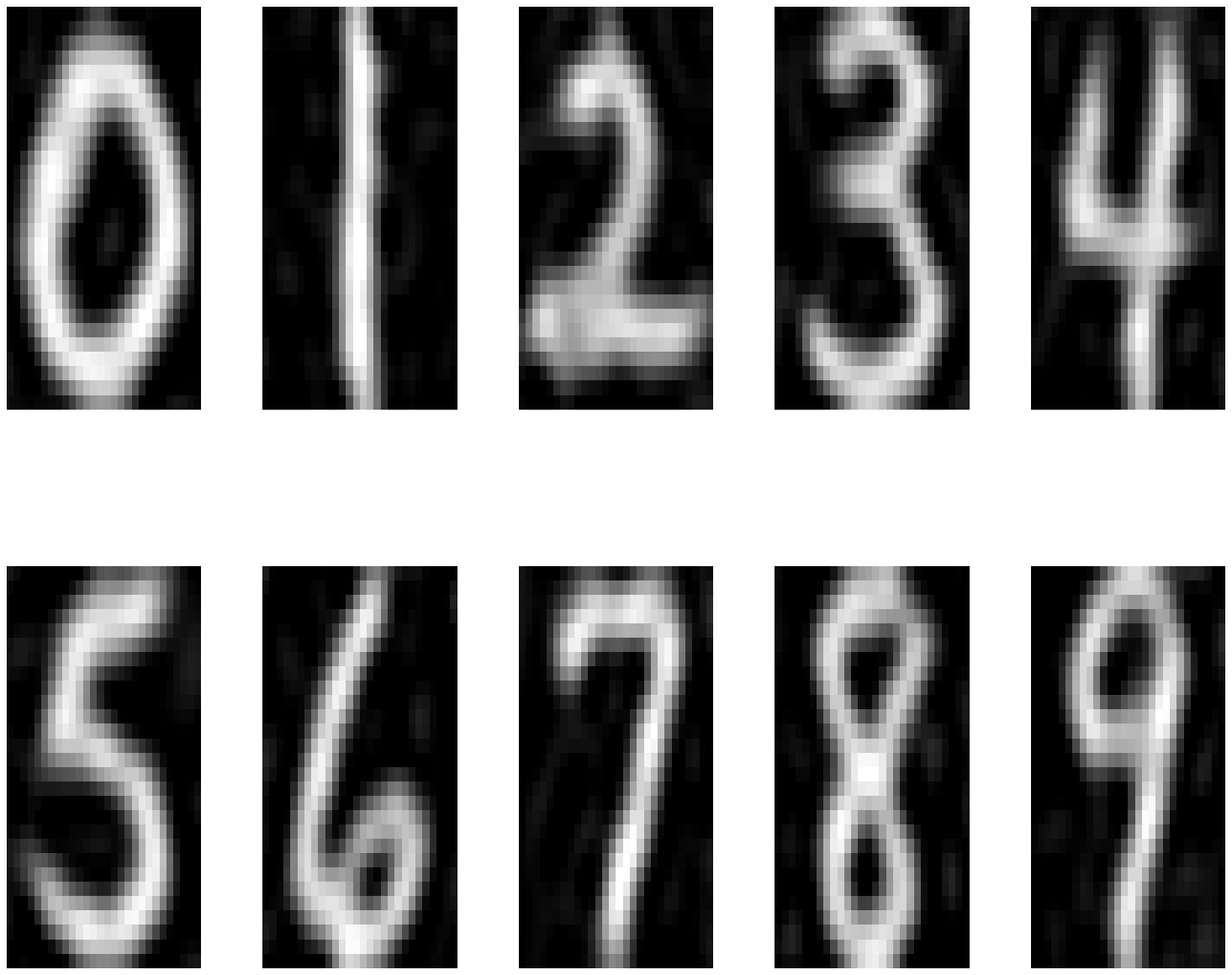}
}
\caption{Estimated prototypes in a noisy setting  $\si^2=1$: Left: with the mode
  approximation algorithm. Right: with the SAEM-MCMC coupling procedure.}
\label{fig-templatesSnr}
\end{figure}

\begin{figure}[htbp]
  \centerline{\epsfxsize=4.5cm
    \epsfbox{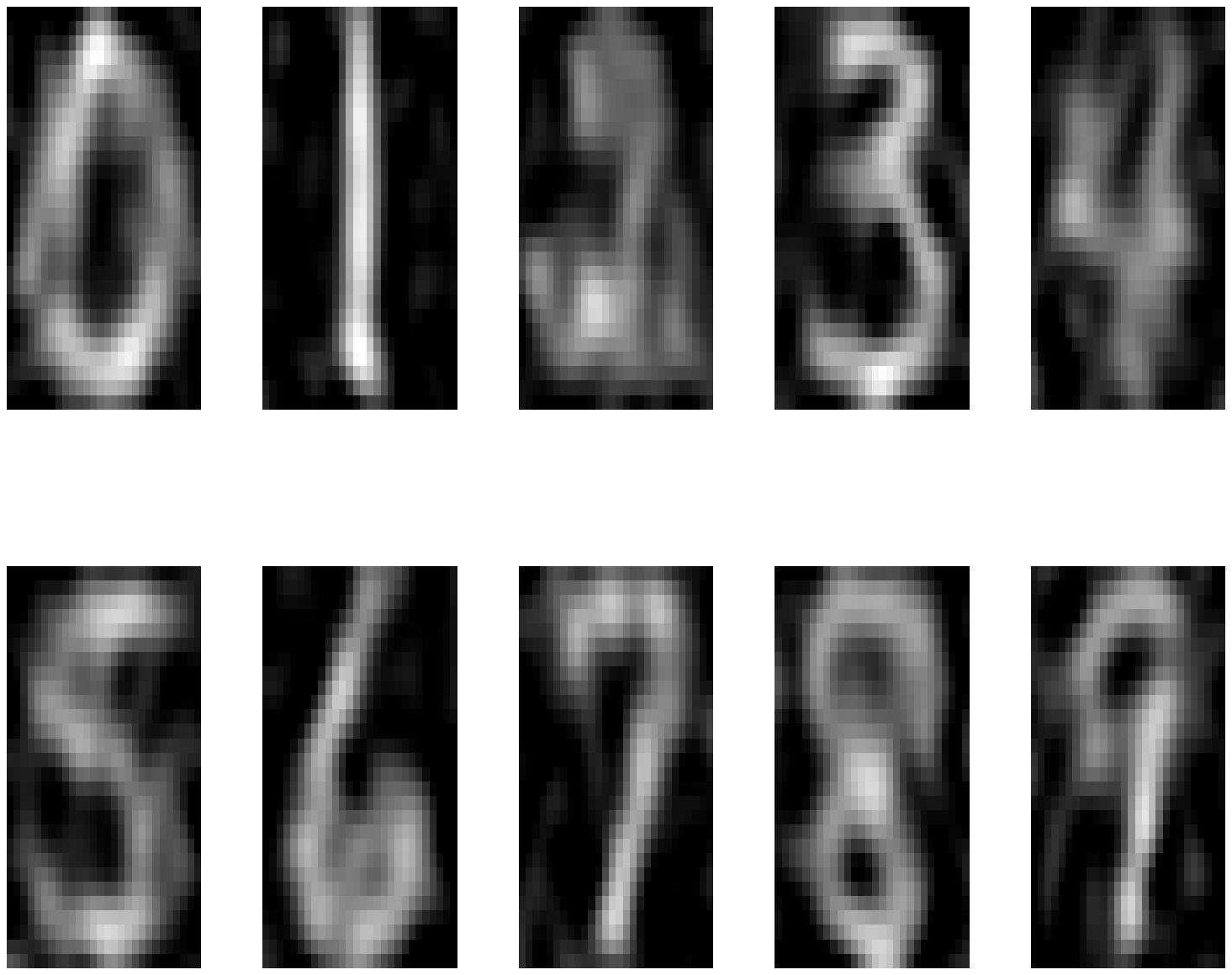} \hspace{0.5cm} \epsfxsize=4.5cm
    \epsfbox{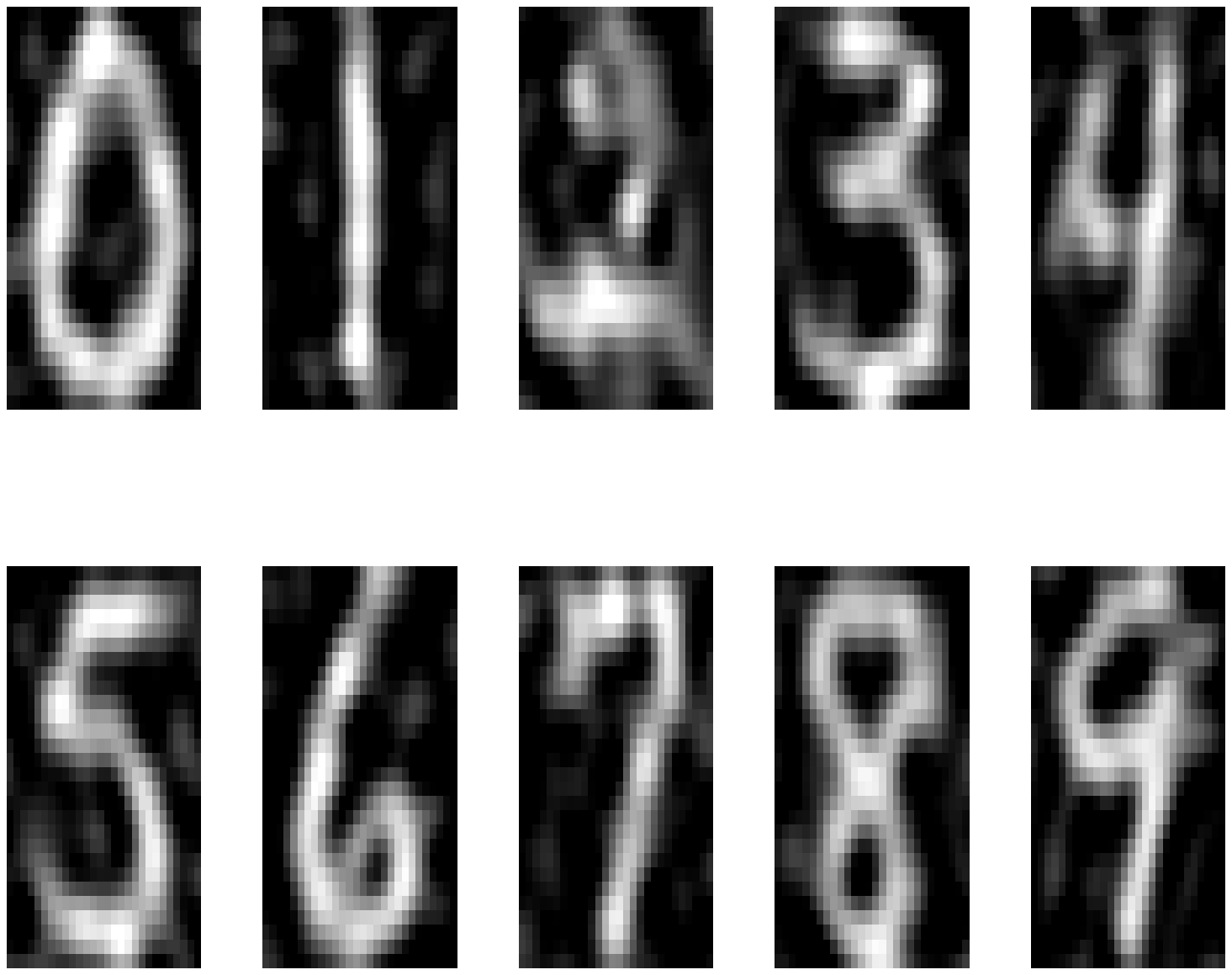}
}
\caption{Estimated prototypes in a noisy setting  $\si^2=2$: Left: with the mode
  approximation algorithm. Right: with the SAEM-MCMC coupling procedure.}
\label{fig-templatesSnr2}
\end{figure}

As shown in \cite{AAT}, in the presence of noise, the mode
approximation algorithm does not converge towards the MAP
estimator. In our setting, the consistency of the ``SAEM like''
algorithm has been
proved independently of the training set, and thus noisy images can also
be treated exactly the same way. These are the results
we present here.
Figure \ref{fig-trainingSnr}  shows two training examples per class
for noise variance values $\si^2=1$  and $\si^2=2$.
In Figures \ref{fig-templatesSnr} and \ref{fig-templatesSnr2}, we show the
estimated templates
for the noisy training set containing 20 images for both methods.
Even if the mode
approximation algorithm does not diverge, it cannot fit the template
for digits with a high variability. In contrast, the stochastic EM
 gives  acceptable contrasted templates which look like
those obtained in Figure \ref{fig-template}. This becomes more
significant as we increase the variance of the additive noise we
introduce in the training set. \\

Concerning the choice of the hyper-parameters, it is not necessary to
change all of them. For the
photometric variance of the spline kernel, a small one could
create some non-smooth templates and a large kernel would smooth
the noise effect. However, we can keep the geometric hyper-parameters
unchanged.
We are presenting here  only  experiments which
seemed to provide a reasonable tradeoff between these effects.\\



\begin{figure}[htbp]\begin{center}
\includegraphics[width=8cm,height=9cm]{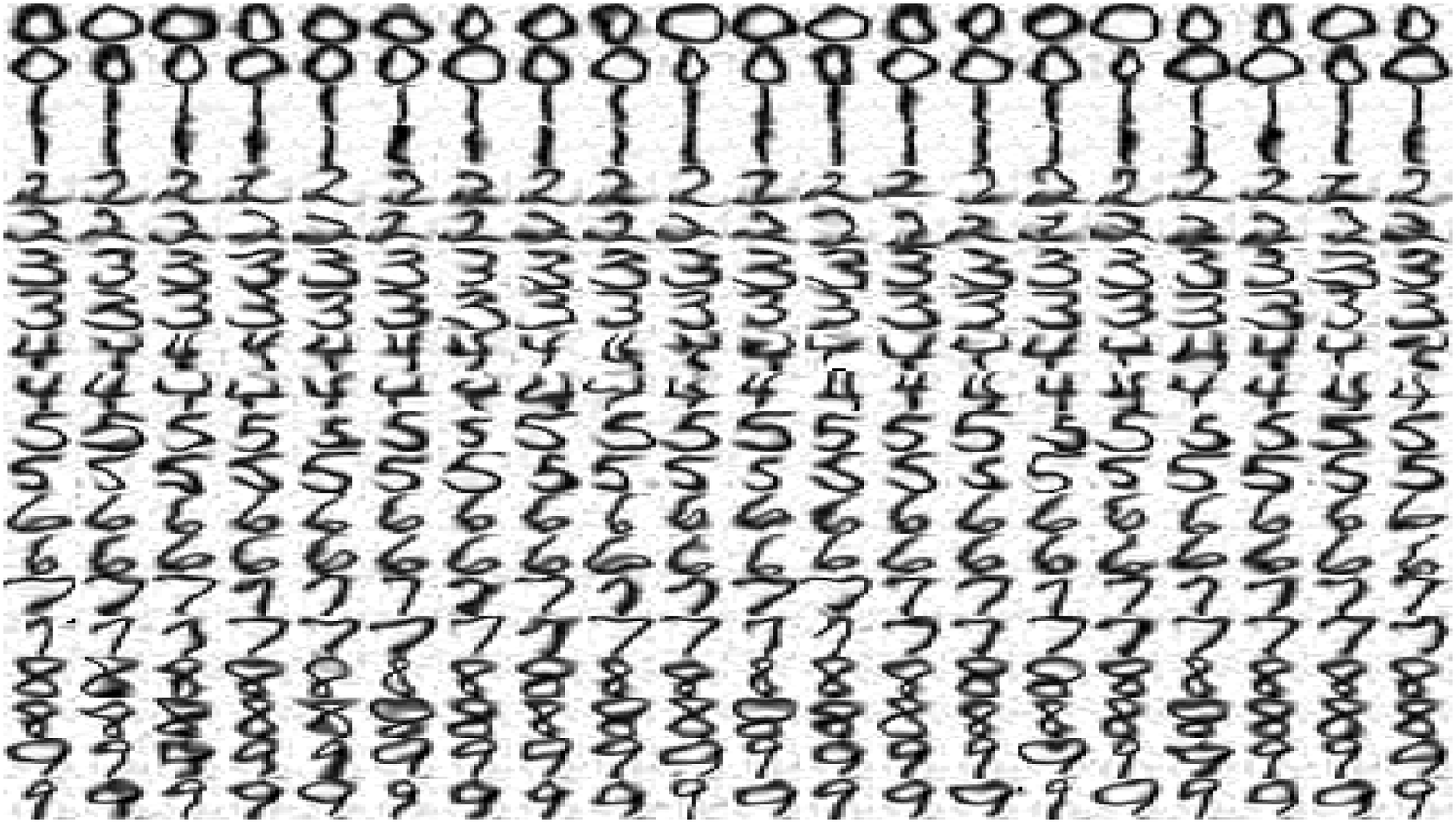}
\end{center}
\caption{Effect of the noise on the geometric parameter estimation~:
  40 synthetic examples per class generated with the
  parameters estimated from the noisy training set (additive noise
  variance of $1$, inverse video).}
\label{fig-sampleNoisy}
\end{figure}

The geometry is also well estimated despite the high level of noise in
the training set. Figure \ref{fig-sampleNoisy} shows some synthetic
examples, in which parameters are learnt from the training set with an
additive noise variance of one. The two lines
correspond to deformations and their symmetric deformation. This sample
looks like the synthetic samples learnt on non-noisy images even if
some examples are not relevant. However, the global behavior
has been learnt.\\

  The
algorithm manages to catch the photometry (a contrasted and smoothed
template), the geometry of the shapes and to ``separate'' the
additive noise.\\

The number of iterations needed to reach the convergence point in the
noisy setting is about twice that of the non-noisy case. The template
takes the longest time to converge and the estimate of $\sigma^2$ converges in a few
iterations.
In particular, the templates
obtained in the left panel of Figure \ref{fig-template} with only
$10$ images per training digit set are obtained with a heating period
of $25$ iterations and $5$ more steps with memory. The
templates  of Figure \ref{fig-templatesSnr}, right picture, require
$100$ to $125$
heating iterations in the $150$ global iterations. This is understandable
since the algorithm has to cope with variations due to the noise and
thus needs a longer time to fit the  model.

\section{Proof of Theorem \ref{th:condition} }\label{appendix}

Here we demonstrate Theorem \ref{th:condition}, i.e. the
stochastic approximation sequence satisfies assumptions (\textbf{A1'})
(\textbf{ii}), (\textbf{iii}), (\textbf{iv}), (\textbf{A2}) and (\textbf{A3'}).

 We recall that in this
section, the parameter $\sigma^2$ is fixed so that
$\theta=(\alpha,\Gamma)$. The sufficient statistic vector $S$, the set
$\mathcal{S}$ as well as the explicit expression of $\hat{\theta}(s)$
have been  given in
Subsection \ref{ConvTheo}. As noted, $\hat{\theta}$ is
a smooth function of $\mathcal{S}$.

We will prove that these conditions  hold for any $p\geq 1$ and $a\in
]0,1[$.

\subsection{Proof of assumption (\textbf{A1'})}
$\quad$

We recall the functions $H,\ h$ and $\ w$ as in  \cite{DLM} defined as
follows:
\begin{eqnarray*}
  H_s(\bdbeta) & =&  S(\bdbeta) -s\,, \\
h(s)& = & \int_{\R^N} H_s(\bdbeta) \qpost(\bdbeta|\bdy, \hat\theta(s)) d\bdbeta\,, \\
w(s) & = & -l(\hat{\te}(s)) \, .
\end{eqnarray*}

As shown in \cite{DLM}, with these functions,   we satisfy
(\textbf{A1'(iii)}) and (\textbf{A1'(iv)}).


Moreover, since the interpolation kernel $K_p$ is
bounded, there exist  $A>0$ and  $B\in
\Symep$ such that for any $\bdbeta\in \mathbb{R}^N$,
we have
\begin{equation*}
\|S_1(\bdbeta)\|\leq A,\  0\leq S_2(\bdbeta)\leq B \text{ and }0\leq S_3(\bdbeta)
\,, \label{eq:2}
\end{equation*}
where, for any symmetric matrices $B$ and $B'$, we say
that $B\leq B'$ if $B'-B$ is a non-negative symmetric matrix.

We define the set $\Sa$ by
$$\Sa\triangleq\{\ S\in \mathcal{S}\ |\ \|S_1\|\leq A,\ 0\leq S_2\leq B\ \text{
  and } 0\leq
S_3\ \}\,.$$
Since the constraints are obviously convex and closed, we get that
  $\Sa$ is a closed convex subset of $\mathbb{R}^{n_s}$ such that
$$\Sa\subset \mathcal{S}\subset \mathbb{R}^{n_s}$$
and satisfying
\begin{equation*}
s+\rho H_s(\bdbeta)\in\Sa \ \text{ for any $\rho\in [0,1]$
any $s\in \Sa$ and any $\bdbeta\in \R^N$.} \label{eq:3}
\end{equation*}

We now focus on the first two points.
As $l$ and $\hat\te$ are continuous functions, we only need to prove that
$\mathcal{W}_M\cap \Sa$ is a bounded set
for a constant $M \in \R^*_+$ with:
\begin{equation*}
  \mathcal{W}_M = \{
s\in \mathcal{S} , \ w(s) \leq M
\} \,.
\end{equation*}
On $\Sa$, $s_1$ and $s_2$ are bounded; writing
$\hat\theta(s)=(\alpha(s),\Gamma(s))$, we deduce
from \eqref{PhotoUpdateClust} and from the boundedness of
$K_p$ that $\alpha(s)$ is bounded on $\Sa$ and
$|\dy_i-K_p^{\beta_i} \alpha(s)|$ is uniformly bounded on $\beta_i\in
\mathbb{R}^{2k_g}$ and $s\in\Sa$. Hence (recall that $\sigma^2$ is
fixed here), there exists an $\eta>0$ such that
$\qcond(\bdy|\bdbeta,\hat\theta(s))\geq
\eta$ for any $s\in \Sa$ and $\bdbeta\in\mathbb{R}^N$. Thus,
\begin{equation*}
  \label{eq:4}
  w(s)\geq -\log\left(\int \qmiss(\bdbeta,\hat\theta(s))d\bdbeta\right)+\text{C}\geq
  -\log(\qpara(\hat\theta(s)))+\text{C}\geq -\log({\qpara}_{|_\Gamma} (\Gamma(s)))+\text{C} \, ,
\end{equation*}
where $\text{C}$ is a constant independent of
$s\in S_a$. Since
$$-\log({\qpara}_{|_\Gamma}(\Gamma_g))= \frac{a_g}{2}\left( \langle
\Gamma_g^{-1},\Sigma_g\rangle_F + \log|\Gamma_g| \right)\geq \frac{a_g}{2}\log
|\Gamma_g|$$
and  $$\lim_{\|s\|\to +\infty, s\in\Sa} \log(|\Gamma_g(s)|)=\lim_{\|s\|\to +\infty, s\in\Sa}
\log(|(s_3+a_g\Sigma_g)/(n+a_g)|)= +\infty,$$
 we deduce that
$$\lim_{\|s\|\to +\infty, s\in\Sa}w(s)=+\infty\,.$$
Since $w$ is continuous and $\Sa$ is closed, this proves  (\textbf{A1'(ii)}).\\

\subsection{Proof of assumption (\textbf{A2})}
$\quad$

We prove a classical sufficient condition (\textbf{DRI1}), used in
\cite{andrieumoulinespriouret}  which will imply (\textbf{A2}) under
the condition that $H_s$ is dominated by $V$ for any $s\in \Kapa$.
\begin{description}
\item[(DRI1)]
For any $s\in \mathcal{S}$, $\ntrans_{\hat{\theta}(s)}$ is $\phi-$irreducible and
aperiodic. In addition there exist 
a function $V:\R^{N} \to
[1,\infty[$ 
and $p\geq 2$ such that for
any compact subset $\Kapa\subset \mathcal{S}$,
 there exist an integer $m$ and constants $0<\lambda<1$, $B>0$, $\kappa>0$,
  $\delta>0$, a subset $\texttt{C} $ of $\R^N$ and a probability measure
  $\nu$ such that
\begin{eqnarray}
\label{6.1}\sup\limits_{s\in\Kapa} \ntrans_{\hat{\theta}(s)} ^mV^p(\bdbeta)& \leq &\lambda V^p(\bdbeta) +
 B \mathds{1}_{\texttt{C}}(\bdbeta) \,, \\ \label{6.2}
 \sup\limits_{s\in\Kapa} \ntrans_{\hat{\theta}(s)} V^p(\bdbeta)& \leq& \kappa V^p(\bdbeta)
 \ \  \forall \bdbeta\in \R^{N}  \,, \\ \label{6.3}
 \inf\limits_{s\in\Kapa} \ntrans_{\hat{\theta}(s)} ^m (\bdbeta,A) &\geq& \delta \nu(A) \ \
 \forall \bdbeta\in \texttt{C}, \forall A \in \mathcal{B}(\R^{N}) \, .
 \end{eqnarray}
\end{description}
\begin{rem}
  Note that condition \eqref{6.3} is equivalent to the existence of a
  small set $\texttt{C} $ (defined below) which only depends on  $\Kapa$.
\end{rem}
\begin{notation}
Let
$(e_j)_{1\leq j\leq N}$ be the canonical basis of $\mathbb{R}^N$. For any
$1\leq j\leq N$, let
$E_{\theta,j} \triangleq  \{\ \bdbeta\in\mathbb{R}^N\ |\ \langle
\bdbeta,e_j\rangle_\theta=0\}$ be the orthogonal space of $\text{Span}\{e_j\}$
and $p_{\theta,j}$ be the orthogonal projection on $E_{\theta,j}$ i.e.
$$p_{\theta,j}(\bdbeta) \triangleq
\bdbeta-\frac{\langle \bdbeta,e_j\rangle_\theta}{\|e_j\|^2_\theta}e_j\,,$$
where $\langle \bdbeta,\bdbeta'\rangle_\theta=\sum_{i=1}^n
\beta_i^t\Gamma_g^{-1}\beta'_i$ for $\theta=(\alpha,\Gamma_g)$ (i.e. the natural dot product
associated with the covariance matrix $\Gamma_g$) and $\|.\|_\theta$ the
corresponding norm.

  We denote for any $1\leq j\leq N$ and $\theta\in\Theta$ by
  $\ntrans_{\theta,j}$ the Markov kernel on $\mathbb{R}^N$ \eqref{eq:kernelj}
associated with the
  Metropolis-Hastings step of the $j$-th Gibbs sampler step on $\bdbeta$.
We have $\ntrans_\theta=\ntrans_{\theta,N}\circ\cdots\circ \ntrans_{\theta,1}$.
\end{notation}

We first recall the definition of a small set:
\begin{Def} (cf. \cite{meyntweedie})
 A set $\mathcal{E} \in \mathcal{B}(\mathcal{X})$ is called a
 \textbf{small set} for the kernel $\ntrans$
 if there exist an $m>0$, and a non trivial measure $\nu_m$ on $
 \mathcal{B}(\mathcal{X})$, such that for all $\bdbeta\in \mathcal{E}$,  $B\in
 \mathcal{B}(\mathcal{X})$,
 \begin{equation}
   \label{eq:smallset}
   \ntrans^m(\bdbeta,B) \geq \nu_m(B).
 \end{equation}
When (\ref{eq:smallset}) holds, we say that $\mathcal{E}$ is $\nu_m$-small.
\end{Def}

\vspace{0.3cm}

We now prove the following lemma which give the existence of the small
set $\texttt{C}$ in \textbf{(DRI1)}:

\vspace{2mm}
\begin{lemma}
 Let $\E$ be a compact subset of $\mathbb{R}^N$ and $\Kapa$ a compact
 subset of $\mathcal{S}$. Then $\E$ is a small
set of $\R^N$ for $\ntrans_{\hat\te(s)} $ for any $s\in\Kapa$.
\end{lemma}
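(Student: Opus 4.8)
The plan is to establish the minorization with $m=1$, i.e. to bound a single complete Gibbs scan $\ntrans_{\hat\theta(s)}=\ntrans_{\hat\theta(s),N}\circ\cdots\circ\ntrans_{\hat\theta(s),1}$ from below by a fixed non-trivial measure. First I would keep in each one-coordinate kernel \eqref{eq:kernelj} only the acceptance part (the rejection/Dirac term being non-negative) and chain these contributions along the scan. Since at coordinate $j$ the proposal conditions on the coordinates already resampled, $\bdz^{1},\dots,\bdz^{j-1}$, and on those not yet updated, $\bdbeta^{j+1},\dots,\bdbeta^{N}$, this gives, for every $\bdbeta\in\E$ and $s\in\Kapa$,
\[
\ntrans_{\hat\theta(s)}(\bdbeta,d\bdz)\ \geq\ \prod_{j=1}^{N} \qj\bigl(\bdz^j\,\big|\,(\bdz^{1},\dots,\bdz^{j-1},\bdbeta^{j+1},\dots,\bdbeta^{N}),\hat\theta(s)\bigr)\, r_j\, d\bdz^j ,
\]
where $r_j$ is the acceptance rate evaluated at the corresponding arguments.

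Next I would bound the acceptance rates below uniformly. Recall that $r_j=\bigl(\qobs(\bdy\mid\betajb,\hat\theta(s))/\qobs(\bdy\mid\bdbeta,\hat\theta(s))\bigr)\land 1$, where $\qobs(\bdy\mid\bdbeta,\theta)=\qcond(\bdy\mid\bdbeta,\theta)$ is the conditional likelihood. Since $\sigma^2$ is fixed and the kernel $K_p$ is bounded, the matrix $K_p^{\beta_i}$ has entries bounded independently of $\beta_i$, while $\alpha(s)$ is bounded on the compact set $\Kapa$ by \eqref{PhotoUpdateClust}; hence $\|\dy_i-K_p^{\beta_i}\alpha(s)\|$ is bounded uniformly in $\beta_i\in\R^{2k_g}$ and $s\in\Kapa$. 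Exactly as in the proof of (\textbf{A1'(ii)}), this produces two constants $0<\eta\leq M<\infty$ with $\eta\leq \qcond(\bdy\mid\bdbeta,\hat\theta(s))\leq M$ for all $\bdbeta$ and all $s\in\Kapa$. As $r_j$ is a ratio of such quantities capped at $1$, we obtain $r_j\geq \eta/M=:\epsilon_0>0$, uniformly in all arguments and in $s\in\Kapa$.

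Then I would confine the target point to a fixed closed ball $\mathcal{D}=\{\|\bdz\|\leq\rho\}$ of positive Lebesgue volume. Each proposal $\qj(\cdot\mid\cdot,\hat\theta(s))$ is the one-dimensional Gaussian full conditional of a single coordinate under $\otimes_i\mathcal{N}(0,\Gamma_g(s))$: its variance depends only on $\Gamma_g(s)$ and stays bounded above and below away from $0$, because $\Gamma_g(s)$ ranges over a compact subset of $\Symep$ as $s$ runs over $\Kapa$, while its mean is a continuous (affine) function of the conditioning coordinates. For $\bdz\in\mathcal{D}$ and $\bdbeta\in\E$, every conditioning coordinate $(\bdz^{1},\dots,\bdz^{j-1},\bdbeta^{j+1},\dots,\bdbeta^{N})$ and $\bdz^j$ remains in a bounded set, so each Gaussian density is bounded below by a constant $c_0>0$ uniform in $j$, in $\bdbeta\in\E$ and in $s\in\Kapa$. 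Combining with the acceptance bound yields $\ntrans_{\hat\theta(s)}(\bdbeta,d\bdz)\geq (\epsilon_0 c_0)^N\,\mathds{1}_{\mathcal{D}}(\bdz)\,d\bdz$, so $\E$ is $\nu_1$-small for the non-trivial measure $\nu_1(d\bdz)=(\epsilon_0 c_0)^N\,\mathds{1}_{\mathcal{D}}(\bdz)\,d\bdz$, which does not depend on $\bdbeta\in\E$ nor on $s\in\Kapa$.

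I expect the main obstacle to be the bookkeeping of the sequential composition: because the proposal law at coordinate $j$ uses the coordinates already resampled earlier in the same scan, the lower bound on the product of proposal densities must be made uniform over all admissible intermediate configurations. Once one observes that restricting $\bdz$ to $\mathcal{D}$ and $\bdbeta$ to $\E$ confines every conditioning coordinate to a compact set, this reduces to a uniform positivity statement for a finite family of one-dimensional Gaussian densities whose parameters lie in a compact set, which is routine. The only genuinely model-specific input is the two-sided bound on $\qcond$, which comes from the boundedness of $K_p$ and the fact that $\sigma^2$ is fixed.
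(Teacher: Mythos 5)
Your proof is correct, and its first half coincides with the paper's: both discard the rejection (Dirac) part of each one-coordinate kernel and bound the acceptance rate below by a uniform constant coming from the two-sided bound on $\qcond(\bdy|\cdot,\hat\theta(s))$ (boundedness of $K_p$, boundedness of $\alpha(s)$ on $\Kapa$, fixed $\sigma^2$) — in fact you spell out the justification for the constant $a_c$ that the paper only asserts. Where you diverge is the final minorization. The paper writes the composed proposal as the pushforward of a product of standard Gaussians under the affine map $\bdbeta \mapsto p_{\theta,N,1}(\bdbeta)+A_\theta z$, checks that $A_\theta$ is invertible, and uses smoothness of $\theta\mapsto A_\theta$ plus compactness to bound the resulting $N$-dimensional Gaussian density below by a fixed Gaussian $\mathcal{N}(0,\mathrm{Id}/c_\Kapa')$; the minorizing measure is therefore equivalent to Lebesgue measure on all of $\R^N$. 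You instead restrict the target to a fixed ball $\mathcal{D}$ and bound each one-dimensional full-conditional density pointwise from below, using only that its mean is affine in the conditioning coordinates (which stay in a compact set once $\bdz\in\mathcal{D}$ and $\bdbeta\in\E$) and that its variance is bounded above and away from zero over $s\in\Kapa$. Your route is more elementary — it avoids the invertibility of $A_\theta$ and the uniform spectral bounds on $A_\theta A_\theta^t$ — at the price of a minorizing measure supported only on $\mathcal{D}$, which is entirely sufficient both for the small-set property and for the $\phi$-irreducibility and aperiodicity the paper subsequently extracts from it (Lebesgue measure restricted to a ball of positive volume is a non-trivial irreducibility measure). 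One cosmetic slip: the covariance $\Gamma_g(s)$ ranges over a compact subset of $\Syme$, the $2k_g\times 2k_g$ cone, not $\Symep$; this does not affect the argument.
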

\begin{proof} First note that there exists an $a_c>0$ such that
for any $\theta\in \Te$, any $\bdbeta\in\mathbb{R}^N$ and any
$b\in\mathbb{R}$, the acceptance rate
  $r_{j}(\bdbeta^j,b; \bdbetamj,\theta )$  is uniformly
 bounded below by $a_c$ so that for any $1\leq j\leq N$ and any non-negative
  function $f$,
$$\ntrans_{\theta,j}f(\bdbeta)\geq
a_c \int_\mathbb{R}f(\bdbetamj+be_j)\qj(b|\bdbetamj,\theta)db= a_c
\int_\mathbb{R}f(p_{\theta,j}(\bdbeta)+ze_j/\|e_j\|_\theta)g_{0,1}(z)dz \,,$$
where $g_{0,1}$ is the density of the standard Gaussian distribution
$\mathcal{N}(0,1)$.

By induction, we have
\begin{equation}
\ntrans_{\theta}f(\bdbeta)\geq
a_c^N
\int_{\mathbb{R}^N}f\left(p_{\theta,N,1}(\bdbeta)+\sum_{j=1}^Nz_jp_{\theta,N,j+1}(e_j)/
  \|e_j\|_\theta\right)
\prod_{j=1}^Ng_{0,1}(z_j)dz_j\,, \label{eq:6}
\end{equation}
where $p_{\theta,q,r}=p_{\theta,r}\circ p_{\theta,r-1} \circ  \cdots\circ
  p_{\theta,q}$ for any integers $q\leq r$ and
  $p_{\theta,N,N+1}=\text{Id}$.

 Let $A_\theta\in \mathcal{L}(\mathbb{R}^N)$ be the linear
mapping on $\mathbb{R}^N$ defined by $$A_\theta
z=\sum_{j=1}^Nz_jp_{\theta,N,j+1}(e_j)/\|e_j\|_\theta\,.$$ One easily
checks that for any $1\leq k\leq N$, $\text{Span}\{\
p_{\theta,N,j+1}(e_j),\ k\leq j\leq N\}=\text{Span}\{ e_j\ |\ k\leq j\leq N\}$
so that $A_\theta$ is an invertible mapping. By a change of variable,
we get
$$\int_{\mathbb{R}^N}f\left(p_{\theta,N,1}(\bdbeta)+A_\theta
  z_1^N\right)\prod_{j=1}^Ng_{0,1}(z_j)dz_j=\int_{\mathbb{R}^N}f(u)g_{p_{\theta,N,1}(\bdbeta),A_\theta
  A_\theta^t}(u)du\,, $$
where $g_{\mu,\Sigma}$ stands for the density of the normal law
  $\mathcal{N}(\mu,\Sigma)$. Since
  $\theta\to A_\theta$ is smooth on the set of invertible mappings in
  $\theta$, we deduce that there exist two constants $c_\Kapa>0$ and
  $C_\Kapa>0$ such
  that $c_\Kapa\text{Id}\leq
  A_\theta A_\theta^t\leq \text{Id}/c_\Kapa$ and $g_{p_{\theta,N,1}(\bdbeta),A_\theta
  A_\theta^t}(u)\geq C_\Kapa g_{p_{\theta,N,1}(\bdbeta),\text{Id}/c_\Kapa}(u)$ uniformly
  for $\theta=\hat\theta(s)$ with $s\in\mathcal{K}$. Assuming that
  $\bdbeta\in\E$, since $\theta\to p_{\theta,N,1}$ is smooth and $\E$ is
  compact, we have $\sup\limits_{\bdbeta\in\E,\theta=\hat\theta(s),\ s\in
  \mathcal{K}} \|p_{\theta,N,1}(\bdbeta)\|<\infty$. Therefore, there exist $C_\Kapa'>0$
  and $c_\Kapa'>0$ such that for any
  $(u,\bdbeta)\in\mathbb{R}^N× \E$ and any $\theta=\hat\theta(s),\
  s\in\mathcal{K}$

\begin{equation}
g_{p_{\theta,N,1}(\bdbeta),A_\theta
  A_\theta^t}(u)\geq C_\Kapa'g_{0,\text{Id}/c_\Kapa'}(u) \, .\\  \label{eq:5}
\end{equation}
  Using \eqref{eq:6} and \eqref{eq:5}, we deduce that for any $A$, for
  any $s \in \Kapa$ and $\theta=\hat\theta(s)$,
$$\ntrans_\theta(\bdbeta,A)\geq C_\Kapa'a_c^N \nu_\Kapa(A)\,,$$
with $\nu_\Kapa$ equals to the density of the normal law
$\mathcal{N}(0,\text{Id}/c_\Kapa')$.

This yields the
existence of the small set as well as equation
(\ref{6.3}).
\end{proof}\\

This property also implies the $\phi$-irreducibility of the Markov
chain $(\bdbeta_k)_k$ and its
aperiodicity  (cf. \cite{meyntweedie} p121).

We set $V:\mathbb{R}^N\to [1,+\infty [$ as the following function
\begin{equation}\label{defv}
V(\bdbeta) = 1 + \|\bdbeta\|^2\,. 
\end{equation}
We, in fact, have the
following property~: $\exists \ C_\Kapa >0$ such that~: $\forall \bdbeta \in \R^N$,
\begin{equation*}
 \sup\limits_{s\in \Kapa}  \| H_s (\bdbeta) \|\leq C_\Kapa \ V(\bdbeta)\,.
\end{equation*}
This condition is required for the implication of (\textbf{A2}) by
(\textbf{DRI1}).
\\

We now prove condition (\ref{6.2}).

Let $\Kapa$ be a compact subset of $\mathcal{S}$ and $p\geq 1$.
For any $1\leq j\leq N$, any $s\in \Kapa$ and $\theta=\hat\theta(s)$, we have
$$\ntrans_{\theta,j}V^p(\bdbeta)\leq
V^p(\bdbeta)+\int_{\mathbb{R}}V^p(p_{\theta,j}(\bdbeta)+ze_j/\|e_j\|_\theta)g_{0,1}(z)dz\,.$$
Since $V(\bdbeta+h)\leq 2(V(\bdbeta)+V(h))$  for any
$\bdbeta,h\in\mathbb{R}^N$ and since
 there exist two constants $c_\Kapa>0$ and
  $C_\Kapa>0$ such
  that for any $\bdbeta\in \R^N$, $\theta\in \hat\theta(\Kapa)$,
$\|p_{\theta,j}(\bdbeta)\|\leq C_\Kapa\|\bdbeta\|$ and $\|e_j\|_\theta \geq 1/c_\Kapa$, we have
$$\int_\mathbb{R}V^p(p_{\theta,j}(\bdbeta)+ze_j/\|e_j\|_\theta)g_{0,1}(z)dz
\leq 2^{p}C_\Kapa^pV^p(\bdbeta)\int_\mathbb{R}(1+V(c_\Kapa ze_j))^pg_{0,1}(z)dz \, .$$
We deduce that there exists an $C_\Kapa'>0$ such that for any
$\bdbeta\in\mathbb{R}^N$
\begin{equation*}
\sup_{\theta=\hat\theta(s),s\in\mathcal{K}}\ntrans_{\theta,j}V^p(\bdbeta)\leq
C_\Kapa'V^p(\bdbeta)\, .\label{eq:11}
\end{equation*}
Then, by composition $\ntrans_\theta V^p(\bdbeta)\leq C_\Kapa'^N V^p(\bdbeta)$
and (\ref{6.2}) holds for any $p\geq 1$.\\

Now consider the Drift condition (\ref{6.1}).

 To  prove this
inequality, we prove the same inequality for a subsidiary function
$V_\theta$ which depends on the parameters $\theta$ and then we deduce the
result for $V$.\\
So let us define for any $\theta=(\alpha,\Gamma_g)$ the function $V_\theta(\bdbeta)\triangleq
1+\|\bdbeta\|_\theta^2$.

\begin{lemma}
\label{lem:a1}
 Let $K$ be a compact subset of $\Theta$. For any $p\geq 1$, there
 exist an $0\leq \rho_K<1$ and
  an $C_K>0$ such that for any $\theta\in K$, any $\bdbeta\in\mathbb{R}^N$ we
  have
  $$\ntrans_\theta V_\theta^p(\bdbeta)\leq \rho_K V_\theta^p(\bdbeta)+C_K\,.$$
\end{lemma}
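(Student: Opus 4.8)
The plan is to establish a geometric drift condition for the full Gibbs kernel $\ntrans_\theta$ by first proving the analogous contraction for each single-coordinate kernel $\ntrans_{\theta,j}$ and then composing. The natural Lyapunov function is $V_\theta(\bdbeta) = 1 + \|\bdbeta\|_\theta^2$ because the proposal at coordinate $j$ is exactly the conditional Gaussian $\qj(\cdot|\bdbetamj,\theta)$ derived from the $\theta$-weighted inner product $\langle\cdot,\cdot\rangle_\theta$. With respect to this metric the Gaussian proposal on coordinate $j$ has mean given by the orthogonal projection structure already set up in the Notation block, so a single update should contract the squared $\theta$-norm along the $e_j$ direction while leaving the orthogonal part fixed.

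First I would analyze one step $\ntrans_{\theta,j}V_\theta^p$. Using the lower bound $a_c$ on the acceptance rate (from the small-set lemma) together with the explicit form of the Metropolis--Hastings kernel \eqref{eq:kernelj}, I would split the kernel into its proposal-accept part and its rejection part. The key structural fact is that in the $\langle\cdot,\cdot\rangle_\theta$ geometry, replacing coordinate $j$ by a draw from $\qj(\cdot|\bdbetamj,\theta)$ resamples the component of $\bdbeta$ along $e_j$ from a centered standard Gaussian while preserving $p_{\theta,j}(\bdbeta)$. Thus an accepted move shrinks the expected contribution of the $j$-th direction to $1 + \|p_{\theta,j}(\bdbeta)\|_\theta^2 + O(1)$, trading a term proportional to $\langle\bdbeta,e_j\rangle_\theta^2/\|e_j\|_\theta^2$ for a bounded constant. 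For $p=1$ this yields a clean per-coordinate contraction of the form $\ntrans_{\theta,j}V_\theta \leq V_\theta - c_K\,(\text{direction-}j\ \text{part}) + C_K$; for general $p\ge 1$ one applies convexity and the moment bounds of the Gaussian to get $\ntrans_{\theta,j}V_\theta^p \le \rho_j V_\theta^p + C$ on the region where $\|\bdbeta\|_\theta$ is large, and boundedness of $\ntrans_{\theta,j}V_\theta^p$ on compacts elsewhere.

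Then I would compose the $N$ coordinate kernels, $\ntrans_\theta = \ntrans_{\theta,N}\circ\cdots\circ\ntrans_{\theta,1}$. Since each factor satisfies a one-step drift, and since the already-proved bound $\ntrans_{\theta,j}V_\theta^p \le C_K' V_\theta^p$ (the analogue of \eqref{6.2} in the $\theta$-metric) controls the intermediate iterates, the composition contracts provided the accumulated contraction across all $N$ directions beats the additive constants. The uniformity in $\theta \in K$ comes from the compactness of $K$: the constants $a_c$, $c_K$, $C_K$ and the equivalence of $\|\cdot\|_\theta$ with the Euclidean norm are all uniform over a compact parameter set, exactly as exploited in the small-set lemma.

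\textbf{The main obstacle} will be getting a genuine contraction factor $\rho_K < 1$ after composing all $N$ coordinate updates, rather than merely $\rho_K \le 1$. A single Gibbs sweep only resamples each direction once, so one must verify that the accept/reject mechanism does not destroy the per-coordinate geometric decay; in particular the rejection part of each $\ntrans_{\theta,j}$ leaves $V_\theta^p$ unchanged, so the drift must come entirely from the (uniformly positive, by $a_c$) accepted moves, and one must track that the contraction gained in direction $j$ is not undone when later coordinates $j+1,\dots,N$ are updated. Handling this requires careful bookkeeping of how $p_{\theta,j}$ interacts across successive coordinates, but the invertibility and smoothness of $A_\theta$ established in the small-set lemma guarantee the directions jointly span $\mathbb{R}^N$, so the full sweep contracts uniformly on $K$.
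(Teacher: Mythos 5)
Your overall architecture matches the paper's: work with $V_\theta$ in the $\theta$-metric, exploit the orthogonality $V_\theta\bigl(p_{\theta,j}(\bdbeta)+z e_j/\|e_j\|_\theta\bigr)=V_\theta(p_{\theta,j}(\bdbeta))+z^2$, use the uniform lower bound $a_c$ on the acceptance rate, and get uniformity from compactness of $K$. But there is a genuine gap at the central step. The claimed per-coordinate geometric drift $\ntrans_{\theta,j}V_\theta^p\le\rho_j V_\theta^p+C$ with $\rho_j<1$ (even restricted to the region where $\|\bdbeta\|_\theta$ is large) is false: if $\bdbeta$ is large and $\theta$-orthogonal to $e_j$, then $p_{\theta,j}(\bdbeta)=\bdbeta$, so an accepted move merely adds Gaussian noise along $e_j$ and a rejected move changes nothing; in either case $\ntrans_{\theta,j}V_\theta^p(\bdbeta)\ge V_\theta^p(\bdbeta)$. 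A single coordinate update only gives $\|p_{\theta,j}(\bdbeta)\|_\theta\le\|\bdbeta\|_\theta$, never a uniform factor strictly below one, so the composition step as you set it up (``each factor satisfies a one-step drift'') does not go through.

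You correctly flag this as the main obstacle, but the resolution you gesture at (invertibility of $A_\theta$, ``careful bookkeeping'') is not supplied, and it is the heart of the proof. The actual mechanism is to expand the composed kernel over the $2^N$ acceptance/rejection patterns $u\in\{0,1\}^N$, which gives
\begin{equation*}
\ntrans_\theta V_\theta^p(\bdbeta)\;\le\;\sum_{u\in\{0,1\}^N}\prod_{j=1}^N(1-a_c)^{1-u_j}(a_c+\eta)^{u_j}\,V_\theta^p\bigl(p_{\theta,u}(\bdbeta)\bigr)+C_{K,\eta}\,,
\end{equation*}
where $p_{\theta,u}$ composes $p_{\theta,j}$ or the identity according to $u_j$. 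Every term with $u\neq\mathbf{1}$ can only be bounded by $V_\theta^p(\bdbeta)$, since the projections are $1$-Lipschitz for $\|\cdot\|_\theta$; the entire strict contraction is carried by the single all-accept term, whose coefficient is at least $a_c^N>0$ and which involves $p_\theta=p_{\theta,N}\circ\cdots\circ p_{\theta,1}$. One must then prove that $p_\theta$ is \emph{strictly} contracting uniformly over $K$: $\|p_\theta(\bdbeta)\|_\theta=\|\bdbeta\|_\theta$ forces $\langle\bdbeta,e_j\rangle_\theta=0$ for all $j$, hence $\bdbeta=0$ because $(e_j)_j$ is a basis, and compactness of $K$ upgrades this to $\|p_\theta(\bdbeta)\|_\theta\le\rho_K\|\bdbeta\|_\theta$ with $\rho_K<1$. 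The total coefficient is then $(1+\eta)^N-(1-{\rho'_K}^{2p})(a_c+\eta)^N$, which is strictly less than $1$ for $\eta$ small enough. Without isolating the all-accept term and establishing the strict uniform contractivity of the full composed projection, the lemma is not proved.
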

\begin{proof}
The proposal distribution for $\ntrans_{\theta,j}$ is given by $q(\bdbeta \ |\
\bdbetamj,\dy,\theta)\stackrel{\text{law}}{=}p_{\theta,j} (\bdbeta)+z\frac{e_j}{\|e_j\|_{\te}}$ where $z\sim \mathcal{N}(0,1)$.  
Then, there exists  $C_K$ such
that for any $\bdbeta\in\mathbb{R}^N$ and  any
measurable set $A\in\mathcal{B}(\mathbb{R}^N)$
$$\ntrans_{\theta,j}(\bdbeta,A)=(1-a_{\theta,\bdbeta})\mathds{1}_A(\bdbeta)+a_{\theta,\bdbeta}\int_\mathbb{R}
\mathds{1}_A\left(p_{\theta,j}(\bdbeta)+z\frac{e_j}{\|e_j\|_{\te}}\right) g_{0,1}(z)dz
\,,$$
where $a_{\theta,\bdbeta}\geq a_c$ ($a_c$ is a lower bound for the acceptance
rate),

Since $\langle
p_{\theta,j}(\bdbeta),e_j\rangle_\theta=0$, we get
$V_\theta\left(p_{\theta,j}(\bdbeta)+z\frac{e_j}{\|e_j\|_{\te}}\right)=V_\theta(p_{\theta,j}(\bdbeta))+z^2$
and
\begin{multline*}
  \ntrans_{\theta,j} V_\te ^p(\bdbeta)=
  (1-a_{\theta,\bdbeta})V^p_\theta(\bdbeta)+a_{\theta,\bdbeta}
  \int_\mathbb{R}\left(V_\theta(p_{\theta,j}(\bdbeta)) + z^2\right)^p
  g_{0,1}(z)dz\label{eq:1}\\
\leq
  (1-a_{\theta,\bdbeta})V^p_\theta(\bdbeta)+a_{\theta,\bdbeta}
  \left(V^p_\theta(p_{\theta,j}(\bdbeta))+ C_K V^{p-1}_\theta(p_{\theta,j}(\bdbeta))
    \int_\mathbb{R}(1+z^2 )^{p} g_{0,1}(z)dz\right)\\
\leq
  (1-a_{\theta,\bdbeta})V^p_\theta(\bdbeta)+a_{\theta,\bdbeta}V^p_\theta(p_{\theta,j}
  (\bdbeta))+C'_KV^{p-1}_\theta (p_{\theta,j}(\bdbeta)) \,.
\end{multline*}
We have used in the last inequality the fact that a Gaussian variable
has bounded
moments of any order. Since $a_{\theta,\bdbeta}\geq a_c$ and
$\|p_{\theta,j}(\bdbeta)\|_\theta\leq \|\bdbeta\|_\theta$ ($p_{\theta,j}$ is an
orthonormal projection for the dot product $\langle \cdot,\cdot
\rangle_\theta$), we get that $\forall \, \eta>0$,
$\exists \, C_{K,\eta}$ such that  $ \forall \, \bdbeta\in\mathbb{R}^N$ and $ \forall \, \theta\in K$
$$  \ntrans_{\theta,j} V^p_\te(\bdbeta)\leq
(1-a_c)V^p_\theta(\bdbeta)+(a_c+\eta)V_\theta^p(p_{\theta,j}(\bdbeta))+C_{K,\eta}\,.$$
By induction, we show that
$$\ntrans_\theta V^p_\te(\bdbeta)\leq
\sum_{u\in\{0,1\}^N}\prod_{j=1}^N(1-a_c)^{1-u_j}(a_c+\eta)^{u_j}
V^p_\te(p_{\theta,u}(\bdbeta))+\frac{C_{K,\eta} }{\eta } ( (1+\eta)^{N+1} -1) \,,$$
where $p_{\theta,u}=((1-u_N)\text{Id}+u_Np_{\theta,N})\circ\cdots\circ
((1-u_1)\text{Id}+u_1p_{\theta,1})$. Let
$p_\theta=p_{\theta,N}\circ\cdots\circ
p_{\theta,1}$ and note that $p_{\theta,j}$ is  contracting so that

$$\ntrans_\theta V^p_\te(\bdbeta)\leq b_{c,\eta}V^p_\theta(\bdbeta)+(a_c+\eta)^N
V^p_\theta(p_{\theta}(\bdbeta))+\frac{C_{K,\eta}  }{\eta } ( (1+\eta)^{N+1})\,,$$
for $b_{c,\eta}=\left(\sum_{u\in\{0,1\}^N,\ u\neq
  \mathbf{1}}\prod_{j=1}^N(1-a_c)^{1-u_j}(a_c+\eta)^{u_j}\right)$.

To
end the proof, we need to check that $p_\theta$ is strictly
contracting uniformly on $K$. Indeed,
$\|p_{\theta}(\bdbeta)\|_\theta=\|\bdbeta\|_\theta$  implies
that $p_{\theta,j}(\bdbeta)=\bdbeta$ for any $1\leq j\leq N$. This yields $\langle
\bdbeta,e_j\rangle_\theta=0$ and thus $\bdbeta=0$ since $(e_j)_{1\leq j\leq N}$ is a
basis. Using the continuity of the norm of $p_\theta$ in $\theta$ and
the compactness of $K$, we deduce that there exists  $0<\rho_K<1$ such
that $\|p_{\theta}(\bdbeta)\|_\theta\leq \rho_K\|\bdbeta\|_\theta$ for any $\bdbeta$
and $\theta\in K$. Changing $\rho_K$ for $1>\rho'_K>\rho_K$ we get
$(1+\rho_K^2\|\bdbeta\|_\theta^2)^p\leq {\rho'}_K^{2p}(1+\|\bdbeta\|_\theta^2)^p+C''_K$ for
some uniform constant $C''_K$. Therefore,
$$\ntrans_\theta V^p_\te(\bdbeta)\leq b_{c,\eta}V^p_\theta(\bdbeta)+{\rho'}_K^{2p}(a_c+\eta)^N
V^p_\theta(\bdbeta)+C''_{K,\eta}.$$
Since we have  $ \inf_{\eta>0}b_{c,\eta}+{\rho'}_K^{2p}(a_c+\eta)^N<1$ the result
is immediate.
\end{proof}

Next, we prove the expected inequality for the function $V$.
\begin{lemma}
\label{lem:2}
  For any compact set $K\subset \Theta$, any $p\geq 1$, there exist
$0<\rho_K< 1$, $C_K>0$ and $m_0$ such that $\forall m \geq m_0$ , $\forall \te \in K $,
$ \forall \ \bdbeta\in \R^N$
$$\ntrans_\theta^mV^p(\bdbeta)\leq \rho_K V^p(\bdbeta)+C_K\,.$$
\end{lemma}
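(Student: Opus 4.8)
The plan is to bootstrap from Lemma~\ref{lem:a1}, which already supplies a one-step geometric drift for the \emph{$\theta$-adapted} function $V_\theta^p$, and then to transfer this drift to the fixed function $V^p$ by comparing the two norms uniformly over the compact parameter set $K$. First I would iterate the inequality of Lemma~\ref{lem:a1}. Since $\ntrans_\theta$ is a Markov kernel it is positivity-preserving and fixes constants ($\ntrans_\theta\bfo=\bfo$), so applying $\ntrans_\theta$ repeatedly to $\ntrans_\theta V_\theta^p(\bdbeta)\leq \rho_K V_\theta^p(\bdbeta)+C_K$ gives, by an immediate induction,
\[
\ntrans_\theta^m V_\theta^p(\bdbeta)\leq \rho_K^m V_\theta^p(\bdbeta)+C_K\sum_{i=0}^{m-1}\rho_K^i\leq \rho_K^m V_\theta^p(\bdbeta)+\frac{C_K}{1-\rho_K}\,,
\]
valid for every $\te\in K$, every $\bdbeta\in\R^N$ and every $m\geq 1$.

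The key step is the uniform equivalence of $V_\theta$ and $V$ on $K$. Since $K$ is compact in $\Theta$ and each $\Gamma_g$ appearing in $\te=(\alpha,\Gamma_g)\in K$ lies in the cone $\Syme$, the eigenvalues of $\Gamma_g^{-1}$ stay bounded above and bounded away from zero uniformly over $K$. Because $\|\bdbeta\|_\theta^2=\sum_{i=1}^n\beta_i^t\Gamma_g^{-1}\beta_i$, this yields constants $0<\underline c_K\leq \overline c_K$ with $\underline c_K\|\bdbeta\|^2\leq \|\bdbeta\|_\theta^2\leq \overline c_K\|\bdbeta\|^2$ for all $\bdbeta$ and all $\te\in K$, hence constants $0<a_K\leq b_K$ such that $a_K V(\bdbeta)\leq V_\theta(\bdbeta)\leq b_K V(\bdbeta)$, and therefore $a_K^p V^p\leq V_\theta^p\leq b_K^p V^p$. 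Combining this sandwich with the iterated drift above gives
\[
\ntrans_\theta^m V^p(\bdbeta)\leq a_K^{-p}\,\ntrans_\theta^m V_\theta^p(\bdbeta)\leq \Big(\tfrac{b_K}{a_K}\Big)^p\rho_K^m\, V^p(\bdbeta)+\frac{C_K}{a_K^p(1-\rho_K)}\,.
\]

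It then remains only to absorb the multiplicative constant $(b_K/a_K)^p$ into the geometric factor. Since $\rho_K<1$, we have $\rho_K^m\to 0$, so I would pick $m_0$ large enough that $(b_K/a_K)^p\rho_K^{m_0}<1$; for every $m\geq m_0$ the coefficient $(b_K/a_K)^p\rho_K^m$ is then bounded by a single constant $\rho_K'<1$, while the additive term is bounded by $C_K'\triangleq C_K/(a_K^p(1-\rho_K))$, yielding the claimed $\ntrans_\theta^m V^p(\bdbeta)\leq \rho_K' V^p(\bdbeta)+C_K'$ for all $m\geq m_0$, $\te\in K$, $\bdbeta\in\R^N$. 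The only genuinely substantive point is the uniform two-sided control of $\|\cdot\|_\theta$ by $\|\cdot\|$ over $K$; once compactness delivers the uniform eigenvalue bounds this is routine, and the rest is bookkeeping with the geometric series and the choice of $m_0$.
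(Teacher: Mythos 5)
Your proof is correct and follows essentially the same route as the paper's: iterate the one-step drift of Lemma~\ref{lem:a1}, use the uniform two-sided comparison $c_1 V\leq V_\theta\leq c_2 V$ on the compact set $K$ (which the paper asserts and you justify via uniform eigenvalue bounds on $\Gamma_g^{-1}$), and choose $m_0$ so that $(c_2/c_1)^p\rho_K^{m_0}<1$. The only difference is that you spell out the induction and the norm-equivalence argument in more detail than the paper does.
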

\begin{proof}
  Indeed, there exist  $0\leq c_1\leq c_2$ such that $c_1V(\bdbeta)\leq
  V_\theta(\bdbeta)\leq c_2 V(\bdbeta)$ for any $(\bdbeta,\theta)\in\mathbb{R}^N×
  K$. Then, using the previous lemma, we have $\ntrans^m_\theta V^p(\bdbeta)\leq
  c_1^{-p}\ntrans^m_\theta V_\theta^p(\bdbeta)\leq c_1^{-p}(\rho_K^m
  V^p_\theta(\bdbeta)+C_K/(1-\rho_K))\leq (c_2/c_1)^p(\rho_K^m
  V^p(\bdbeta)+C_K/(1-\rho_K))$. Choosing $m$ large enough for $
  (c_2/c_1)^p\rho_K^m<1$ gives the result.
\end{proof}

This finishes the proof of (\ref{6.1}) and at the same time of
(\textbf{A2}).

\subsection{Proof of assumption (\textbf{A3'})}
$\quad$

 The geometric ergodicity of the Markov chain, implied by the Drift
 condition (\ref{6.1}), ensures the existence of  a solution
 of the Poisson
  equation (cf. \cite{meyntweedie}):
  \begin{equation*}
  g_{\hat\theta(s)} (\bdbeta)= \sum\limits_{k\geq 0} (\ntrans^k_{\hat{\te}(s)}
  H_s(\bdbeta) - h(s)).
  \end{equation*}

We first prove condition (\textbf{A3'(i)}).

Since $H_s(\bdbeta)=\tS(\bdbeta) -s $ with $\tS(\bdbeta) $ at most
quadratic in $\bdbeta$, the choice of $V$ directly ensures
(\ref{A2i1}). \\

Due to the result presented in
\cite{doucmoulinesrosenthal}, there exist upper bounds for the
convergence rates and the
constants involved in the quantification of the geometrical ergodicity
of all the chains indexed by $s\in \Kapa$
which only depend on $m,\lambda, B,\delta $. Therefore, these constants only
depend on the fixed compact set $\Kapa$. This yields the uniform
ergodicity of the family of Markov chains on $\Kapa$.
So
there exist constants $0<\gamma_\Kapa<1$ and  $C_\Kapa>0$  such that
\begin{eqnarray*}
  \|g_{\hat\theta(s)}\|_V = \| \sum\limits_{k\geq 0} (\ntrans^k_{\hat{\te}(s)}
  H_s(\bdbeta) - h(s))\|_V \leq  \sum\limits_{k\geq 0} C_\Kapa \gamma_\Kapa^k \|H_s\|_V <\infty \ .
\end{eqnarray*}
Thus $\forall s \in \Kapa , \ g_{\hat\theta(s)}$ belongs to $\mathcal{L}_V = \{ g~ :
\R^N \to \R
, \| g\| _V < \infty \} $.

Repeating the same calculation as above, it is immediate that
$\ntrans_{\hat{\te}(s)}
g_{\hat\theta(s)}$ belongs to $\mathcal{L}_V$ too. This ends the proof of
(\textbf{A3'(i)}).\\

We now move to the Hölder  condition  (\textbf{A3'(ii)}). We will use
the following lemmas which state Lipschitz conditions on the
transition kernel and its iterates:
\begin{lemma}\label{lem:Holder}
Let $\mathcal{K}$ be a compact subset of $\mathcal{S}$. There exists a
constant $C_{\mathcal{K}}$ such that
for any $p\geq 1$ and any function $f \in \mathcal{L}_{V^p}$, $\forall (s,s')
\in \mathcal{K}^2$ we have~:

  \begin{eqnarray*}
  \| \ntrans_{\hat\te(s)} f -
  \ntrans_{\hat\te(s')} f \|_{V^{p+1/2}}  \leq
C_\mathcal{K}  
\|f\|_{V^{p}} \  \|s-s'\| \ .
\end{eqnarray*}
\end{lemma}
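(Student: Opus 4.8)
The plan is to reduce the claim to a Lipschitz estimate in the parameter $\theta$ and then to exploit the product structure $\ntrans_\theta=\ntrans_{\theta,N}\circ\cdots\circ\ntrans_{\theta,1}$ of the hybrid Gibbs kernel. Since $\hat\theta$ is smooth on $\mathcal{S}$, its restriction to the compact set $\mathcal{K}$ is Lipschitz, so $\|\hat\theta(s)-\hat\theta(s')\|\leq L_{\mathcal{K}}\|s-s'\|$; moreover $\hat\theta(\mathcal{K})$ is a compact subset of $\Theta$ on which $\alpha$ is bounded, $\Gamma_g$ is positive definite and bounded away from degeneracy, and the acceptance rate is uniformly bounded below by $a_c>0$ (as in the proof of the small set lemma). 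It therefore suffices to prove that for $\theta,\theta'$ in the compact set $\hat\theta(\mathcal{K})$ one has $\|\ntrans_\theta f-\ntrans_{\theta'}f\|_{V^{p+1/2}}\leq C_{\mathcal{K}}\|f\|_{V^p}\|\theta-\theta'\|$, and then to compose with the Lipschitz constant of $\hat\theta$.

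To descend from a full scan to a single coordinate I would use the telescoping identity
\[
\ntrans_\theta f-\ntrans_{\theta'}f=\sum_{j=1}^N\ntrans_{\theta,N}\circ\cdots\circ\ntrans_{\theta,j+1}\circ\left(\ntrans_{\theta,j}-\ntrans_{\theta',j}\right)\circ\ntrans_{\theta',j-1}\circ\cdots\circ\ntrans_{\theta',1}\,f.
\]
The right factors $\ntrans_{\theta',j-1}\circ\cdots\circ\ntrans_{\theta',1}$ map $\mathcal{L}_{V^p}$ into itself with a uniform norm by the coordinatewise bound established in the proof of \eqref{6.2}, so $f_j\triangleq\ntrans_{\theta',j-1}\circ\cdots\circ\ntrans_{\theta',1}f$ satisfies $\|f_j\|_{V^p}\leq C\|f\|_{V^p}$, and the left factors likewise preserve $\mathcal{L}_{V^{p+1/2}}$ with a uniform norm. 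Hence everything reduces to the single-coordinate estimate $\|\ntrans_{\theta,j}h-\ntrans_{\theta',j}h\|_{V^{p+1/2}}\leq C_{\mathcal{K}}\|h\|_{V^p}\|\theta-\theta'\|$ valid for any $h\in\mathcal{L}_{V^p}$, the half-power loss on $V$ then propagating unchanged through the composition and the sum over $j$.

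For this single-coordinate bound I would start from the explicit form \eqref{eq:kernelj}, writing (and abbreviating $\qj(\theta)\equiv\qj(b|\bdbetamj,\theta)$, $r_j(\theta)\equiv r_j(\bdbeta^j,b;\bdbetamj,\theta)$)
\[
\ntrans_{\theta,j}h(\bdbeta)=\int h(\betajb)\,r_j(\theta)\,\qj(\theta)\,db+h(\bdbeta)\left(1-\int r_j(\theta)\,\qj(\theta)\,db\right),
\]
subtracting the analogous expression at $\theta'$ and splitting each integrand difference as $[r_j(\theta)-r_j(\theta')]\qj(\theta)+r_j(\theta')[\qj(\theta)-\qj(\theta')]$. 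The acceptance rate depends on $\theta$ only through $\alpha$, via the ratio $\qobs(\bdy|\betajb,\theta)/\qobs(\bdy|\bdbeta,\theta)$; because $K_p$ is bounded and $\alpha$ stays in a compact set, the competing quadratic data-fit terms are uniformly bounded, so $r_j$ stays bounded away from $0$ and is uniformly Lipschitz in $\theta$ with no growth in $\bdbeta$, making the first piece harmless. The decisive piece is $\qj(\theta)-\qj(\theta')$: the proposal $\qj(\cdot|\bdbetamj,\theta)=g_{\mu,v}$ is the univariate Gaussian conditional of the prior, whose variance $v=v(\Gamma_g)$ is smooth and bounded below and whose mean $\mu=\mu(\bdbetamj,\Gamma_g)$ is linear in $\bdbetamj$. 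Interpolating in $(\mu,v)$ and differentiating gives $|\qj(\theta)-\qj(\theta')|\leq C(|\mu-\mu'|+|v-v'|)$ times a Gaussian-times-polynomial factor, with $|\mu-\mu'|\leq C\|\bdbeta\|\,\|\theta-\theta'\|$ and $|v-v'|\leq C\|\theta-\theta'\|$. It is exactly this factor $\|\bdbeta\|\leq V^{1/2}(\bdbeta)$, produced by the $\Gamma_g$-derivative of the conditional mean, that yields the extra half power of $V$. Bounding $|h(\betajb)|\leq\|h\|_{V^p}V^p(\betajb)$ and using the one-dimensional Gaussian moment estimate $\int V^p(\betajb)\,\qj(\theta)\,db\leq C\,V^p(\bdbeta)$ then closes the single-coordinate bound; the rejection term $h(\bdbeta)(a_{\theta'}(\bdbeta)-a_\theta(\bdbeta))$ is handled the same way, with $|a_\theta(\bdbeta)-a_{\theta'}(\bdbeta)|\leq C\,V^{1/2}(\bdbeta)\|\theta-\theta'\|$.

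The main obstacle is the careful accounting of the half-power gain in the previous step: one must verify that the only source of unbounded growth in $\bdbeta$ is the conditional mean $\mu(\bdbetamj,\Gamma_g)$, that this growth is precisely linear, and that the Gaussian moment bounds, the variance lower bound, and the Lipschitz constants of $r_j$, $\mu$ and $v$ are all uniform over $\theta\in\hat\theta(\mathcal{K})$ and over the interpolating segment. Once the single-coordinate Lipschitz estimate with its $V^{p+1/2}$ loss is secured, the telescoping identity together with the norm-preservation property \eqref{6.2} assembles the full-scan bound, and composing with the Lipschitz constant of $\hat\theta$ on $\mathcal{K}$ gives the lemma for every $p\geq1$.
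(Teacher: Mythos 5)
Your proposal is correct and follows essentially the same route as the paper: a telescoping decomposition of the full scan into single-coordinate differences controlled by the $V^p$-norm preservation of the individual kernels, and a single-coordinate Lipschitz estimate obtained by differentiating the Gaussian proposal and the acceptance ratio in the parameter, with the linear dependence of the conditional mean $b_{\theta,j}(\bdbeta)$ on $\bdbeta$ identified as the precise source of the extra factor $V^{1/2}$. The only cosmetic difference is that the paper differentiates along the interpolating path $s(\epsilon)$ directly in $s$-space rather than splitting the product $r_j q_j$ and interpolating in $(\mu,v)$, which amounts to the same computation.
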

\begin{proof}
 For any $1\leq j\leq N$
and $f\in \mathcal{L}_{V^p}$, we have
\begin{equation*}
\ntrans_{\theta,j}f(\bdbeta)=(1-r_{j}(\bdbeta,\theta))f(\bdbeta)+
\int_\mathbb{R}f(\betajb)r_{j}(\bdbeta^j,b;\bdbetamj,\theta)\qj(b|\bdbetamj,\theta)db
\,,
\label{eq:9}
\end{equation*}
where $r_j(\bdbeta,\theta)=\int_\mathbb{R}r_j(\bdbeta^j, b; \bdbetamj,
\theta)\qj(b|\bdbetamj,\theta)db$ is the average acceptance rate.

Let $s$ and $s'$ be two points in $\mathcal{K}$ and
$s(\epsilon)=(1-\epsilon)s+\epsilon s'$ for $\epsilon\in [0,1]$
be a linear interpolation between $s$ and $s'$ (since $\mathcal{S}$ is
convex,
we can assume that $\mathcal{K}$ is a
convex set so that $s(\epsilon)\in\mathcal{K}$ for any $\epsilon\in
[0,1]$). We denote also by $\theta(\epsilon)\triangleq
\hat{\theta}(s(\epsilon))$ the associated path in $\Theta$
which is a  continuously differentiable  function.
To study the difference $\|(\ntrans_{\theta(1),j}-\ntrans_{\theta(0),j})f(\bdbeta)\|$,
 introduce $\ntrans_{\theta,j}^1f(\bdbeta)\triangleq (1-r_{j}(\bdbeta,\theta))f(\bdbeta) $ and
$\ntrans_{\theta,j}^2f(\bdbeta)\triangleq
\int_\mathbb{R}f(\betajb)r_j(\bdbeta^j, b; \bdbetamj, \theta) \qj(b|\bdbetamj,\theta)db$.
We start with
the difference
$\|(\ntrans_{\theta(1),j}^2-\ntrans_{\theta(0),j}^2)f(\bdbeta)\|$.
First note that under the conditional law $\qj(b|\bdbetamj,\theta)$,
$b\sim\mathcal{N}(b_{\theta,j}(\bdbeta),1/\|e_j\|_\theta^2)$ where
\begin{equation*}
\label{a1}
b_{\theta,j}(\bdbeta)\triangleq e_j^tp_{\theta,j}(\bdbeta)=e_j^t\bdbeta-\langle \bdbeta,e_j\rangle_\theta/\|e_j\|_\theta^2
\end{equation*}
 is the $j$-th coordinate of $p_{\theta,j}(\bdbeta)$. We have
$$ \ntrans_{\theta,j}^2f(\bdbeta)=
\int_\mathbb{R}f(\bdbeta_{0\to j}+be_j)r_{j}(\bdbeta^j,b; \bdbetamj,\theta)
\exp\left(-\frac{(b-b_{\theta,j}(\bdbeta))^2\|e_j\|^2_\theta}{2}\right)
\frac{\|e_j\|_\theta}{\sqrt{2\pi}}db\,.$$

 Since
 $r_j(\bdbeta^j, b; \bdbetamj, \theta)=\tilde{r}_{j}(\bdbeta^j,b;\bdbetamj,\theta) \land 1$ where
 $\tilde{r}_{j}(\bdbeta^j,b;\bdbetamj,\theta) \triangleq
 \frac{\qobs(\bdy|\betajb,\theta)}{\qobs(\bdy|\bdbeta,\theta)}$ is a smooth
 function in $\theta$, we have
$$\|(\ntrans_{\theta(1),j}^2-\ntrans_{\theta(0),j}^2)f(\bdbeta)\|\leq
\int_0^1\int_\mathbb{R}\|f(\bdbeta_{0\to j}+be_j) \| \left|\frac{d}{d\epsilon}\left(r_j(\bdbeta^j,
    b; \bdbetamj, \theta)
    \exp(-\frac{(b-b_{\theta,j}(\bdbeta))^2\|e_j\|^2_\theta}{2})\frac{\|e_j\|_\theta}{\sqrt{2\pi}}\right)\right|db\,.$$
However, one easily checks that there exists a constant $C_\mathcal{K}$ such
that for any $s,s'\in \mathcal{K}$, $\epsilon$, $j$ and $\bdbeta$ (with $\theta=\theta(\epsilon)$):
\begin{multline} \label{dexpdeps}
\left| \frac{d}{d\epsilon}\exp\left(-\frac{(b-b_{\theta,j}(\bdbeta))^2\|e_j\|^2_\theta}{2}
\right)\frac{\|e_j\|_\theta}{\sqrt{2\pi}}\right| \\
\leq C_\mathcal{K} (1+|b-b_{\theta,j}(\bdbeta)|)^2
\exp\left(-\frac{(b-b_{\theta,j}(\bdbeta))^2\|e_j\|^2_\theta}{2}\right)
\frac{\|e_j\|_\theta}{\sqrt{2\pi}}
\left(\left|\frac{d}{d\epsilon}b_{\theta,j}(\bdbeta)\right|
+\left|\frac{d}{d\epsilon}\|e_j\|_\theta\right|\right) \ .
\end{multline}
Since $\frac{d}{d\epsilon}\|e_j\|_\theta=
\frac{1}{2\|e_j\|_\theta}e_j^t\frac{d}{d\epsilon}
\Gamma_{\theta}^{-1}e_j$,
$\frac{d}{d\epsilon}\Gamma_{\theta}^{-1}
=-\Gamma_{\theta}^{-1}\frac{d}{d\epsilon}\Gamma_{\theta}
\Gamma_{\theta}^{-1}$ and
$\frac{d}{d\epsilon}\Gamma_{\theta}
=\frac{s'_3-s_3}{n+a_g}$ (see  \eqref{PhotoUpdateClust}), we deduce that there
exists another constant $C_{\mathcal{K}}$ such that
\begin{equation}\label{a0}
\left|\frac{d}{d\epsilon}\|e_j\|_\theta\right|\leq C_{\mathcal{K}}\|s'-s\|\,.
\end{equation}
Similarly, updating the
constant $C_\mathcal{K}$, we have\footnote{\label{note}Note that the
  extra factor $(1+\|\bdbeta\|)$ appearing in the RHS of \ref{a2}
  compared to the RHS of \ref{a0} alleviate the need to show
  the usual Lipschitz condition $ \| \ntrans_{\hat\te(s)} f -
  \ntrans_{\hat\te(s')} f \|_{V^{p'}}  \leq
C_\mathcal{K}
\|f\|_{V^{q}} \  \|s-s'\| $ with $q=p$. Weaker Lipschitz conditions as
conditions \textbf{A3' (ii)} of Theorem \ref{maintheo} are needed}
\begin{equation}\label{a2}
\left|\frac{d}{d\epsilon}b_{\theta,j}(\bdbeta)\right|\leq
C_{\mathcal{K}}(1+\|\bdbeta\|)\|s'-s\|\,.
\end{equation}
Now, concerning the derivative of
$\tilde{r}_{j}(\bdbeta^j,b;\bdbetamj,\theta)$,  since
$$\log(\tilde{r}_{j}(\bdbeta^j,b;\bdbetamj,\theta))=\frac{1}{2}
\sum_{i=1}^n\left(\|\dy_i-K^{\tilde{\bdbeta}_i}_p\alpha\|^2
-\|\dy_i-K^{{\bdbeta}_i}_p\alpha\|^2\right)\,,$$
 with $\tilde{\bdbeta}_i = \bdbeta_{i,b\to j} $, $i$ corresponding to the
 $i^{th}$ image,
 only one term of the previous sum is nonzero. We
 deduce from the fact that
$K_p$ is bounded and from \eqref{PhotoUpdateClust} that
$|\frac{d}{d\epsilon}\log(\tilde{r}_{j}(\bdbeta^j,b;\bdbetamj,\theta))|
\leq C_\Kapa|\frac{d}{d\epsilon}\alpha|\leq C_{\mathcal{K}}\|s-s'\|$, so
 that using the fact that
 $\tilde{r}_{j}(\bdbeta^j,b;\bdbetamj,\theta)$  is uniformly
 bounded for $\theta\in\hat{\theta}(\mathcal{K})$, $\bdbeta\in\mathbb{R}^N$ and
 $b\in\mathbb{R}$, there exists a new constant $C_\mathcal{K}$ such that
\begin{equation*}\label{a3}
|\frac{d}{d\epsilon}\tilde{r}_{j}(\bdbeta^j,b;\bdbetamj,\theta))|
\leq C_{\mathcal{K}}\|s-s'\|\,.
\end{equation*}
Thus, using \eqref{dexpdeps}, \eqref{a0} and \eqref{a2}, 
 we get for a new
constant $C_{\mathcal{K}}$ that
\begin{multline*}
\left|\frac{d}{d\epsilon}r_j(\bdbeta^j, b; \bdbetamj, \theta) \exp\left(-\frac{(b-b_{\theta,j}(\bdbeta))^2\|e_j\|^2_\theta}{2}
\right)\frac{\|e_j\|_\theta}{\sqrt{2\pi}}\right|\\
\leq C_\mathcal{K} (1+\|\bdbeta\|)\|s'-s\|(1+|b-b_{\theta,j}(\bdbeta)|)^2
\exp\left(-\frac{(b-b_{\theta,j}(\bdbeta))^2\|e_j\|^2_\theta}{2}\right)
\frac{\|e_j\|_\theta}{\sqrt{2\pi}}\,.
\end{multline*}
Since $\|f(\bdbeta)\|\leq \| f\|_{V^p}V^p(\bdbeta)$ and $V(a+b)=1+\|a+b\|^2\leq
2(V(a)+V(b))$, we have $\|f(\bdbeta_{0\to j}+be_j)\|\leq  C \|
f\|_{V^p}(V^p(\bdbeta_{0\to j})+V^p(be_j))$ with $C=2^{2p-1}$. Hence, there
exists an $C_\mathcal{K}$ such that  $\forall \, (s,s')\in\mathcal{K}^2$,
$\forall \ 1 \leq j \leq N$, $\forall \, \bdbeta \in \R^N$ and $\forall \, \epsilon\in[0,1]$:
\begin{multline*}
  \int_\mathbb{R}\|f(\bdbeta_{0\to j }+be_j)\|\left|\frac{d}{d\epsilon}\left(r_j(\bdbeta^j,
      b; \bdbetamj, \theta)
      \exp\left(-\frac{(b-b_{\theta,j}(\bdbeta))^2\|e_j\|^2_\theta}{2}\right)\frac{\|e_j\|_\theta}
{\sqrt{2\pi}}\right)\right|db\\
\leq  C_{\mathcal{K}}\|
f\|_{V^p}V^p(\bdbeta_{0\to j})(1+\|\bdbeta\|)\|s'-s\|\leq  C_{\mathcal{K}}\|
f\|_{V^p}V^p(\bdbeta)(1+\|\bdbeta\|)\|s'-s\|\,, 
\end{multline*}
where we have used the fact that a Gaussian variable has finite moments
of all order. Since $(1+\|\bdbeta\|)\leq (2V(\bdbeta))^{1/2}$, we get (updating
$C_{\mathcal{K}}$) that
\begin{equation}
  \label{eq:8}
 \|(\ntrans_{\theta(1),j}^2-\ntrans_{\theta(0),j}^2)f(\bdbeta)\|\leq  C_{\mathcal{K}}\|
f\|_{V^p}V^{p+1/2}(\bdbeta)\|s'-s\|\,.
\end{equation}
Now, looking at the first term
 in \eqref{eq:9}, we
deduce easily from the previous study for $f\equiv f(\bdbeta)$ that
\begin{equation}
\|(\ntrans_{\theta(1),j}^1-\ntrans_{\theta(0),j}^1)f(\bdbeta)\|\leq C_\mathcal{K}
V(\bdbeta)^{1/2}\|s'-s\|\|f(\bdbeta)\|\leq  C_{\mathcal{K}}\|
f\|_{V^p}V^{p+1/2}(\bdbeta)\|s'-s\| \,;\label{eq:10}
\end{equation}
so that adding \eqref{eq:8} and \eqref{eq:10}, we get (again updating
$C_{\mathcal{K}}$) that
\begin{equation}
\|(\ntrans_{\theta(1),j}-\ntrans_{\theta(0),j})f\|_{V^{p+1/2}}\leq  C_{\mathcal{K}}\|
f\|_{V^p}\|s'-s\|\,.\label{eq:11b}
\end{equation}
We end the proof, saying that
$\ntrans_{\theta(1)}-\ntrans_{\theta(0)}=\sum_{j=1}^N
  \ntrans_{\theta(1),j+1,N}\circ
  (\ntrans_{\theta(1),j}-\ntrans_{\theta(0),j})\circ  \ntrans_{\theta(0),1,j-1}$
  where
  $\ntrans_{\theta,q,r}=\ntrans_{\theta,r}\circ\ntrans_{\theta,r-1} \circ  \cdots\circ
  \ntrans_{\theta,q}$ for any integer $q\leq r$ and any $\theta\in\Theta$
  so that using \eqref{eq:11} and \eqref{eq:11b}, the result is
  straightforward.
\end{proof}

\hspace{0.5cm}

\begin{lemma}\label{lem67}
Let $\mathcal{K}$ be a compact subset of $\mathcal{S}$. There exists a
constant $C_\mathcal{K}$ such that
for  all $p\geq 1$ and any function $f \in \mathcal{L}_{V^p}$, $\forall (s,s')
\in \mathcal{K}^2$, $\forall k\geq 0$, we have for
$\theta=\hat{\theta}(s)$ and $\theta'=\hat{\theta}(s')$ that:
  \begin{eqnarray*}
\| \ntrans_\te ^k f -
  \ntrans_{\te'} ^k f \|_{V^{p+1/2}}  \leq
C_\mathcal{K}  
\|f\|_{V^{p}} \|s-s'\| \ .
\end{eqnarray*}
\end{lemma}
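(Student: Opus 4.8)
The plan is to reduce the $k$-fold statement to the one-step Lipschitz estimate of Lemma \ref{lem:Holder} through a telescoping decomposition, and then to keep the resulting sum bounded uniformly in $k$ by exploiting the uniform geometric ergodicity of the family $(\ntrans_{\hat\theta(s)})_{s\in\mathcal{K}}$ already invoked in the proof of (\textbf{A3'(i)}). Writing $\theta=\hat\theta(s)$ and $\theta'=\hat\theta(s')$, I would start from the identity
\begin{equation*}
\ntrans_\theta^k f-\ntrans_{\theta'}^k f=\sum_{i=0}^{k-1}\ntrans_\theta^{\,i}\,(\ntrans_\theta-\ntrans_{\theta'})\,\ntrans_{\theta'}^{\,k-1-i}f\,,
\end{equation*}
which telescopes since the partial products $\ntrans_\theta^{\,i}\ntrans_{\theta'}^{\,k-i}$ cancel pairwise.

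The crucial observation is that $\ntrans_\theta$ and $\ntrans_{\theta'}$ are both Markov kernels, so $(\ntrans_\theta-\ntrans_{\theta'})c=0$ for any constant $c$. Hence in the $i$-th summand I may replace the inner function $\ntrans_{\theta'}^{\,k-1-i}f$ by its centred version $\ntrans_{\theta'}^{\,k-1-i}f-\pi_{\theta'}(f)$ before applying Lemma \ref{lem:Holder}. Since the drift condition \eqref{6.1} holds for every power of $V$, the uniform ergodicity established via \cite{doucmoulinesrosenthal} yields constants $0<\gamma_\mathcal{K}<1$ and $C_\mathcal{K}>0$, depending only on $\mathcal{K}$, such that $\|\ntrans_{\theta'}^{\,m}f-\pi_{\theta'}(f)\|_{V^p}\leq C_\mathcal{K}\gamma_\mathcal{K}^{\,m}\|f\|_{V^p}$ uniformly for $s'\in\mathcal{K}$ (here one also uses the uniform moment bound $\sup_{s'\in\mathcal{K}}\pi_{\hat\theta(s')}(V^p)<\infty$ to absorb the centring constant). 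Lemma \ref{lem:Holder} then gives
\begin{equation*}
\|(\ntrans_\theta-\ntrans_{\theta'})\,\ntrans_{\theta'}^{\,k-1-i}f\|_{V^{p+1/2}}\leq C_\mathcal{K}\,\gamma_\mathcal{K}^{\,k-1-i}\,\|f\|_{V^p}\,\|s-s'\|\,.
\end{equation*}

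It remains to control the outer factor $\ntrans_\theta^{\,i}$ applied to $F\triangleq(\ntrans_\theta-\ntrans_{\theta'})\ntrans_{\theta'}^{\,k-1-i}f$, which has finite $V^{p+1/2}$-norm (note $p+1/2\geq 1$, so \eqref{6.1} applies to this power as well). Centring by $\pi_\theta(F)$ and using geometric ergodicity in the $V^{p+1/2}$-norm together with $\sup_{s\in\mathcal{K}}\pi_{\hat\theta(s)}(V^{p+1/2})<\infty$, one obtains $\|\ntrans_\theta^{\,i}F\|_{V^{p+1/2}}\leq C_\mathcal{K}\|F\|_{V^{p+1/2}}$ uniformly in $i\geq 0$ and $s\in\mathcal{K}$. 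Combining the two estimates, each summand is bounded by $C_\mathcal{K}\gamma_\mathcal{K}^{\,k-1-i}\|f\|_{V^p}\|s-s'\|$, and summing the geometric series produces the $k$-independent bound
\begin{equation*}
\|\ntrans_\theta^k f-\ntrans_{\theta'}^k f\|_{V^{p+1/2}}\leq \frac{C_\mathcal{K}}{1-\gamma_\mathcal{K}}\,\|f\|_{V^p}\,\|s-s'\|\,.
\end{equation*}
I expect the accumulation over the $k$ steps to be the main obstacle: a naive term-by-term use of Lemma \ref{lem:Holder} costs a factor $k$, so the whole argument hinges on the centring identity $(\ntrans_\theta-\ntrans_{\theta'})\pi_{\theta'}(f)=0$, which turns the merely bounded inner iterate into a geometrically decaying one and renders the telescoping sum summable uniformly in $k$.
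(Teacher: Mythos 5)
Your proof is correct and follows essentially the same route as the paper: the same telescoping decomposition, the same centring trick $(\ntrans_\theta-\ntrans_{\theta'})\lstat_{\theta'}(f)=0$ to make the inner iterate geometrically decaying via uniform ergodicity, Lemma \ref{lem:Holder} for the one-step Lipschitz estimate, and a uniform $V$-norm bound on the outer iterates before summing the geometric series. The only cosmetic difference is that the paper controls the outer factor directly through $\sup_{j}\sup_{\theta}\|\ntrans_\theta^j V^q\|_{V^q}<\infty$ rather than by re-centring.
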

\begin{proof}
We use the same decomposition of the difference as previously:
  \begin{eqnarray*}
    \ntrans_\te ^k f -
  \ntrans_{\te'} ^k f = \sum\limits_{i=1}^{k-1}
\ntrans_{\te}^i (\ntrans_\te -\ntrans_{\te'}) (\ntrans_{\te'} ^{k-i-1}
f - \lstat_{\te'}(f) ) \ .
\end{eqnarray*}
Using Lemma \ref{lem:Holder},  the fact that $\|\ntrans^k_\theta
(f-\pi_\theta(f))\|_{V^p}\leq \gamma_\Kapa^k \|f\|_{V^p}$ with $\gamma_\Kapa<1$
(geometric ergodicity) and $\sup\limits_{j\geq 0}\sup\limits_{\theta \in K} \|
\ntrans^j_\theta V^q\| _{V^q} < \infty $ we get:

\begin{eqnarray*}
  \|  \ntrans_\te ^k f -
  \ntrans_{\te'} ^k f \|_{V^{p+1/2}} &\leq &C_\Kapa \sum\limits_{i=1}^{k-1}
\| (\ntrans_\te -\ntrans_{\te'}) (\ntrans_{\te'} ^{k-i-1}
f - \lstat_{\te'}(f) )\|_{V^{p+1/2}} \\
&\leq& C_\Kapa \|f\|_{V^{p}} |s-s'|  \sum\limits_{i=1}^{k-1}  \gamma_\Kapa^{k-i+1}
\end{eqnarray*}
and the lemma is proved.
\end{proof}\\

We now prove that $h$ is a Hölder function, adapting linearly
Appendix B of \cite{andrieumoulinespriouret}.

Let $\bdbeta\in\mathbb{R}^N$ and denote by $\theta=\hat{\theta}(s)$ and $\theta'=\hat{\theta}(s')$.
Write $h(s)-h(s')=A(s,s') +B(s,s') + C(s,s')$, where
\begin{eqnarray*}
  A(s,s') &=& (h(s) -\ntrans_{\te}^k H_s(\bdbeta)) +
  (\ntrans_{\theta'}^k H_{s'}(\bdbeta) - h(s'))\,,\\
B(s,s')&=& \ntrans_{\theta}^k H_s(\bdbeta) -
\ntrans_{\theta'}^k H_s(\bdbeta) \,,\\
C(s,s')& =&\ntrans_{\theta'}^k H_s(\bdbeta)
-\ntrans_{\theta'}^k H_{s'}(\bdbeta) \, .
\end{eqnarray*}
Using the geometric ergodicity, Lemma \ref{lem:Holder} and Lemma
\ref{lem67}, we get that there exists an $C>0$, independent of $k$
such that:
\begin{eqnarray*}
  \| A(s,s')\| &\leq& C \gamma^k \sup\limits_{\s \in \Kapa} \| H_s\|_{V} V(\bdbeta),\\
 \|B(s,s') \|&\leq& C  \sup\limits_{\s \in \Kapa} \| H_s\|_{V} \|s-s'\| V^{3/2}(\bdbeta),\\
\| C(s,s')\|&\leq& C  \sup\limits_{\s \in \Kapa} \| H_s\|_{V} \|s-s'\|  V(\bdbeta) \, .
\end{eqnarray*}
This yields
\begin{eqnarray*}
  \|h(s)-h(s')\| \leq C V^{3/2}(\bdbeta) (\gamma ^k + \|s-s'\| ) \, .
\end{eqnarray*}
Hence, setting $k=[\log \|s-s'\| / \log (\gamma)]$ if $\|s-s'\|<1$ and $1$
otherwise, we get the result.\\

We can now end the proof of (\textbf{A3'(ii)}):
On one hand we have:
\begin{multline*}
\|(\ntrans_{\theta} ^k H_s(\bdbeta) - h(s)) - (\ntrans_{\theta'}
    ^k H_{s'}(\bdbeta) - h(s')) \| \leq
\|\ntrans_{\theta} ^k  H_s(\bdbeta)- \ntrans_{\theta} ^k
H_{s'}(\bdbeta) \| \\  + \|\ntrans_{\theta} ^k  H_{s'}(\bdbeta)- \ntrans_{\theta'} ^k
H_{s'}(\bdbeta) \|+ \|  h(s) -h(s')\|\leq
C \|s-s'\| V^{3/2}(\bdbeta) \, .
\end{multline*}
On the other hand, we have thanks to the geometric ergodicity,
\begin{eqnarray*}
  \|(\ntrans_{\theta} ^k H_s(\bdbeta) - h(s)) - (\ntrans_{\theta'}
    ^k H_{s'}(\bdbeta) - h(s')) \| \leq C \gamma^k V^{3/2}(\bdbeta) \, .
\end{eqnarray*}
Hence for any $t\geq 0$ and $T\geq t$, we have
\begin{multline*}
  \|\ntrans_{\theta}^t g_{\hat\theta(s)} (\bdbeta) - \ntrans_{\theta'}^t
  g_{{\hat\theta(s')}} (\bdbeta) \| \leq \sum\limits_{k=t}^\infty \|(\ntrans_{\theta} ^k H_s(\bdbeta) - h(s)) - (\ntrans_{\theta'}
    ^k H_{s'}(\bdbeta) - h(s')) \| \leq \\
 C  V^{3/2}(\bdbeta) \left[
T\|s-s'\| + \frac{\gamma^{T+t}}{1-\gamma}
\right] \ .
\end{multline*}
Setting $T=[\log\|s-s'\| / \log(\gamma)] $ for $\|s-s'\| \leq \delta <1 $
and  $T=t$
otherwise,  using also the fact that for any $0<a<1$ we have $\|s-s'\|\log\|s-s'\| =
o(\|s-s'\|^a) $, we get the result.\\

This proves condition (\textbf{A3'(ii)}) for any $a<1$.

We finally focus on the proof of (\textbf{A3'(iii)}). Once again we first prove a specific result for each function
$V_\te$ and obtain after a result for the function $V$.
\begin{lemma}
  \label{lem:2b}Let $\mathcal{K}$ be a compact subset of
  $\mathcal{S}$ and $p\geq 1$. There exists $C_{\Kapa,p}>0$ such that for any
  $s,s'\in\mathcal{K}$, for any $\bdbeta\in \R^N$,
$$|V^p_{\hat\theta(s)}(\bdbeta)-V^p_{\hat\theta(s')}(\bdbeta)|\leq
C_{\Kapa,p}\|s-s'\|V^p_{\hat\theta(s)}(\bdbeta)\,.$$
\end{lemma}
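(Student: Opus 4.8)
The plan is to reduce the claim to the case $p=1$ and then bootstrap to general $p$ by an elementary inequality for powers. First I would write the difference out explicitly. Since $V_\theta(\bdbeta)=1+\|\bdbeta\|_\theta^2$ with $\|\bdbeta\|_\theta^2=\sum_{i=1}^n\beta_i^t\Gamma_g^{-1}\beta_i$, setting $\theta=\hat\theta(s)=(\alpha,\Gamma)$ and $\theta'=\hat\theta(s')=(\alpha',\Gamma')$ the $\alpha$-dependence cancels and
\[
V_{\hat\theta(s)}(\bdbeta)-V_{\hat\theta(s')}(\bdbeta)=\sum_{i=1}^n\beta_i^t\bigl(\Gamma^{-1}-(\Gamma')^{-1}\bigr)\beta_i,
\]
so that, $\Gamma^{-1}-(\Gamma')^{-1}$ being symmetric, $|V_{\hat\theta(s)}(\bdbeta)-V_{\hat\theta(s')}(\bdbeta)|\le\|\Gamma^{-1}-(\Gamma')^{-1}\|\,\|\bdbeta\|^2$, where $\|\cdot\|$ denotes the operator norm.

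Next I would control the matrix factor. By \eqref{PhotoUpdateClust}, $\Gamma=\Gamma_g(s)=(s_3+a_g\Sigma_g)/(n+a_g)$ depends affinely, hence smoothly, on $s$, and since $\mathcal{K}$ is a compact subset of the open set $\mathcal{S}$, its image $\Gamma_g(\mathcal{K})$ stays within a compact subset of the positive-definite cone $\Syme$; therefore $s\mapsto\Gamma_g(s)^{-1}$ is smooth, hence Lipschitz, on $\mathcal{K}$, yielding a constant $C_\mathcal{K}$ with $\|\Gamma^{-1}-(\Gamma')^{-1}\|\le C_\mathcal{K}\|s-s'\|$. Combining this with $\|\bdbeta\|^2\le V(\bdbeta)$ and the comparison $c_1 V\le V_{\hat\theta(s)}$ established in the proof of Lemma \ref{lem:2} (applied with the compact set $\hat\theta(\mathcal{K})\subset\Theta$), I obtain
\[
|V_{\hat\theta(s)}(\bdbeta)-V_{\hat\theta(s')}(\bdbeta)|\le C_\mathcal{K}\,c_1^{-1}\,\|s-s'\|\,V_{\hat\theta(s)}(\bdbeta),
\]
which is exactly the assertion for $p=1$.

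Finally, for $p>1$ I would invoke the elementary bound $|x^p-y^p|\le p\,|x-y|\,\max(x,y)^{p-1}$, valid for $x,y\ge1$ and $p\ge1$. Taking $x=V_{\hat\theta(s)}(\bdbeta)$ and $y=V_{\hat\theta(s')}(\bdbeta)$, the factor $|x-y|$ is bounded by the $p=1$ estimate, while $\max(x,y)\le C\,V_{\hat\theta(s)}(\bdbeta)$ follows from the two-sided comparison $c_1V\le V_{\hat\theta(s)},V_{\hat\theta(s')}\le c_2V$ on $\hat\theta(\mathcal{K})$, which makes the two statistics uniformly comparable. Multiplying the two bounds gives $|V^p_{\hat\theta(s)}(\bdbeta)-V^p_{\hat\theta(s')}(\bdbeta)|\le C_{\mathcal{K},p}\|s-s'\|\,V^p_{\hat\theta(s)}(\bdbeta)$, as desired. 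I do not anticipate any serious difficulty here; the only point requiring care is that the Lipschitz constant for matrix inversion be uniform over $\mathcal{K}$, and this is precisely what is guaranteed by the fact that $\Gamma_g(\mathcal{K})$ is a compact subset of the open cone $\Syme$, on which inversion is smooth.
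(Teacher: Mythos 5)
Your proof is correct and follows essentially the same route as the paper: express the difference as the quadratic form $\sum_i\beta_i^t(\Gamma^{-1}-(\Gamma')^{-1})\beta_i$, use that $s\mapsto\Gamma_g(s)$ is affine and that $\Gamma_g(\mathcal{K})$ is a compact subset of the positive-definite cone to get a uniform Lipschitz bound on the inverse, and then pass from $V$ to $V_{\hat\theta(s)}$ via the two-sided comparison $c_1V\le V_\theta\le c_2V$. The only difference is that you make explicit the bootstrap from $p=1$ to general $p$ via $|x^p-y^p|\le p\,|x-y|\max(x,y)^{p-1}$, a step the paper's proof leaves implicit; this is a welcome clarification rather than a deviation.
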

\begin{proof}
  Indeed, there exists  $C>0$ such that for any $\hat\theta(s)=(\alpha,\Gamma_g)$ and
  $\hat\theta(s')=(\alpha',\Gamma'_g)$, $|\Gamma_g-\Gamma'_g|\leq
  C\|s-s'\|$. Therefore, there exists an $C$ such that
  $\forall \, (s,s')\in\mathcal{K}^2$, $|\Gamma_g^{-1}-(\Gamma'_{g})^{-1}|\leq
  C\|s-s'\|$ and
  $$|V_{\hat\theta(s)}(\bdbeta)-V_{\hat\theta(s')}(\bdbeta)|\leq \sum_{i=1}^n
  \beta_i^t (\Gamma_g^{-1}-(\Gamma'_g )^{-1}) \beta_i\leq C\|s-s'\|V(\bdbeta)\,.$$ The result
  follows from the existence of a constant $C$ such that
  $\frac{1}{c}V(\bdbeta) \leq
  V_{\hat\theta(s)}(\bdbeta)\leq C V(\bdbeta)$ for any
  $(\bdbeta,s)\in\mathbb{R}^N × \mathcal{K}$.
\end{proof}
\begin{lemma}
\label{lem:3}
  Let $\mathcal{K}$ be a compact subset of $\mathcal{S}$ and $p\geq 1$.
There exist  $\bar\varepsilon>0$
  and  $C>0$ such that for any sequence $\boldsymbol\varepsilon=(\varepsilon_k)_{k\geq 0}$
  such that $\varepsilon_k\leq \bar\varepsilon$ for $k$ large enough, any
  sequence $\boldsymbol\Delta=(\Delta_k)_{k\geq 0}$ and any $\bdbeta\in\mathbb{R}^N$,
$$\sup_{s\in \mathcal{K}}\sup_{k\geq
  0}\mathbb{E}_{\bdbeta,s}^{\boldsymbol\Delta}
[V^p(\bdbeta_k)\mathds{1}_{\sigma(\mathcal{K})\land \nu(\boldsymbol\varepsilon)\geq
  k}]\leq CV^p(\bdbeta)\,.$$
\end{lemma}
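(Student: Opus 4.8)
The plan is to run an induction in $k$ using the \emph{$\theta$-dependent} Lyapunov function $V_{\hat\theta(s_{k-1})}$ along the trajectory, rather than the fixed $V$, and to pay for changing the reference parameter at each step by exploiting the truncation constraint. Throughout, write $E_k\triangleq\{\sigma(\mathcal{K})\land\nu(\boldsymbol\varepsilon)\geq k\}$ and let $(\mathcal{F}_k)_k$ be the natural filtration of $((\bdbeta_k,s_k))_k$. The first thing I would record is that $E_k$ is $\mathcal{F}_{k-1}$-measurable, that the events are nested, $E_{k+1}\subset E_k$, and that on $E_{k+1}$ one has simultaneously $s_k,s_{k-1}\in\mathcal{K}$ (hence $\hat\theta(s_k),\hat\theta(s_{k-1})$ lie in the compact set $K\triangleq\hat\theta(\mathcal{K})\subset\Theta$) and $\|s_k-s_{k-1}\|<\varepsilon_k$.

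Next I would set $U_k\triangleq V^p_{\hat\theta(s_{k-1})}(\bdbeta_k)\mathds{1}_{E_k}$ for $k\geq1$ and establish a one-step recursion for $\mathbb{E}_{\bdbeta,s}^{\boldsymbol\Delta}[U_k]$. Conditioning on $\mathcal{F}_k$, the indicator $\mathds{1}_{E_{k+1}}$ factors out (it is $\mathcal{F}_k$-measurable), $\bdbeta_{k+1}\sim\ntrans_{\hat\theta(s_k)}(\bdbeta_k,\cdot)$, and on $E_{k+1}$ the parameter $\hat\theta(s_k)$ lies in $K$, so Lemma~\ref{lem:a1} gives $\mathbb{E}[U_{k+1}\mid\mathcal{F}_k]\leq\mathds{1}_{E_{k+1}}(\rho_K V^p_{\hat\theta(s_k)}(\bdbeta_k)+C_K)$. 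To convert the reference parameter $\hat\theta(s_k)$ back to $\hat\theta(s_{k-1})$ and thereby recover $U_k$, I would invoke the Lipschitz estimate of Lemma~\ref{lem:2b}: on $E_{k+1}$, taking $s=s_{k-1}$, $s'=s_k$ yields $V^p_{\hat\theta(s_k)}(\bdbeta_k)\leq(1+C_{\mathcal{K},p}\|s_k-s_{k-1}\|)V^p_{\hat\theta(s_{k-1})}(\bdbeta_k)\leq(1+C_{\mathcal{K},p}\varepsilon_k)V^p_{\hat\theta(s_{k-1})}(\bdbeta_k)$. Using $\mathds{1}_{E_{k+1}}\leq\mathds{1}_{E_k}$ this gives $\mathbb{E}[U_{k+1}\mid\mathcal{F}_k]\leq\rho_K(1+C_{\mathcal{K},p}\varepsilon_k)U_k+C_K$.

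This is the heart of the argument and exactly where the hypothesis on $\boldsymbol\varepsilon$ enters: choosing $\bar\varepsilon>0$ small enough that $\rho'\triangleq\rho_K(1+C_{\mathcal{K},p}\bar\varepsilon)<1$ (possible since $\rho_K<1$ in Lemma~\ref{lem:a1}), every index $k$ with $\varepsilon_k\leq\bar\varepsilon$ yields the genuine contraction $\mathbb{E}[U_{k+1}]\leq\rho'\mathbb{E}[U_k]+C_K$. Since by assumption $\varepsilon_k\leq\bar\varepsilon$ for all $k\geq k_1$ with $k_1$ finite, I would treat the finitely many initial steps $k\leq k_1$ crudely: on $E_{k+1}$ one still has $\|s_k-s_{k-1}\|\leq\mathrm{diam}(\mathcal{K})$, so the same computation combined with the single-step bound \eqref{6.2} produces $\mathbb{E}[U_{k+1}]\leq\lambda_0\mathbb{E}[U_k]+C_K$ for a fixed $\lambda_0$, while the base step $\mathbb{E}[U_1]\leq\rho_K c^p V^p(\bdbeta)+C_K$ follows from $s_0=s\in\mathcal{K}$ and $V^p_{\hat\theta(s)}\leq c^p V^p$. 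Hence $\mathbb{E}[U_k]\leq C_2V^p(\bdbeta)$ for $k\leq k_1$, and iterating the contraction from $k_1$ onward gives $\mathbb{E}[U_k]\leq{\rho'}^{\,k-k_1}\mathbb{E}[U_{k_1}]+C_K/(1-\rho')$ for $k\geq k_1$, so that $\sup_k\mathbb{E}[U_k]\leq C_1V^p(\bdbeta)$ with $C_1$ depending only on $\mathcal{K}$ and $p$.

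Finally I would return from the moving Lyapunov function to $V$: on $E_k$ one has $\hat\theta(s_{k-1})\in K$, hence $V^p(\bdbeta_k)\leq c^p V^p_{\hat\theta(s_{k-1})}(\bdbeta_k)$ by the uniform equivalence $c^{-1}V\leq V_{\hat\theta(s)}\leq cV$ on $K$ used in the proof of Lemma~\ref{lem:2}, whence $\mathbb{E}_{\bdbeta,s}^{\boldsymbol\Delta}[V^p(\bdbeta_k)\mathds{1}_{E_k}]\leq c^p\mathbb{E}[U_k]\leq c^pC_1V^p(\bdbeta)$, uniformly in $s\in\mathcal{K}$ and $k\geq1$ (the case $k=0$ being trivial). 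Setting $C\triangleq c^pC_1$ gives the claim. I expect the only genuinely delicate point to be the control of the parameter drift of $V_\theta$ between consecutive steps; its entire analytic content is already packaged in Lemmas~\ref{lem:a1} and~\ref{lem:2b}, and the remaining care is purely combinatorial, namely keeping the indicators nested and $\mathcal{F}_{k-1}$-measurable so that the drift may legitimately be applied under the conditional expectation.
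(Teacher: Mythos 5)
Your proof is correct and follows essentially the same route as the paper's: the same recursion on $\mathbb{E}[V^p_{\hat\theta(s_{k-1})}(\bdbeta_k)\mathds{1}_{\sigma(\mathcal{K})\land\nu(\boldsymbol\varepsilon)\geq k}]$ obtained by combining the Markov property, the $\theta$-dependent drift of Lemma~\ref{lem:a1}, and the Lipschitz swap of Lemma~\ref{lem:2b}, followed by the choice of $\bar\varepsilon$ making $\rho_K(1+C_{\mathcal{K},p}\bar\varepsilon)<1$ and the final norm equivalence between $V_\theta$ and $V$. Your explicit treatment of the finitely many initial indices where $\varepsilon_k$ may exceed $\bar\varepsilon$ (via $\|s_k-s_{k-1}\|\leq\mathrm{diam}(\mathcal{K})$ on the truncation event) is a small point the paper leaves implicit, but it does not change the argument.
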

\begin{proof}
  Let $K$ be a compact subset of $\Theta$ such that
  $\hat\theta(\mathcal{K})\subset K$. We note in the sequel,
  $\theta_k=\hat\theta(s_k)$. We have for $k\geq 2$, using the Markov property and
  Lemmas \ref{lem:a1} and \ref{lem:2b},
\begin{multline*}
  \mathbb{E}_{\bdbeta,s}^{\boldsymbol\Delta} [V^p_{\theta_{k-1}}(\bdbeta_k)\mathds{1}_{\sigma(\mathcal{K})\land
    \nu(\boldsymbol\varepsilon)\geq k}]\leq \mathbb{E}_{\bdbeta,s}^{\boldsymbol\Delta}
  [\ntrans_{\theta_{k-1}}V^p_{\theta_{k-1}}(\bdbeta_{k-1})\mathds{1}_{\sigma(\mathcal{K})\land
    \nu(\boldsymbol\varepsilon)\geq k}]\\
\leq \rho \left(  \mathbb{E}_{\bdbeta,s}^{\boldsymbol\Delta}
  [V^p_{\theta_{k-2}}(\bdbeta_{k-1})\mathds{1}_{\sigma(\mathcal{K})\land
    \nu(\boldsymbol\varepsilon)\geq k}]+ \mathbb{E}_{\bdbeta,s}^{\boldsymbol\Delta}
[(V^p_{\theta_{k-1}}(\bdbeta_{k-1})-V^p_{\theta_{k-2}}(\bdbeta_{k-1}))
\mathds{1}_{\sigma(\mathcal{K})\land
    \nu(\boldsymbol\varepsilon)\geq k}]\right)+ C\\
\leq \rho \left(  \mathbb{E}_{\bdbeta,s}^{\boldsymbol\Delta}
  [V^p_{\theta_{k-2}}(\bdbeta_{k-1})\mathds{1}_{\sigma(\mathcal{K})\land
    \nu(\boldsymbol\varepsilon)\geq k-1}]+ C'\epsilon_{k-1} \mathbb{E}_{\bdbeta,s}^{\boldsymbol\Delta}
[V^{p}_{\theta_{k-2}}(\bdbeta_{k-1})\mathds{1}_{\sigma(\mathcal{K})\land
    \nu(\boldsymbol\varepsilon)\geq k-1}]\right)+ C \,.
\end{multline*}
By induction, we show that
$$  \mathbb{E}_{\bdbeta,s}^{\boldsymbol\Delta} [V^p_{\theta_{k-1}}(\bdbeta_k)
\mathds{1}_{\sigma(\mathcal{K})\land
    \nu(\boldsymbol\varepsilon)\geq k}]\leq  \prod_{l=1}^{k-1}(\rho (1+C'\varepsilon_l)
    )V^p_{\hat\theta(s)}(\bdbeta)+\frac{C}{(1-\rho (1+C'\bar\varepsilon))} \ .$$
Choosing $\bar\varepsilon$ such that $\rho (1+C'\bar\varepsilon)<1$ and introducing
    again  $0\leq c_1\leq c_2$ such that $c_1V(\bdbeta)\leq
  V_\theta(\bdbeta)\leq c_2 V(\bdbeta)$ for any $(\bdbeta,\theta)\in\mathbb{R}^N×
  K$ end the proof.
\end{proof}

This yields (\textbf{A3'(iii)}).\\

This concludes the demonstration of Theorem \ref{th:condition}.

\section{Conclusion and discussion}
We have proposed a stochastic algorithm for constructing Bayesian non-rigid
deformable models in the same context as \cite{AAT} together with
a proof of convergence toward a critical point of the observed
likelihood. To the best of our best knowledge, this is the first theoretical
result on convergence
in the context of deformable template.
The algorithm is based on a stochastic approximation  of
the EM algorithm using  an MCMC approximation of the posterior distribution and
truncation on random boundaries. Although
our main contribution is theoretical, the
preliminary experiments presented here  on the US-postal database
show that the stochastic approach can be easily implemented and is
robust to noisy situations, yielding better results than the previous
deterministic schemes.

Many interesting questions remain open.
One may ask what is the convergence rate of such stochastic
algorithms.
A first result has been proved in \cite{DLM} for the standard SAEM
algorithm. Under mild conditions, the authors state a central limit
theorem for an average
sequence of the estimated parameters $(\theta_k)_k$.
Concerning the generalization when introducing MCMC, a first step has been
tackled in \cite{andrieumoulines}. Under some restrictive
assumptions the authors can prove a central limit theorem for an ergodic
adaptive Monte Carlo Markov chain. We truly think that it is possible
to obtain this kind of convergence rates for the SAEM-MCMC algorithm
proposed in this paper.

Another question refers to the extension
of the stochastic scheme to mixture of deformable models (defined as the
multicomponent model in \cite{AAT}) where the
parameters are the weights of the individual components and for each
component, the associated template and deformation law. This is of
particular importance for real data analysis where the restriction to a
unique deformable model could be too limiting. The design
of such mixtures corresponds to some kind of deformation invariant
clustering approach of the data which is a basic issue in any
unsupervised data analysis scheme.
This extension is, however, not as straightforward as it would appear
at first glance: due to
the high dimensional hidden deformation variables, a naive
extension of the Markovian dynamics to the component variables
will have extremely poor mixing properties leading
to an impractical algorithm. A less straightforward extension involving
multiple MCMC chains is under study.

Another interesting extension is to consider diffeomorphic
mappings and not only displacement fields for the hidden
deformation. This appears to be particularly interesting in the context
of Computational Anatomy where a one to one correspondence between
the template and the observation is usually needed and cannot be
guaranteed with linear spline interpolation schemes. This extension
could be done in principle using tangent models based on geodesic
shooting in the spirit of \cite{vmty04}.

\bibliographystyle{abbrv}
\bibliography{bibcomp2}

\end{document}